%% file: main.tex
\documentclass[twocolumn,11pt,floatfix,aps,pra,superscriptaddress]{revtex4-2}
\pdfoutput=1
\usepackage{geometry}
\usepackage[T1]{fontenc}
\usepackage{wrapfig}
\usepackage{subfig}
\usepackage{booktabs}

\input{preamble}

\input{figures/Tikz/quantum.tikzstyles}
\input{figures/Tikz/quantum.tikzdefs}
\input{figures/Tikz/tikzit.sty}
\usepackage{float}
\usepackage{hyperref}
\usepackage{enumitem}
\usepackage[ruled, vlined, linesnumbered]{algorithm2e}
\setlist[itemize]{itemsep=1pt, topsep=0pt}
\usepackage[framemethod=tikz]{mdframed}
\theoremstyle{definition}
\newmdtheoremenv[
  hidealllines=true,
  leftline=true,
  innerleftmargin=5pt,
  innerrightmargin=5pt,
  innertopmargin=0pt,
]{defi}{Definition}

\theoremstyle{definition}
\newmdtheoremenv[
  backgroundcolor=black!10,
  innerleftmargin=5pt,
  innerrightmargin=5pt,
  innertopmargin=0pt,
]{lem}{Lemma}

\theoremstyle{definition}
\newmdtheoremenv[
  backgroundcolor=blue!10,
  innerleftmargin=10pt,
  innerrightmargin=10pt,
  innertopmargin=0pt,
]{prop}{Proposition}

\theoremstyle{definition}
\newmdtheoremenv[
backgroundcolor=red!10,
  innerleftmargin=5pt,
  innerrightmargin=5pt,
  innertopmargin=0pt,
]{thm}{Theorem}

\begin{document}

\title{Simplified circuit-level decoding using Knill error correction}

\author{Ewan Murphy}
\email{ewan.murphy@inria.fr}
\affiliation{Inria Paris, 48 rue Barrault, 75013 Paris, France}
\affiliation{Quandela, 7 Rue L\'eonard de Vinci, 91300 Massy, France}
\affiliation{Institute for Quantum Computing, University of Waterloo, 200 University Ave W, Waterloo, ON N2L 3G1, Canada}
\affiliation{Perimeter Institute for Theoretical Physics, 31 Caroline St N, Waterloo, ON N2L 2Y5, Canada}
\author{Subhayan Sahu}
\affiliation{Perimeter Institute for Theoretical Physics, 31 Caroline St N, Waterloo, ON N2L 2Y5, Canada}
\author{Michael Vasmer}
\affiliation{Inria Paris, 48 rue Barrault, 75013 Paris, France}
\affiliation{Institute for Quantum Computing, University of Waterloo, 200 University Ave W, Waterloo, ON N2L 3G1, Canada}
\affiliation{Perimeter Institute for Theoretical Physics, 31 Caroline St N, Waterloo, ON N2L 2Y5, Canada}

\begin{abstract}
  Quantum error correction will likely be essential for building a large-scale quantum computer, but it comes with significant requirements at the level of classical control software.
  In particular, a quantum error-correcting code must be supplemented with a fast and accurate classical decoding algorithm.
  Standard techniques for measuring the parity-check operators of a quantum error-correcting code involve repeated measurements, which both increases the amount of data that needs to be processed by the decoder, and changes the nature of the decoding problem.
  Knill error correction is a technique that replaces repeated syndrome measurements with a single round of measurements, but requires an auxiliary logical Bell state.
  Here, we provide a theoretical and numerical investigation into Knill error correction from the perspective of decoding.
  We give a self-contained description of the protocol, prove its fault tolerance under locally decaying (circuit-level) noise, and numerically benchmark its performance for quantum low-density parity-check codes.
  We show analytically and numerically that the time-constrained decoding problem for Knill error correction can be solved using the same decoder used for the simpler code-capacity noise model, illustrating that Knill error correction may alleviate the stringent requirements on classical control required for building a large-scale quantum computer.
\end{abstract}

\maketitle

\section{Introduction}

Quantum error correction (QEC) is widely-believed to be essential for building a large-scale quantum computer capable of implementing quantum algorithms with many layers of gates~\cite{campbell2017roads}.
Numerous techniques from classical error correction have been ported over to the quantum setting, but a key difference is that measuring the parity-checks of a quantum code is itself a noisy process~\cite{gottesman2024surviving}.
This motivates the study of error models incorporating measurement errors such circuit-level noise, which generalise error models that only consider errors on the data qubits, such as code-capacity noise.
A variety of techniques have been proposed to deal with noisy measurements, including Shor~\cite{shor1996faulttolerant}, Steane~\cite{steane1997active}, and Knill~\cite{knill2005quantum} error correction, as well as schemes using flag qubits~\cite{chao2018quantum,chamberland2018flag,chao2020flag}, repeated parity-check measurements~\cite{dennis2002topological,fowler2009highthreshold,kovalev2013fault,gottesman2014faulttolerant}, and combinations of the above~\cite{lai2017faulttolerant,huang2021shor,huang2021constructions}.

A quantum error-correcting code must be supplemented by a classical algorithm called a decoder, whose input is the parity-check measurement outcomes (the syndrome) and whose output is a recovery operator.
Error correction is successful if the combined effect of the noise and recovery operator acts trivially on the encoded information.
In the context of fault-tolerant quantum computing, it is crucial that the decoder is fast enough to keep up with the rate of syndrome extraction.
Otherwise a backlog of syndrome data will build up, leading to an exponential slowdown of the computation~\cite{terhal2015quantum}.

Repeating parity-check measurements changes the nature of the decoding problem, and therefore a decoder that worked well for code-capacity noise might not work at all for more realistic error models such as circuit-level noise.
This can be particularly problematic when a code has been optimised to have good performance for code-capacity noise, as this performance might not carry over to circuit-level noise models~\cite{pacenti2025construction}.
However, certain syndrome extraction techniques such as Knill and Steane error correction have the useful property that the code-capacity decoder can be used for phenomenological and circuit-level noise.
This property stems from the fact that these techniques use auxiliary logical states, which are measured destructively.

In this work, we provide a theoretical and numerical investigation into Knill error correction (Knill EC), which we argue is particularly well-suited to the constraints of fast decoding.
For Knill EC there are two relevant decoding problems, which we call \emph{online} and \emph{offline} decoding.
Online decoding refers to decoding the measured error syndrome, and must be fast in order to avoid the backlog problem.
Offline decoding refers to preparing the auxiliary logical states, and is less constrained as state preparation can be parallelised.

We prove that the code-capacity decoder is sufficient for fault-tolerant error correction subject to locally decaying noise (assuming a supply of auxiliary logical Bell states), complimenting the previous analyses of Knill~\cite{knill2005quantum,knill2004faulttolerant,knill2004faulttoleranta} and Gottesman~\cite{gottesman2024surviving}.
We perform numerical simulations of Knill EC, including state preparation, for surface codes and high-rate codes.
In the second case, we use the lifted product codes from Ref.~\cite{raveendran2022finite}, which were optimised to have good performance for belief propagation decoding against code-capacity noise (without post processing).
Our results show that an identical decoder can be used for online decoding in Knill EC.
Given that belief propagation is well-suited to fast implementation on specialised hardware, this shows that the online decoding can be extremely fast, making Knill EC a compelling option for architectures optimised for fast error correction and logical gates.

To perform our simulations, we develop a modular framework for benchmarking quantum error correction circuits, that may be of independent interest.
Our framework enables the end-to-end simulation of composed fault-tolerant protocols, where each one performs its own decoding.
The tool automatically handles the interaction between the protocols and the propagation of the Pauli frame.

The remainder of this article is structured as follows.
In Sec.~\ref{sec:preliminaries} we provide the necessary background on quantum error correction and Knill EC.
In Sec.~\ref{sec:fault_tolerance_knill}, we prove that Knill EC is fault tolerant under locally decaying noise, and that the online decoding can be done using the code-capacity decoder.
Next, in Sec.~\ref{sec:numerics}, we present our numerical results for Knill EC for surface codes and lifted product codes.
Finally, in Sec.~\ref{sec:discussion} we discuss the implications of our results and directions for future research.

\section{Preliminaries}
\label{sec:preliminaries}

A stabilizer code is the $+1$ eigenspace of an abelian subgroup $\mathcal{S}$ of the $n$-qubit Pauli group that does not contain $-I$~\cite{gottesman1997}.
We summarise a code using the notation $[\![n,k,d]\!]$, where $k = n - \rk \mathcal S$ denotes the number of encoded logical qubits, and $d$ denotes the minimum weight of a non-trivial logical Pauli operator.
A Calderbank-Shor-Steane (CSS) code is a stabilizer code whose stabilizer generators can be chosen to be either $X$-type or $Z$-type, where an $X$-type ($Z$-type) generator is a tensor product of $X$ and $I$ ($Z$ and $I$) operators~\cite{calderbank1996good,steane1996error}.
Of particular interest are the class of quantum low-density parity-check (LDPC) codes~\cite{breuckmann2021quantum}, which are defined by the property that their stabilizer generators have low weight, and each qubit participates in a small number of generators.
Examples of quantum LDPC codes include the surface code~\cite{bravyi1998quantum,dennis2002topological}, as well as lifted product codes~\cite{panteleev2021degenerate}.

In a stabilizer code, decoding proceeds by measuring the eigenvalues of the stabilizer generators using a syndrome extraction circuit, which yields the error syndrome.
A (classical) decoding algorithm takes the error syndrome as input and outputs a recovery operator; error correction is successful if the product of the error and recovery operator is a stabilizer.
The minimum-weight perfect matching (MWPM) algorithm is a well-known efficient decoder for the surface code~\cite{dennis2002topological}, while belief propagation (BP) is a commonly used decoder for general quantum LDPC codes (often supplemented with post-processing such as ordered statistics decoding~\cite{panteleev2021degenerate,roffe2020}).

We consider two error models in this work: code-capacity noise and circuit-level noise.
In code-capacity noise, errors occur only on the data qubits, i.e., the syndrome extraction is noiseless.
In circuit-level noise, errors occur on every component of the error correction circuit: data qubits, auxiliary qubits, gates, state preparations, and measurements.
For details on the specific circuit-level error model we use, see Appendix~\ref{sec:error_model}.

\begin{figure*}[htbp]
  \centering
  \resizebox{\textwidth}{!}{
    \input{Tikz/knill_base_diagram.tikz}
  }
  \caption{\textbf{Knill error correction.} For a code block $D$ encoded in a general \([\![n,k,d]\!]\) stabilizer code, two auxiliary blocks $A$ and $B$ are initialized in $k$ logical Bell pairs. The shaded region shows the circuit for transversal Bell measurement on $D \otimes A$. The measurement outcomes are passed to a decoder $\mathcal{D}$, which deduces a classical recovery $\mathcal{R}$. The corrected logical operator measurements are then computed and used to determine the logical Pauli correction applied to $B$, completing a logical teleportation from $D$ to $B$.}\label{fig:knill_error_correction}
\end{figure*}
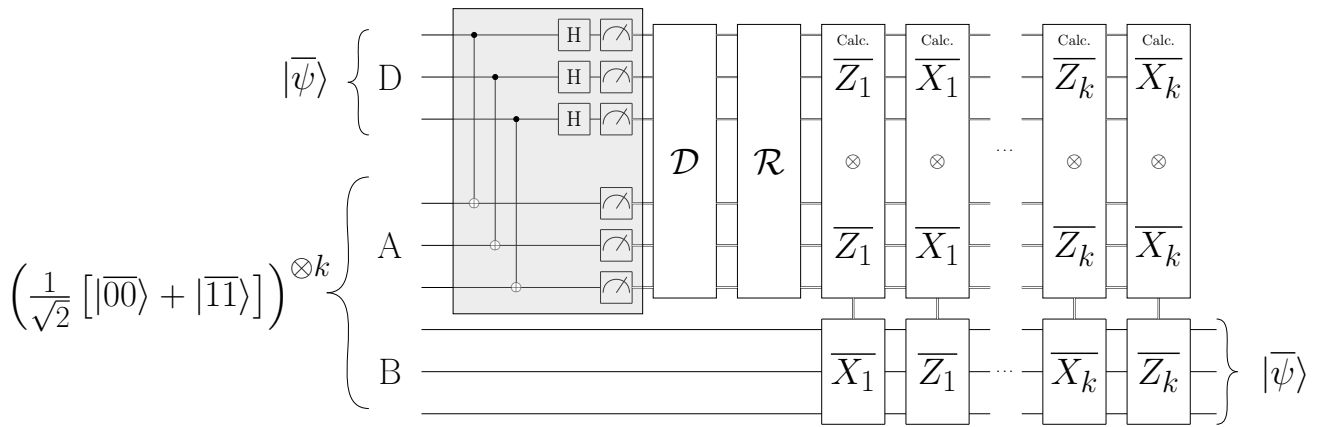

When decoding circuit-level noise for quantum LDPC codes, it is customary to repeat the syndrome extraction circuit $O(d)$ times.
Instead of using the stabilizer generators as parity-checks, one instead uses the difference between the syndromes measured in consecutive rounds, which are often called detectors.
A detector error model is then a Tanner graph~\cite{mackay2003} with detectors as check nodes and error locations as variable nodes, where there an edge between a variable node and a check node if the error at that location would flip the detector~\cite{gidney2021stim}.

\subsection{Knill error correction}
\label{sec:knill_error_correction}

Knill error correction~\cite{knill2005quantum} (Knill EC) is a fault tolerant error correction protocol that uses logical teleportation and a bifurcated decoding strategy. We first describe the protocol, followed by a proof of its fault tolerance.

Suppose that the logical information of $k$ encoded qubits at some stage of a quantum computation is encoded in $n$ physical qubits in a $[\![n,k,d]\!]$ stabilizer code block $D \equiv \{d_i\}_{i = 1}^{n}$, where $d_i$ refer to the $i$th physical qubit. At this stage, an error correction gadget is applied to this state, which usually involves repeated (faulty) syndrome measurement. In Knill EC, we instead consider two auxiliary $[\![n,k,d]\!]$ code blocks: $A \equiv \{a_i\}_{i =1}^{n}$, and $B \equiv \{b_i\}_{i =1}^{n}$, initialized in $k$ logical Bell pair states~\footnote{In principle, the two auxiliary blocks can be in different stabilizer codes, however here we specialize to the same code.}. Next, a logical Bell measurement between the code blocks $D$ and $A$ is performed, using transversal CNOT gates between $D$ and $A$, i.e. $n$ CNOT gates controlled on $d_i$ qubits with targets on $a_i$ qubits, followed by transversal $X$ and $Z$ measurements on $d_i$ and $a_i$ qubits, respectively. 

The transversal CNOT and \(Z\)- and \(X\)-basis measurements is equivalent to measuring \(X_{d_i}\otimes X_{a_i}\) and
\(Z_{d_i}\otimes Z_{a_i}\) for all \(i \in \{1,\dots, n\}\), where
\(X_{a_i}\) is an \(X\) Pauli operator on the \(i\)th physical qubit in block $A$ and similarly for the other operators. These
\(2n\) operators form a basis for all Pauli operators of the form \(P\otimes P\) supported on $D\otimes A$. We can therefore find the eigenvalue for any of these Pauli operators from the \(2n\) bits of measurement data of the transversal Bell measurement. For a stabilizer, the eigenvalue of \(P\otimes P\) will be the XOR of the syndromes in the code blocks $D$ and $A$ for that stabilizer. 
We relay this syndrome information to a decoder $\mathcal D$, which outputs a recovery operator $\mathcal R$.
To find the correct measurement outcomes, we flip the value of the measurement outcomes of \(X_{d_i}\otimes X_{a_i}\) and \(Z_{d_i}\otimes Z_{a_i}\) if $\mathcal R$ anticommutes with $X_{d_i}$ and $Z_{d_i}$, respectively.
We can then deduce the measurement result of the logical operators $\overline{X_{\alpha}}\otimes \overline{X_{\alpha}}$ and $\overline{Z_{\alpha}}\otimes \overline{Z_{\alpha}}$ on $D\otimes A$, where $\alpha = 1, \cdots, k$ denote the encoded logical qubits. Based on the measurement result, we perform logical corrections on block $B$, to teleport the initially encoded information from $D$ to $B$. The protocol is depicted in Fig.~\ref{fig:knill_error_correction}.

Importantly, the decoding necessary for this step of the Knill EC has the same complexity as code-capacity decoding, and thus can be achieved at faster rate than spacetime decoding necessary for repeated syndrome measurements. This constitutes the \textit{online} decoder. In the next section, we will prove that Knill EC is fault tolerant for low enough error in the auxiliary state preparation. However, the preparation of these logical states on $A\otimes B$ with low enough error requires a separate \textit{offline} error correction protocol, which may involve higher decoding complexity than the online decoder.

\section{Fault tolerance with Knill error correction}
\label{sec:fault_tolerance_knill}

We will now analyze logical error propagation under the transversal Bell measurement, which is an essential ingredient of Knill EC.  We will assume each component of the circuit is subjected to locally decaying noise, and study the propagation of logical error through the protocol. Locally decaying (LD) noise means that large, spread-out error events are exponentially unlikely: the chance that errors simultaneously hit any specified set of $t$ qubits (or $t$ fault locations) drops like $\tau^{t}$, with the noise rate being $\tau$.
Equivalently, most of the probability mass is concentrated on small supports, while long correlated patterns are strongly suppressed.
This model still allows correlations, but it rules out ``macroscopic'' correlated failures except with very small probability. The formal definitions are provided in the appendix Sec.~\ref{sec:definitions}. We first present the following lemma,

\begin{lemma}\label{lem:noisy_bell_measurement}
\textbf{Error propagation with transversal Bell measurement circuit. }
  Assume that the blocks $D$ and $A$ are subjected to LD noise with rate $p_1$ and $p_2$ respectively. The transversal CNOT gate $F \subseteq \{1,\dots,n\}$ has LD error with rate $p_3$, and the single-qubit measurements is LD with rate $p_4$. Then, the effective LD noise rate on the two blocks $D$ and $A$ propagated through the faulty Bell measurement circuit is,
  \begin{equation}
    p_{\mathrm{eff}} = \sqrt{p_1 + p_2 + p_3} + p_4.
  \end{equation}
\end{lemma}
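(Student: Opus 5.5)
We will work with the standard definition of locally decaying noise from the appendix (Sec.~\ref{sec:definitions}): a random Pauli error $E$ is LD with rate $\tau$ if for every set $T$ of locations, $\Pr[T \subseteq \mathrm{supp}(E)] \le \tau^{|T|}$. We will also use two closure properties. First, the product of LD errors with rates $\tau_i$ (arbitrarily correlated among themselves) is LD with rate $\sum_i \sqrt{\tau_i}$; for a sum of a few such terms we will aim for the sharper form $\sqrt{\sum_i\tau_i}$-type bounds. Second, an error with at most a constant number $c$ of support points per original location has its rate raised to a $1/c$ power. The plan is to trace all faults through the circuit, which consists of the transversal CNOT followed by single-qubit measurements, to a single equivalent Pauli error on $D\otimes A$ immediately before an ideal Bell measurement. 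We then bound the rate of this equivalent error.

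\textbf{Step 1 (pushing input errors through the CNOT).} Input errors $E_D$ on $D$ (rate $p_1$) and $E_A$ on $A$ (rate $p_2$), together with CNOT faults $E_F$ (rate $p_3$, each fault supported on a pair $(d_i,a_i)$), are all commuted to the output of the transversal CNOT. A transversal CNOT maps an error on qubit $i$ of either block to an error supported on the pair $\{d_i,a_i\}$. It therefore acts qubit-pair-wise and does not spread errors between different indices $i$. We will coarse-grain to "pair locations" $i\in\{1,\dots,n\}$: pair $i$ is faulty if any of $E_D$, $E_A$, $E_F$ acts nontrivially there. The key step is to bound $\Pr[\text{all pairs in }T\text{ faulty}]$. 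We assign each $i\in T$ to one of the three sources and take a union bound over the assignments. This gives
\[
\sum_{T_1\sqcup T_2\sqcup T_3=T} p_1^{|T_1|}p_2^{|T_2|}p_3^{|T_3|} = (p_1+p_2+p_3)^{|T|}.
\]
Here we need a joint LD assumption across the three sources; if we only know the marginals, we use Cauchy--Schwarz or Hölder on the joint event instead. The pair-level error is thus LD with rate $p_1+p_2+p_3$. Returning to qubit-level locations, a set $T'$ of qubits touches at least $|T'|/2$ pairs, which yields rate $\sqrt{p_1+p_2+p_3}$. This square root is exactly the one appearing in the statement.

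\textbf{Step 2 (measurement faults).} A faulty single-qubit $X$ or $Z$ measurement on qubit $q$ is equivalent to an ideal measurement preceded by a $Z$ or $X$ flip on $q$. These flips form an LD error with rate $p_4$ on the same qubit locations. Combining this with the Step 1 error through the same union-bound argument, we assign each location either to the propagated error or to the measurement error:
\[
\Pr[T\subseteq\mathrm{supp}]\le\sum_{S\subseteq T}\bigl(\sqrt{p_1+p_2+p_3}\bigr)^{|S|}p_4^{|T\setminus S|}=\bigl(\sqrt{p_1+p_2+p_3}+p_4\bigr)^{|T|}.
\]
This gives $p_{\mathrm{eff}}$ as stated.

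\textbf{Main obstacle.} The delicate point is the union bound in Step 1. It requires that the events "$T_1\subseteq\mathrm{supp}E_D$, $T_2\subseteq\mathrm{supp}E_A$, $T_3\subseteq\mathrm{supp}E_F$" have joint probability at most $p_1^{|T_1|}p_2^{|T_2|}p_3^{|T_3|}$. This holds if the noise on different components is independent, or if it is modelled as a single LD fault process over all circuit locations, which is how we will interpret the hypothesis, following the appendix definitions. If only marginal bounds are available, we would fall back on the weaker estimate $\min_j p_j^{|T_j|}$ via Hölder's inequality. That would change the constants but not the square-root structure. We will also check that taking the pair$\to$qubit square root once, rather than separately for each source, is legitimate; this is why the sum appears inside the square root.
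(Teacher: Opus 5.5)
Your proposal is correct and follows essentially the same route as the paper: coarse-grain to the $n$ CNOT pairs, union-bound over assignments of each touched pair to $D$-faults, $A$-faults or CNOT faults to get $(p_1+p_2+p_3)^{|T|}$, use $|T(E)|\ge\lceil|E|/2\rceil$ to take the square root, and then union-bound once more against the measurement flips to get $(\sqrt{p_1+p_2+p_3}+p_4)^{|E|}$. The paper also silently relies on the product factorisation you flag (independence of the sources, including of the measurement noise in the last step), so your interpretation matches; one small correction to your fallback remark is that with only marginal bounds, $\min_j p_j^{|T_j|}\le\prod_j p_j^{|T_j|/3}$ gives a rate of $\sqrt{p_1^{1/3}+p_2^{1/3}+p_3^{1/3}}$, which changes the exponent, not just the constants.
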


We postpone the proof of this lemma to the appendix Sec.~\ref{sec:proof_lemma}. Using this lemma, we find that the final state (before destructive measurement) is effectively erroneous with LD rate $p_{\mathrm{eff}}$ (including measurement errors modelled by bit or phase flips prior to measuring). Hereafter, we can assume perfect syndrome extraction, since the qubits are destructively measured, followed by a faultless classical computation to extract the syndromes. 

We now posit that there exist $[\![n, k, d]\!]$ code families and (code-capacity) decoders based on ideal syndrome measurements, such that for a locally decaying noise channel on the input with error rate $\tau$ below some threshold $\tau_{\text{th}}>0$, the logical failure rate satisfies: 
\begin{align}
    \Pr(\text{logical failure}) \leq f(\tau) \equiv g(\tau, \tau_{\text{th}})\left(\frac{c\tau}{\tau_{\text{th}}}\right)^{d},
\end{align}
for some constant $c< 1$ and a $n$-independent function $g(\tau, \tau_{\text{th}})$. 

We can now prove the fault tolerance of Knill EC under locally decaying noise.
To simplify the notation, we assume that all the associated LD noise rates in the faulty Knill gadget are $p$.
Following Gottesman~\cite{gottesman2024surviving}, we define $c(P,Q)$ be a function that takes as input two Pauli operators $P$ and $Q$, and outputs $1$ if they anti-commute and $0$ if they commute.
\begin{lemma}\label{lem:commutation}
  For any Pauli operators $P,Q,R$, we have $c(P, QR) = c(P, Q) + c(P, R)$.
\end{lemma}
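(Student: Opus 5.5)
The plan is to argue directly from the definition of the commutation function. Addition in the statement is understood modulo $2$, since $c$ takes values in $\{0,1\}$. Two Pauli operators either commute or anticommute, so we may write $PQ = (-1)^{c(P,Q)} QP$ for every pair of Pauli operators $P,Q$. This single identity is the only tool we need.

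The key step is to compute $P(QR)$ by moving $P$ past $Q$ and then past $R$:
\begin{equation}
  P(QR) = (-1)^{c(P,Q)} Q P R = (-1)^{c(P,Q)} (-1)^{c(P,R)} QRP .
\end{equation}
On the other hand, $QR$ is itself a Pauli operator, possibly with a phase in $\{\pm 1, \pm i\}$, and phases commute with everything. Hence, by definition, $P(QR) = (-1)^{c(P,QR)} (QR)P$. Comparing the two expressions and cancelling the invertible operator $QRP$ gives
\begin{equation}
  (-1)^{c(P,QR)} = (-1)^{c(P,Q)+c(P,R)},
\end{equation}
so $c(P,QR) = c(P,Q)+c(P,R) \pmod 2$.

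There is no real obstacle here. The only point requiring care is that $QR$ may carry a scalar phase. We handle this by noting that $c$ is insensitive to global phases, so $c$ is well defined on $QR$. Equivalently, one can use the symplectic representation, where $c(P,Q)$ is the symplectic inner product of the binary vectors of $P$ and $Q$; bilinearity of that form gives the result immediately, since the vector of $QR$ is the sum of the vectors of $Q$ and $R$.
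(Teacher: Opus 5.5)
Your proof is correct and takes the same route as the paper: move $P$ past $Q$ and then past $R$ in $PQR$, compare the result with $(-1)^{c(P,QR)}QRP$, and match exponents. Your explicit remarks that the sum is taken modulo $2$ and that the phase of $QR$ does not affect $c$ fill in points the paper leaves implicit.
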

\begin{proof}
  We have 
  \begin{align}
    &PQR = (-1)^{c(P, QR)} QRP, \\ 
    & PQR = (-1)^{c(P, Q)} QPR = (-1)^{c(P, Q) + c(P, R)} QRP.
  \end{align}
  Comparing the exponents, we have the result.
\end{proof}

\begin{figure*}[htbp]
  \centering
  \resizebox{\textwidth}{!}{
    \input{Tikz/fault_tolerant_modules_2.tikz}
  }
  \caption{Clifford fault tolerant protocols can be defined by: a physical circuit, a (classical) function that processes the measurement data sampled from this circuit, and a set of Pauli corrections that are applied based on the function. }\label{fig:fault_tolerant_modules}
\end{figure*}
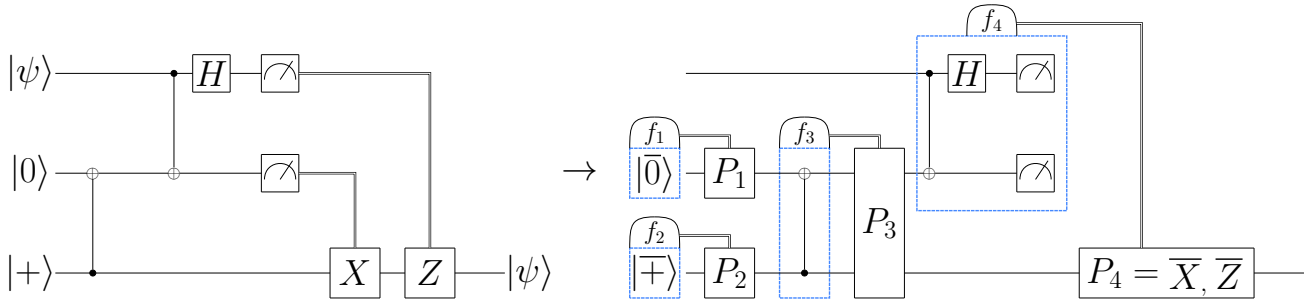

Suppose that the error before the logical Bell measurement is $E_D \otimes E_A$.
Let $F_D$ denote the $Z$ and $Y$ part of the CNOT error on block $D$ along with the measurement error, and let $F_A$ denote the $X$ and $Y$ part of the CNOT error on block $A$ along with the measurement error.
Then, for any Pauli operator $P\otimes P$, the measured value is 
\begin{equation}
\begin{split}
  \tilde m^{P\otimes P} &= m^{P\otimes P} + c(P, E_D) + c(P, E_A) \\ & \quad + c(P, F_D) + c(P, F_A), \\ &= m^{P\otimes P} + c(P, E_D E_A F_D F_A),
\end{split}
\end{equation}
where $m^{P\otimes P}$ is the measurement outcome in the absence of error and we use Lemma~\ref{lem:commutation}.
In particular, for a stabilizer generator $S \otimes S$, the measurement outcome is $\tilde m^{S\otimes S} = c(S, E_D E_A F_D F_A)$, and so we recover the syndrome of the error $E_D E_A F_D F_A$.
If the $p_{\mathrm{eff}}=\sqrt{3p}+p \leq \tau_{\mathrm{th}}$, then with probability $1-f(p_{\mathrm{eff}})$ the recovery operator returned by the code-capacity decoder $\mathcal D$ satisfies $\mathcal R = E_D E_A F_D F_A S$ for some stabilizer $S$.
Then, for any logical Pauli operator $\overline L \otimes \overline L$, we can compute the corrected measurement outcome as
\begin{equation}
\begin{split}
  &\tilde m^{\overline L \otimes \overline L} + c(\overline L, \mathcal R), \\
  &= m^{\overline L \otimes \overline L} + c(\overline L, E_D E_A F_D F_A) + c(\overline L, \mathcal R), \\
  &= m^{\overline L \otimes \overline L} + c(\overline L, S) = m^{\overline L \otimes \overline L}.
\end{split}
\end{equation}
We can therefore recover the correct measurement outcomes for all the logical operators, and thus reliably compute the logical correction to apply to block $B$.

After one round of Knill EC, the new data block has physical error rate equal to the error in the auxiliary blocks in the previous step. There is also a rate of logical failure carried over from the previous rounds. 
Let $p'_{[n]}$ be the probability that the $(n+1)$th round does not eliminate the accumulated logical error from the previous $n$ rounds. The logical error rate after $n+1$ rounds is, thus,
\begin{align}
  q_{[n+1]} &= q_{[n]} \times p'_{[n]} + (1- q_{[n]}) \times f(p_{\mathrm{eff}}) \notag \\
  & \leq q_{[n]} + f(p_{\mathrm{eff}}) 
\end{align}
Therefore, the total logical error probability after $M$ rounds of faulty Knill EC is $q_{[M]} \approx M f(p_{\mathrm{eff}})$. Choosing a large enough distance for the code can suppress $f(\cdot)$, and thus $q_{[M]}$ below a target error rate for an $M\sim \text{poly}(n)$-depth logical computation.

\section{Numerical results}
\label{sec:numerics}

\subsection{Simulating the composition of fault tolerant protocols}\label{sec:modular_simulation_algorithm_brief}
Fault tolerant quantum computing requires the replacement of logical subcircuits with fault tolerant protocols.
Numerical simulations have provided important information on the performance of these protocols.
However, understanding how these protocols perform when composed into larger logical circuits is still an under-explored area.

Motivated by simulating logical Bell teleportation, in the form of Knill EC, we developed a tool that allows the end-to-end simulation of composed fault-tolerant protocols, building on \textsc{Stim}~\cite{gidney2021stim}.
The tool is designed to be modular, with the fault-tolerant protocols defined independently of each other and their interaction handled automatically.
The fault-tolerant protocols we consider are Clifford circuits with an associated classical function that determines what Pauli corrections to apply based on the circuits measurements. An example is shown in Figure~\ref{fig:fault_tolerant_modules}.

Given the modular nature of our tool, each protocol performs its own decoding, which is in contrast with other work in the literature that decodes the whole logical Clifford circuit as one module~\cite{cain2024correlated,wan2024iterative,zhou2025lowoverhead,cain2025fast}.
We note that our approach is also different from overlapping- or sliding-window decoding~\cite{dennis2002topological,skoric2023parallel,berent2024analog,scruby2024highthreshold}, where the decoding problem for the Clifford circuit separated into smaller decoding problems that have overlapping detectors~\cite{bombin2023modular,malcolm2025computing,sahay2025error,zhang2025scalable,serra-peralta2026decoding}.

Although, the original motivation to build this tools was to simulate Knill EC we believe it will be of interest to the wider QEC community as it supports the simulation of many types of logical circuits.
A detailed explanation of how this tool works is given in Appendix~\ref{sec:modular_simulation_algorithm}.

\subsection{Simulations}

We performed simulations to verify that a code-capacity decoder can be used for the online decoder part of Knill EC.
The Knill EC simulations were performed using the modular simulation tool mentioned in the previous section.
This allowed us to have separate the online and offline decoding for the Bell measurement and state preparation, respectively.

To performed these simulations we had to pick a method for state preparation, for this work we chose to use repeated syndrome measurement.
We would like to prepare the state \(\left(\frac{1}{\sqrt{2}}\left[\ket{\overline{00}}+\ket{\overline{11}}\right]\right)^{\otimes k}\) across the \(2k\) logical qubits of two \([\![n,k,d]\!]\) stabilizer code blocks.
We first prepare all the physical qubits of the first code block in the \(\ket{0}\) state, and all the physical qubits of the second code block in the \(\ket{+}\) state.
Then perform \(d\) rounds of stabilizer measurement and decoding, with the \(X\) and \(Z\) syndromes for each state being decoded separately using the offline decoder.
This will prepare the logical states \(\ket{\overline{0}}^{\otimes k}\) and \(\ket{\overline{+}}^{\otimes k}\).
Performing a transversal CNOT controlled on the second code block will create the logical Bell state.
A diagram of this circuit is shown in Figure~\ref{fig:bell_state_preparation}.
This method of state preparation is possible for CSS codes.

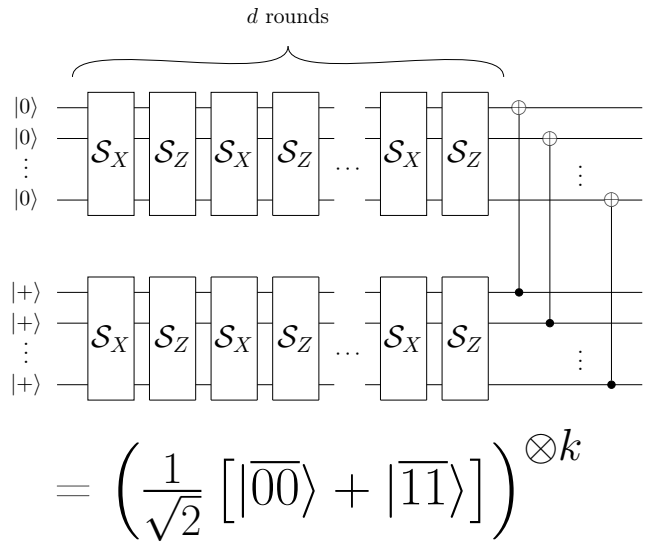
\begin{figure}[htbp]
  \centering
  \resizebox{0.5\textwidth}{!}{
    \input{Tikz/bell_state_preparation.tikz}
  }
  \caption{Circuit for the fault tolerant preparation of logical Bell states used in the simulations. The syndrome data from the \(d\) rounds of \(X\) and \(Z\) stabilizer measurement are decoded separately using the offline decoder.}\label{fig:bell_state_preparation}
\end{figure}

\begin{figure*}[htbp]
    \centering
    
    \subfloat[LP codes, code-capacity noise]{
        \centering
        \includegraphics[width=.45\textwidth]{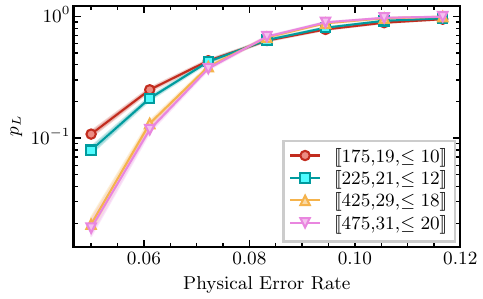}
    }
    \hfill
    \subfloat[Surface codes, code-capacity noise]{
        \centering
        \includegraphics[width=.45\textwidth]{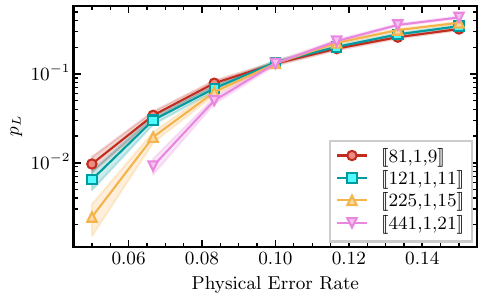}
    }
    \vspace{0.0cm}
    \subfloat[LP codes, circuit noise with Knill EC (8 rounds)]{
        \centering
        \includegraphics[width=.45\textwidth]{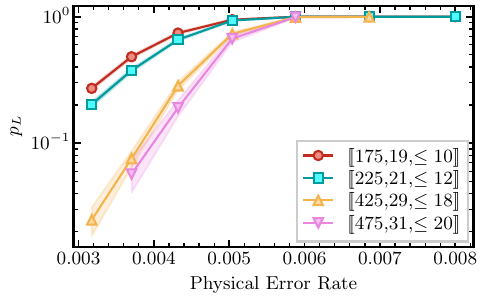}
    }
    \hfill
    \subfloat[Surface codes, circuit noise with Knill EC (8 rounds)]{
        \centering
        \includegraphics[width=.45\textwidth]{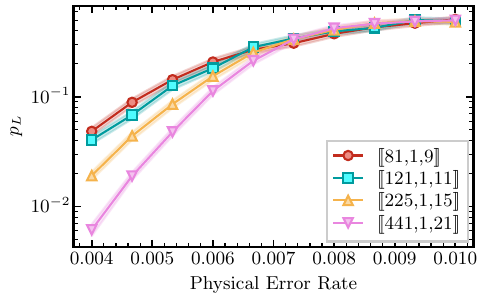}
    }
    \vspace{0.0cm}
    \subfloat[LP codes, circuit noise with repeated syndrome measurement ($32d$ rounds)]{
        \centering
        \includegraphics[width=.45\textwidth]{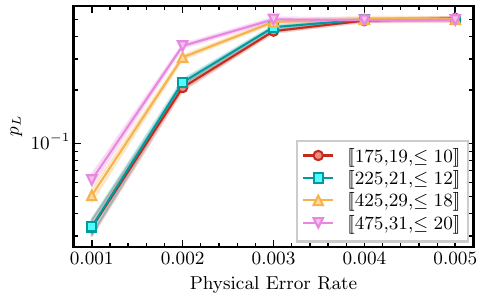}
    }
    \hfill
    \subfloat[Surface codes, circuit noise with repeated syndrome measurement ($32d$ rounds)]{
        \centering
        \includegraphics[width=.45\textwidth]{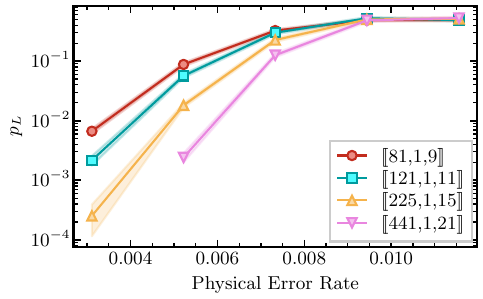}
    }
    \vspace{0.0cm}
    \subfloat[Decoding strategies for numerics. $^{*}$Overlapping window decoding. \textcolor{red}{$^{\dagger}$}No threshold.]{
        \centering
        \resizebox{\textwidth}{!}{
        \begin{tabular}{l|c|cc|c}
            \toprule
            & \textbf{Code-capacity} & \multicolumn{2}{c|}{\textbf{Knill error correction}} & \textbf{Repeated noisy syndrome} \\
            & & \textit{Offline} & \textit{Online} & \textbf{extraction$^*$} \\
            \midrule
            Lifted product codes & BP & BP+OSD & BP & \textcolor{red}{BP$^{\dagger}$} \\
            \hline
            Surface codes        & MWPM & MWPM & MWPM & MWPM \\
            \bottomrule
        \end{tabular}}
    }
    \caption{Numerical simulations comparing the performance of the same decoder for different noise models and error correction gadgets. (a,c,e) Lifted product (LP) codes decoded using BP. We observe results consistent with a non-zero threshold in (a) and (c). (b,d,f) Surface codes decoded using MWPM. We observe results consistent with a non-zero threshold in all cases. Error bars show 95\% confidence intervals calculated using the Agresti-Coull method~\cite{brown2001}. (g) Summary of decoders used in the simulations.}\label{fig:simulation_results}
\end{figure*}

We performed numerical simulations for a set of high rate lifted product (LP) codes from~\cite{raveendran2022finite}; the results are shown in Figure~\ref{fig:simulation_results}(a), (c), and (e).
The LP codes have parameters: \([\![175, 19, \leq 10]]\), \([\![225, 21, \leq 12]\!]\), \([\![425, 29, \leq 18]\!]\) and \([\![475, 31, \leq 20]\!]\), with the two smaller codes having girth 6 and the two larger codes having girth 8.
We performed code-capacity simulations using BP and verified that we were able to observe a threshold.
Then, we performed Knill EC simulations using the same decoder for the online decoding and BP+OSD for the offline decoding.
These simulations confirmed that we could achieve threshold behaviour under circuit level noise using a code-capacity decoder for online decoding, provided we had access to fault tolerantly prepared Bell states.
To provide a comparison with existing methods in the literature we performed overlapping window decoding. 
We simulated \(32d\) rounds of stabilizer measurement with the syndrome partitioned into overlapping windows of size $2d$, with commit size of $d$.
The detector error models for these windows were decoded using BP to compare to the online decoding of Knill EC.
We were not able to achieve threshold behaviour when performing overlapping window decoding with BP for the lifted product codes. This agrees with our understanding that the large girth of these lifted product codes allowed them to be decoded in the code-capacity setting, however, this property is not maintained when moving to the detector error model for repeated syndrome measurement.

In addition to the lifted product codes we performed the same three numerical experiments for the rotated surface code; see Figure~\ref{fig:simulation_results}(b), (d), and (f) for the results.
We performed the simulations on surface codes with parameters \([\![81,1,9]\!]\), \([\![121,1,11]\!]\), \([\![225,1,15]\!]\) and \([\![441,1,21]\!]\). We used MWPM to perform the online and offline decoding.
All three numerical experiments were able to achieve threshold behaviour.
Although MWPM is used for both the online part of Knill EC and overlapping window decoding, the size of the graph they are solving the problem on is different, $O(d^2)$ vertices compared to $O(d^3)$.
This suggests that although overlapping window decoding can achieve threshold behaviour with the same algorithm as online decoding in Knill EC, there still may be advantages in terms of real time decoding and control system design to use Knill EC.

For MWPM we used the PyMatching~\cite{higgott2025} implementation.
For BP and BP+OSD~\cite{roffe2020} we used the implementations of these algorithms from the python library \texttt{ldpc}~\cite{roffe_ldpc_python_tools_2022}.
The parameters for BP were taken from the same paper as the LP codes~\cite{raveendran2022finite}.
Minimum sum was used as the belief propagation method with a scaling factor of 0.75, a serial schedule and maximum number of iterations set at 100.
For BP+OSD the parameters were the same and we used the ``osd\_0'' ordered statistics decoding method.
Table (g) in Figure~\ref{fig:simulation_results} summarises which decoder was used for each of the numerical experiments.

\section{Discussion}
\label{sec:discussion}

In this work, we demonstrated that Knill EC is compatible with fast decoding, and moreover that the online decoding during the protocol can be performed using a code-capacity decoder.
We illustrated this point by using belief propagation (with no post-processing) as the online decoder for a family of lifted product codes.
This decoder is widely used in practice for classical codes, and can be efficiently implemented using specialised hardware such as FPGAs and ASICs.
This fact combined with the small number of variables in the decoding problem implies that the Knill approach has some of the lowest classical control requirements of any quantum error correction scheme.
For decoders compatible with soft information such as matching and belief propagation, it is likely that the performance could be improved by using the circuit structure to optimise the error priors given to the decoder.

Knill EC is compatible with all stabilizer codes, but for CSS codes there is a compressed version that only uses auxiliary logical $|0\rangle$ and $|+\rangle$ states~\cite{huang2021shor,paetznick2024demonstration,baranes2026leveraging}; see Appendix~\ref{app:compressed_knill}.
This version is also compatible with fast decoding and may therefore be preferable for CSS codes.
We argue that compressed Knill EC is likely to be superior to Steane error correction for two reasons.
Firstly, it inherits the robustness to leakage, loss, and coherent errors~\cite{ryan-anderson2024highfidelity,baranes2026leveraging,chang2025taming} of Knill EC.
Secondly, in Steane error correction the recovery operator is applied to the data block, and therefore errors in the auxiliary state can lead to spurious operators being applied to the data block.
This can be dealt with by considering an approach reminiscent of overlapping window decoding~\cite{gong2024improved} but at the cost of making the decoding problem more complex.

Knill EC can be modified to implement logical gates through gate teleportation~\cite{gottesman1999demonstrating,knill2005quantum}.
The only real change is that different auxiliary logical states need to be prepared; the online decoding remains the same.
Therefore, a fault-tolerant quantum computing architecture based on Knill EC could achieve high logical clock speeds, as long as the pace of auxiliary logical state generation can keep up.
This makes architectures based on Knill EC an attractive proposition for hardware platforms with slower physical operations, such as neutral atom and trapped ion qubits~\cite{bluvstein2024logical,muniz2025repeated,chen2024benchmarking,ransford2025helios}.
It is worth contrasting the Knill approach with the paradigm of algorithmic fault tolerance~\cite{zhou2025lowoverhead,cain2025fast,serra-peralta2026decoding}, where syndrome measurements are not repeated but fault tolerance is nonetheless maintained.
This approach is more qubit efficient than the Knill approach, but at the cost of a significant increase in the complexity of the decoding problem.
We leave a detailed comparison of these two approaches to future work.

We benchmarked a naive approach to auxiliary logical state preparation, where logical Bell states are prepared using $d$ rounds of error correction.
Suppose that we want to do Knill EC for $m$ copies of some $[\![n,k,d]\!]$ code.
Then, to keep pace with the online decoding, we would need approximately $2 m d$ code blocks to prepare the auxiliary logical states (for CSS codes we would use compressed Knill and save a factor of $2$), a significant qubit overhead.
Optimised state preparation circuits exist for certain codes, see e.g.~\cite{reichardt2004improved,goswami2023faulttolerant,goswami2024factorybased,gong2024computation,sommers2025observation,kanomata2025faulttolerant,ibe2025measurementbased,yamasaki2024timeefficient,yoshida2025concatenate,litinski2025blocklet}, including code families designed specifically with this in mind~\cite{hastings2025class}.
In future work, we plan to investigate and devise more efficient techniques for logical state preparation, and to propose an optimised architecture for fault-tolerant quantum computation for hardware platforms with access to transversal CNOT gates.

\section*{Code Availability}
The modular simulation tool we built to perform the simulations in this paper is call \textsc{Hex} and is available on GitHub~\cite{murphy_hex_2026}.

\section*{Acknowledgements}
The authors thank Boris Bourdoncle, Anthony Leverrier, and Evan Peters for valuable discussions.
We acknowledge the use of Timo Hillmann's \href{https://github.com/timohillmann/plotting_lib}{\texttt{plotting\_lib}} library for producing the plots in Fig.~\ref{fig:simulation_results}. We acknowledge the use of \href{https://github.com/tikzit/tikzit}{TikZit} to draw the circuit diagrams used in this paper.
The authors are grateful to the CLEPS infrastructure from the Inria of Paris for providing resources and support.
This research was enabled in part by support provided by Simon Fraser University (\url{https://www.sfu.ca/}) and the Digital Research Alliance of Canada (\url{https://alliancecan.ca/}).
EM gratefully acknowledges the financial support of NTT Research while working in Waterloo.
Research at IQC and the Perimeter Institute is supported in part by the Government of Canada through the Department of Innovation, Science and Economic Development; and by the Province of Ontario through the Ministry of Colleges, Universities, Research Excellence and Security.
This work was co-funded by CIFRE grant n°2025/0667.
MV was supported in part by Plan France 2030 through the project ANR-22-PETQ-0006.

\bibliography{bibliography}

\appendix

\section{Proofs}
We provide the detailed definitions and proofs of the lemmas for fault tolerance with Knill error correction (Knill EC) presented earlier.
\subsection{Definitions}\label{sec:definitions}
\begin{definition}
  \textbf{Locally decaying (LD) probability distribution:} For a given finite set $A$, a probability distribution  $p$ over the power set of $A$, $p: \mathrm{Pow}[A] \to [0,1]$ is locally decaying with rate $\tau$ iff for any subset $B \subseteq A$, the total probability that any subset $B' \subseteq A$ drawn according to it is upper bounded as,
\begin{align}
    \sum_{B': B \subseteq B' \subseteq A} p(B') \leq \tau ^{|B|}.
\end{align}
\end{definition}

\begin{definition}
  \textbf{Locally decaying error model on quantum states:} A locally decaying Pauli error model $\mathcal{N}_{\tau}$ with rate $\tau$ on $n$ qubits is a quantum channel specified by Kraus operators consisting of Pauli operators $E \in \hat{P}_n$ drawn with a probability $p_{\mathcal{N}_\tau}(E)$ that satisfies the locally decaying distribution, i.e., the probability distribution of its support is locally decaying with rate $\tau$, 
\begin{align}
    \sum_{\substack{E':~ \text{supp}(E) \subseteq \text{supp}(E')}} p_{\mathcal{N}_\tau}(E') \leq \tau ^{|E|},
\end{align}
where $\text{supp}(E)$ is the support of the non-trivial Pauli factors in $E$ and $|E| = |\text{supp}(E)|$ is its weight. A locally decaying noise channel $\mathcal{N}_{\tau}$ is denoted as $\{\sqrt{p_{\mathcal{N}_\tau}(E)}E\}_{E\in \hat{P}_n}$, where $p_{\mathcal{N}_\tau}$ satisfies the locally decaying condition.
\end{definition}

\begin{definition}
  \textbf{Locally decaying error model for a quantum circuit:} Consider a faulty quantum circuit $Q$, defined as a collection of fault locations (which could be wait locations, multi-qubit gates, and single qubit measurements). By $\mathcal{F}_Q$ we refer to the power set of all fault locations, over which we define a probability distribution $p_{\mathcal{F}}:\mathcal{F}_Q \to [0,1]$. The fault model is locally decaying with rate $\tau$, if the total probability that given fault chain $F\in \mathcal{F}$ is included is upper-bounded by $\tau^{|F|}$, where $|F|$ is the number of fault locations in $F$, and not the support of $F$.
\end{definition}

\subsection{Proof of Lemma~\ref{lem:noisy_bell_measurement}}\label{sec:proof_lemma}
\begin{proof}
For $E \subseteq D \cup A$, define
\[
T(E) := \{ i \in [n] : e_i \cap E \neq \emptyset \},
\]
where $e_i=(d_i,a_i)$ is an ``edge'' (cardinality-two set) corresponding to the CNOT between $d_i$ and $a_i$. Since each edge covers two qubits,
\begin{equation}\label{eq:size_bound}
  |T(E)| \ge \left\lceil \frac{|E|}{2} \right\rceil.
\end{equation}

For qubits in edge $e_i$ to end up in error support on that edge, it must be because either the input qubits to that edge were affected by a prior error, or the edge CNOT was faulty. Let $S_{D}$ and $S_A$ be the support of the effective Pauli operators on $D$ and $A$ blocks. Let $F \subseteq \{1,2,\cdots, n\}$ (edge indices) be the set with faulty CNOT location. Thus, for $E$ to be an error in the output, then for every $i \in T(E)$, either one of these options must be true: $d_i \in S_D$ or $ a_i \in S_A$ or $ i \in F$.

Define
\[
B_i := \underbrace{\{ d_i \in S_D \}}_{1} \cup \underbrace{\{ a_i \in S_A \}}_{2} \cup \underbrace{\{ i \in F \}}_{3}.
\]
Introduce the indicator function $\sigma: T(E) \to \{1,2,3\}$ the error type affecting the CNOT edge in $T(E)$. Thus, the probability of having error $E$ is, 
\begin{align}
  &\Pr(E)  = \Pr(\bigcap_{i \in T(E)} B_i) \nonumber \\
  & = \Pr\left(\bigcup_{\vec{\sigma} \in \{1,2,3\}^{T(E)}} \left(\bigcap_{i: \sigma(i) = 1} \{d_i \in S_D\} \cap \notag \right. \right. \\
  & \left. \left. \bigcap_{i: \sigma(i) = 2} \{a_i \in S_A\} \cap \bigcap_{i: \sigma(i) = 3} \{i \in F\}\right)\right) \nonumber \\
  &\le
\sum_{\vec{\sigma}}
\Pr\!\left(
\{ d_i : i \in I_1 \} \subseteq S_D
\right) \times \notag \\
&\Pr\!\left(
\{ a_i : i \in I_2 \} \subseteq S_A
\right)
\Pr\!\left(
I_3 \subseteq F
\right),
\label{eq:union_bound}
\end{align}
where
\begin{align}
I_j &:= \{ i \in T(E) : \sigma(i) = j \},
\notag \\
&j \in \{1,2,3\},
\notag \\
&
|I_1| + |I_2| + |I_3| = |T(E)| .
\end{align}

Using the locally-decaying (LD) bounds,
\begin{align}
\Pr\!\left(
\{ d_i : i \in I_1 \} \subseteq S_D
\right)
&\le p_1^{|I_1|}, \\
\Pr\!\left(
\{ a_i : i \in I_2 \} \subseteq S_A
\right)
&\le p_2^{|I_2|}, \\
\Pr(I_3 \subseteq F)
&\le p_3^{|I_3|}.
\end{align}

Therefore,
\begin{align}
\Pr\!\left(
\bigcap_{i \in T(E)} B_i
\right)
&\le
\sum_{\substack{I_1,I_2,I_3 \\ \text{partition of } T(E)}}
p_1^{|I_1|}
p_2^{|I_2|}
p_3^{|I_3|}
\notag \\
&=
(p_1 + p_2 + p_3)^{|T(E)|}.
\label{eq:multinomial}
\end{align}

Combining~\eqref{eq:multinomial}, and ~\eqref{eq:size_bound}, we obtain
\begin{align}
\Pr(E)
&\le
(p_1 + p_2 + p_3)^{\lceil |E|/2 \rceil}
\notag \\
&\le
(p_1 + p_2 + p_3)^{|E|/2}
\notag \\
&=
\left(\sqrt{p_1 + p_2 + p_3}\right)^{|E|},
\label{eq:final_ld_bound}
\end{align}
where the first inequality holds assuming $p_1 + p_2 + p_3 \leq 1$.

Let $S_{\mathrm{out}}$ denote the Pauli support induced by the input noise and the noisy CNOT. This is further affected by measurement noise. Let $S_M \subseteq D \cup A$ denote the Pauli support induced by
measurement noise, assumed to be locally decaying with rate $p_4$:
\begin{align}
\Pr(E \subseteq S_M)
\le
p_4^{|E|}
\qquad \forall E \subseteq D \cup A .
\end{align}

Using closure of locally-decaying distributions under union,
for any $E \subseteq D \cup A$,
\begin{align}
\Pr(E)
&\le
\sum_{E_1 \cup E_2 = E}
\Pr(E_1 \subseteq S_{\mathrm{out}})
\Pr(E_2 \subseteq S_M)
\notag \\
&\le
\sum_{E_1 \cup E_2 = E}
\left(\sqrt{p_1 + p_2 + p_3}\right)^{|E_1|}
p_3^{|E_2|}
\notag \\
&=
\left(
\sqrt{p_1 + p_2 + p_3}
+
p_4
\right)^{|E|}.
\end{align}

Therefore the effective LD rate on the
data and first auxiliary block is
\begin{align}
\boxed{
p_{\mathrm{eff}}
=
\sqrt{p_1 + p_2 + p_3}
+
p_4 .
}
\label{eq:peff}
\end{align}

\end{proof}

\section{Modularised circuit simulations}\label{sec:modular_simulation_algorithm}

Here we will provide a more detailed description of the modular simulation tools. The goal of this tool is to perform end-to-end simulations of the composition of fault-tolerant protocols, while allowing each protocol to have independent decoding.

Our tool is designed for Clifford fault-tolerant protocols that take the following form: a Clifford circuit, a classical function that processes the measurements of this circuit (usually the decoder) and a set of Pauli corrections applied by the classical function. We focus on these protocols because the Clifford circuits allow for efficient simulation and the corrections being Pauli operators allows for the classical functions to be applied during post-processing.

Our tool supports a wide range of fault-tolerant protocols such as error correction gadgets, logical state preparation, and logical Clifford gates (e.g. lattice surgery~\cite{horsman2012surface}).
But the tool does not support protocols that require non-Clifford gates or Clifford correction.
Including methods to use post-selection, such state distillation, would require a minor modification to the algorithm presented in this section. 
A boolean value will need to be added to the output of the classical function indicating if a sample needs to be post-selected. 
This is a minor change that does not impact the rest of the algorithm, therefore, for the remainder of this section we will not discuss post-selection.

Now we consider how to simulate a sequence of fault-tolerant (Clifford) protocols.
A naive approach would be to implement a direct stabilizer tableau simulation~\cite{aaronson2004improved}, stopping after each protocol to apply the Pauli corrections before moving onto the next protocol.
However, using \textsc{Stim} it is much faster to sample all of the measurements of the circuit at once. The disadvantage is that the measurements of a protocol have been sampled without the corrections from the previous protocols. Therefore, we need a way to update the measurements to what they would be if we had applied the Pauli corrections during the simulation.

This can be done in pre-processing by propagating the Pauli corrections of each protocol through the full circuit to identify which measurements are flipped by these corrections.
This information can then be used to update the measurements of future protocols depending on the previous protocols corrections. As an example consider the circuit in Figure~\ref{fig:fault_tolerant_modules}. The measurements for protocol 3 will be updated if the corrections from protocols 1 and 2 anticommute with its measurements.

We now give a detailed description of our simulation Algorithm~\ref{alg:sampling_modularised_circuit}.
There are a total of \(N\) fault-tolerant protocols, each with a Clifford circuit \(\mathcal{C}_i\), a classical function \(f_i\), and a set of possible Pauli corrections \(\mathcal{P}_i\) where \(i \in \{1,\ldots , N\}\).
There are \(T_i\) independent Pauli corrections in \(\mathcal{P}_i\).
Composing the circuits from all of the protocols gives our full circuit \(\mathcal{C} = \mathcal{C}_1 \circ \mathcal{C}_2 \circ \cdots \circ \mathcal{C}_N\), each protocol has \(M_i\) measurements with the total circuit having \(M=M_1+\cdots + M_N\) measurements.

As discussed earlier, it is straightforward to construct a function \(\mathbf{F}_{\mathcal{C}} : \mathcal{P} \rightarrow \mathbb{F}_{2}^{M}\) that calculates which measurements in \(\mathcal{C}\) are flipped by introducing a new Pauli gate into the circuit, where $\mathcal P$ is the set of all possible Pauli insertions into the circuit.
We note that $\mathcal P_1 \cup \ldots \cup \mathcal P_N \subseteq \mathcal P$.
The map from Pauli insertions to measurement flips is linear, \(\mathbf{F}_{\mathcal{C}}(P \circ P') = \mathbf{F}_{\mathcal{C}}(P) \oplus \mathbf{F}_{\mathcal{C}}(P')\), allowing the pre-computation of measurement flips for the independent Pauli corrections of each protocol.
\(K_i = \mathbf{F}_{\mathcal{C}}(\mathcal{P}_i) \in \mathbb{F}_{2}^{T_i \times M}\) is the matrix that maps each of the Pauli corrections in protocol \(i\) to their measurement flips.

Circuit \(\mathcal{C}\) is sampled \(N_{\text{shots}}\) times producing \(m = [m_1 | m_2 | \cdots | m_N] \in \mathbb{F}_{2}^{N_{\text{shot}} \times M}\), where \(m_i\) is the sampled data from circuit \(\mathcal{C}_i\). The classical function takes these sampled measurements as input and determines which combination of Pauli operators in \(\mathcal{P}_i\) to apply \(f_i : \mathbb{F}_{2}^{N_{\text{shots}} \times M_i} \rightarrow \mathbb{F}_{2}^{N_{\text{shots}} \times T_i}\). The measurements that then need to be flipped to account for the corrections of protocol \(i\) are \(B_i = f_i(m_i) \cdot K_i\). The measurements are updated, \(m \leftarrow m \oplus B_i\), before the procedure moves onto the next protocol. If one of these protocols includes the measurement of a logical observable then its updated measurement values can be used to calculate logical errors. 

\begin{algorithm}[]
  \caption{Sampling Modular Circuit}\label{alg:sampling_modularised_circuit}
\KwIn{\(\{\mathcal{C}_i\}, \{f_i\}, \{\mathcal{P}_i\}\), \(N_{\text{shots}}\)}
\KwOut{Corrected measurements $m$}
\For{$i = 1$ \KwTo $N$}{
Compute the measurement flips caused by the \(T_i\) independent Pauli corrections in \(\mathcal{P}_i\): \(K_i \leftarrow \mathbf{F}_{\mathcal{C}}(\mathcal{P}_i) \in \mathbb{F}_{2}^{T_i \times M}\) \;
}
Sample the measurements of the full circuit, \(\mathcal{C} = \mathcal{C}_{1} \circ \mathcal{C}_{2} \circ \cdots \circ \mathcal{C}_N\), with \(N_{\text{shot}}\) shots to get \(m \leftarrow [m_1 | m_2 | \cdots | m_N] \in \mathbb{F}_{2}^{N_{\text{shot}} \times M}\)\;
\For{$i = 1$ \KwTo $N$}{
    Apply the classical function for protocol \(i\) to its measurement data \(f_i(m_i) \in \mathbb{F}_{2}^{N_{\text{shot}} \times T_i}\) \;
    Find which measurements need updating to account for this protocol's corrections \(B_i \leftarrow f_i(m_i) \cdot K_i \in \mathbb{F}_{2}^{N_{\text{shots}} \times M}\) \;
    Update the measurements \(m \leftarrow m \oplus B_i\) \;
}
\Return{\(m\)}
\end{algorithm}

To provide clarity on what \(\mathcal{C}_i\), \(f_i\) and \(\mathcal{P}_i\) are for different protocols we will give two examples.
First, the transversal Bell measurement used in Knill error correction, shown in Figure~\ref{fig:knill_error_correction}.
For this \(\mathcal{C}_i\) will be the circuit for transversal Bell measurement shown in the shaded region of the figure.
The classical function \(f_i\) will then be the combination of calculating the syndrome, decoding (\(\mathcal{D}\)), applying the recovery operation (\(\mathcal{R}\)), and calculating the logical Pauli eigenvalues.
The set of Pauli corrections \(\mathcal{P}_i\) will be the \(2k\) logical Pauli operators applied to the third code block, resulting in a \(T_i\) value of \(2k\).
Second, we will consider the repeated syndrome measurement used in state preparation, shown in Figure~\ref{fig:bell_state_preparation}.
The circuit \(\mathcal{C}_i\) is the preparation of the physical qubits and then \(d\) rounds of stabilizer measurement.
The classical function \(f_i\) will calculate the detector values from the measurements and decode these using the detector error model and the offline decoder.
The Pauli corrections \(\mathcal{P}_i\) are defined as the error locations in the detector error model.

These examples illustrate that this tool is not limited to decoding using detector error models, but allows a variety of different quantum error correction protocols to be combined and simulated.

\section{Circuit-level error model}\label{sec:error_model}

\begin{figure}[htbp]
  \centering
  \resizebox{0.5\textwidth}{!}{
    \input{Tikz/stabilizer_measurement_noise_model.tikz}
  }
  \caption{Circuit-level error model for state preparation and stabilizer measurement.}\label{fig:stabilizer_measurement_noise_model}
\end{figure}
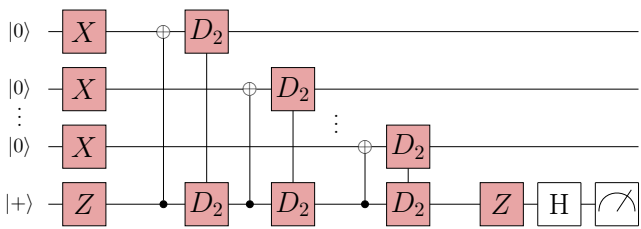

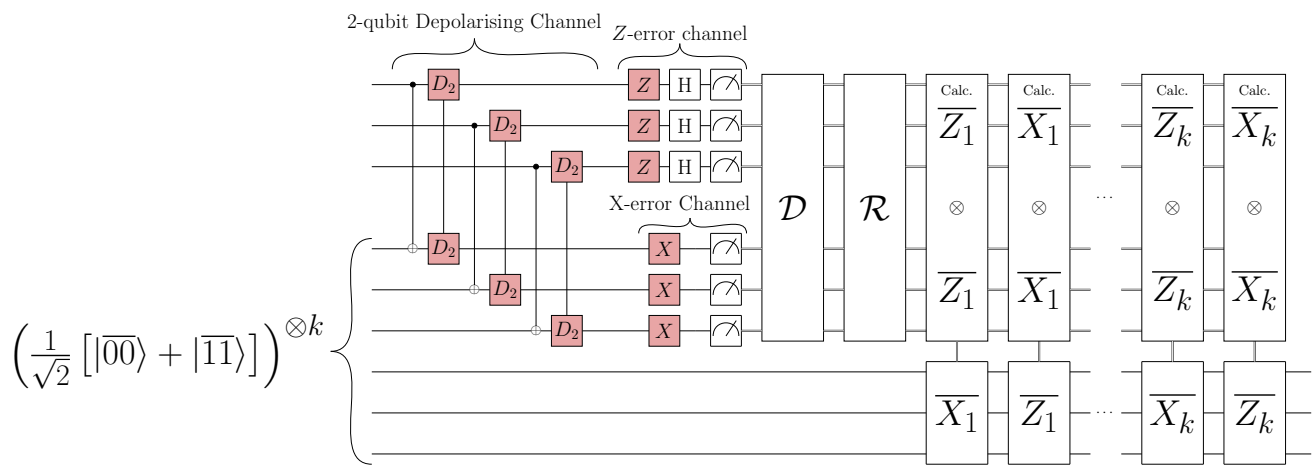
\begin{figure*}[htbp]
  \centering
  \resizebox{\textwidth}{!}{
    \input{Tikz/knill_noise_model.tikz}
  }
  \caption{Circuit-level error model for transversal Bell measurement.}\label{fig:knill_error_correction_noise}
\end{figure*}

We use a circuit-level depolarising error model for our simulations.
Physical qubits are prepared in either the \(\ket{0}\) or \(\ket{+}\) state followed by a bit-flip or phase-flip Pauli error channel, respectively.
Single qubit unitary gates are followed by the 1-qubit depolarising channel.
Two qubit unitary gates are followed by the 2-qubit depolarising channel.
We exclusively consider single-qubit measurements, with the bit-flip and phase-flip channels applied before measuring in the \(Z\) and \(X\) basis, respectively.
Every error channel in the circuit is parametrised by a single error probability \(0 < q < 1\).
Definitions for the error channels in terms of \(q\) are as follows:

\begin{itemize}
  \item Single qubit depolarising channel: \[E_{q}(\rho)=(1-q)\rho + \frac{q}{3}X\rho X + \frac{q}{3}Y\rho Y + \frac{q}{3}Z\rho Z,\]
  \item 2-qubit depolarising channel:
        {\small
        \begin{align*}
          E_{q}(\rho) &= (1-q)\rho\\
           +  \frac{q}{15}\big[&IX\rho IX + IY\rho IY + IZ\rho IZ \\
           +  &XI\rho XI + YI\rho YI + ZI\rho ZI \\
           +  &XX\rho XX + XY\rho XY + XZ\rho XZ \\
           +  &YX\rho YX + YY\rho YY + YZ\rho YZ \\
           +  &ZX\rho ZX + ZY\rho ZY + ZZ\rho ZZ\big], \\
        \end{align*}}
  \item Bit-flip channel: \[E_{q}(\rho)= (1-q)\rho + qX\rho X,\]
  \item Phase-flip channel: \[E_{q}(\rho)= (1-q)\rho + qZ\rho Z.\]
\end{itemize}

Figure~\ref{fig:stabilizer_measurement_noise_model} shows this error model applied to state preparation and stabilizer measurement, and Figure~\ref{fig:knill_error_correction_noise} shows this error model applied to transversal Bell measurement.

\section{Compressed Knill error correction}
\label{app:compressed_knill}

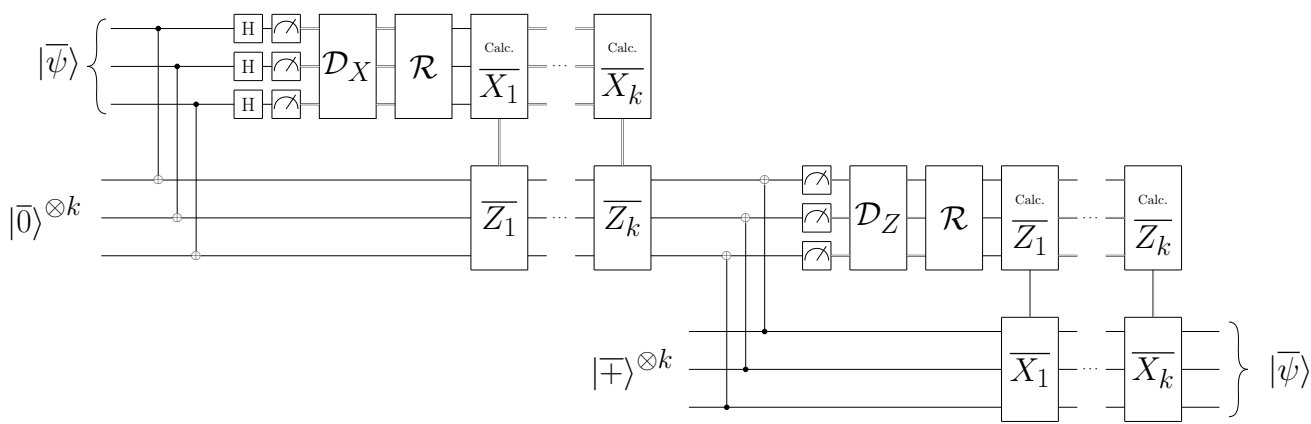
\begin{figure*}[htbp]
  \centering
  \resizebox{\textwidth}{!}{
    \input{Tikz/knill_compressed.tikz}
  }
  \caption{Circuit for the compressed Knill error correction protocol.}\label{fig:knill_compressed}
\end{figure*}

For CSS codes you can perform a compressed form of Knill EC~\cite{huang2021shor,paetznick2024demonstration,baranes2026leveraging}. Instead of using a logical Bell state you use a logical \(\ket{\overline{0}}^{\otimes k}\) and a \(\ket{\overline{+}}^{\otimes k}\) state and perform two teleportations. When teleporting using the \(\ket{\overline{0}}^{\otimes k}\) state the online decoding is done only for the \(X\) stabilizers, similarly, when teleporting using the \(\ket{\overline{+}}^{\otimes k}\) state we only decode using the \(Z\) stabilizers. A diagram of this compressed Knill protocol is shown in Figure~\ref{fig:knill_compressed}.

\end{document}

%% file: preamble.tex
\usepackage{natbib}
\usepackage{graphicx}
\usepackage{mathtools}
\usepackage{amsfonts} 
\usepackage{amsmath} 
\usepackage{subfig}
\usepackage{amssymb}
\usepackage{amsthm}
\usepackage{bm}
\usepackage{braket}
\usepackage{color} 
\usepackage{bbm}
\usepackage{hyperref}
\usepackage{import}
\usepackage[expansion=false]{microtype}
\usepackage{relsize}
\usepackage{framed}
\usepackage[dvipsnames]{xcolor}
\usepackage[percent]{overpic}
\usepackage{tikz}
\usetikzlibrary{positioning}
\usetikzlibrary{calc}

\DeclareMathOperator{\rk}{rk}

\newtheorem*{corollary*}{Corollary}
\newtheorem{lemma}{Lemma}

\theoremstyle{definition}
\newtheorem{definition}{Definition}

\theoremstyle{definition}

%% file: figures/Tikz/quantum.tikzstyles
\tikzstyle{gate}=[shape=rectangle, text height=1.5ex, text depth=0.25ex, yshift=0.5mm, fill=white, draw=black, minimum height=5mm, yshift=-0.5mm, minimum width=5mm, font={\small}, tikzit category=circuit]
\tikzstyle{big gate}=[shape=rectangle, text height=1.5ex, text depth=0.25ex, yshift=0.5mm, fill=white, draw=black, minimum height=10mm, yshift=-0.5mm, minimum width=5mm, font={\small}, tikzit category=circuit]
\tikzstyle{Z dot}=[inner sep=0mm, minimum size=2mm, shape=circle, draw=black, fill=zxgreen, tikzit fill={rgb,255: red,221; green,255; blue,221}, tikzit category=zx]
\tikzstyle{Z bold dot}=[inner sep=0mm, minimum size=2mm, shape=circle, draw=black, fill=zxgreen, tikzit fill={rgb,255: red,221; green,255; blue,221}, line width=1.2pt, tikzit category=zx]
\tikzstyle{Z phase dot}=[minimum size=5mm, font={\footnotesize\boldmath}, shape=rectangle, rounded corners=2mm, inner sep=0.2mm, outer sep=-2mm, scale=0.8, tikzit shape=circle, draw=black, fill=zxgreen, tikzit fill={rgb,255: red,221; green,255; blue,221}, tikzit draw=blue, tikzit category=zx]
\tikzstyle{Z tiny dot}=[inner sep=0mm, minimum size=1mm, shape=circle, draw=black, fill=zxgreen, tikzit fill={rgb,255: red,221; green,255; blue,221}]
\tikzstyle{X dot}=[Z dot, shape=circle, draw=black, fill=zxred, tikzit fill={rgb,255: red,255; green,136; blue,136}, tikzit category=zx]
\tikzstyle{X bold dot}=[inner sep=0mm, minimum size=2mm, shape=circle, draw=black, fill=zxred, tikzit fill={rgb,255: red,255; green,136; blue,136}, line width=1.2pt, tikzit category=zx]
\tikzstyle{X phase dot}=[Z phase dot, tikzit shape=circle, tikzit draw=blue, fill=zxred, tikzit fill={rgb,255: red,255; green,136; blue,136}, font={\footnotesize\boldmath}, tikzit category=zx]
\tikzstyle{X tiny dot}=[inner sep=0mm, minimum size=1mm, shape=circle, draw=black, fill=zxred, tikzit fill={rgb,255: red,255; green,136; blue,136}]
\tikzstyle{hadamard}=[fill=yellow, draw=black, shape=rectangle, inner sep=0.6mm, minimum height=1.5mm, minimum width=1.5mm, tikzit category=zx]
\tikzstyle{paulibox}=[fill={rgb,255: red,221; green,221; blue,255}, draw=black, shape=rectangle, inner sep=0.6mm, minimum height=5mm, minimum width=5mm, font={\footnotesize}, text height=1.5ex, text depth=0.25ex, tikzit category=zx]
\tikzstyle{vertex}=[inner sep=0.2mm, minimum size=1mm, shape=circle, draw=black, fill=black, tikzit category=misc]
\tikzstyle{vertex set}=[inner sep=0.2mm, minimum size=1mm, shape=circle, draw=black, fill=white, font={\footnotesize\boldmath}, tikzit category=misc]
\tikzstyle{small black dot}=[fill=black, draw=black, shape=circle, inner sep=0pt, minimum width=1.2mm, tikzit category=circuit]
\tikzstyle{cnot ctrl}=[fill=black, draw=black, shape=circle, inner sep=0pt, minimum width=1.2mm, tikzit category=circuit]
\tikzstyle{cnot targ}=[fill=white, draw=white, shape=circle, tikzit category=circuit, label={center:$\oplus$}, inner sep=0pt, minimum width=2.1mm, tikzit fill={rgb,255: red,102; green,204; blue,255}, tikzit draw=black]
\tikzstyle{ket}=[fill=white, draw=black, shape=regular polygon, regular polygon sides=3, regular polygon rotate=-30, scale=0.7, inner sep=1pt, tikzit category=circuit, tikzit shape=rectangle, tikzit fill=green]
\tikzstyle{bra}=[fill=white, draw=black, shape=regular polygon, regular polygon sides=3, regular polygon rotate=30, scale=0.7, inner sep=1pt, tikzit category=circuit, tikzit shape=rectangle, tikzit fill=red]
\tikzstyle{scalar}=[shape=rectangle, text height=1.5ex, text depth=0.25ex, yshift=0.5mm, fill=white, draw=black, minimum height=5mm, yshift=-0.5mm, minimum width=5mm, font={\small}]
\tikzstyle{clabel}=[fill=white, draw=none, shape=rectangle, tikzit fill={rgb,255: red,56; green,255; blue,242}, font={\footnotesize}, inner sep=1pt, tikzit category=labels]
\tikzstyle{empty diagram}=[draw={gray!40!white}, dashed, shape=rectangle, minimum width=1cm, minimum height=1cm, tikzit category=misc]
\tikzstyle{amap}=[fill=white, draw=black, shape=NEbox, tikzit category=asymmetric, tikzit fill=yellow, tikzit shape=rectangle]
\tikzstyle{amap conj}=[fill=white, draw=black, shape=NWbox, tikzit category=asymmetric, tikzit fill=green, tikzit shape=rectangle]
\tikzstyle{amap adj}=[fill=white, draw=black, shape=SEbox, tikzit category=asymmetric, tikzit fill=red, tikzit shape=rectangle]
\tikzstyle{amap trans}=[fill=white, draw=black, shape=SWbox, tikzit category=asymmetric, tikzit fill=orange, tikzit shape=rectangle]
\tikzstyle{astate}=[fill=white, draw=black, shape=NEtriangle, tikzit category=asymmetric, tikzit shape=circle, tikzit fill=yellow]
\tikzstyle{astate conj}=[fill=white, draw=black, shape=NWtriangle, tikzit category=asymmetric, tikzit shape=circle, tikzit fill=green]
\tikzstyle{astate adj}=[fill=white, draw=black, shape=SEtriangle, tikzit category=asymmetric, tikzit shape=circle, tikzit fill=red]
\tikzstyle{astate trans}=[fill=white, draw=black, shape=SWtriangle, tikzit category=asymmetric, tikzit shape=circle, tikzit fill=orange]
\tikzstyle{box}=[shape=rectangle, text height=1.5ex, text depth=0.25ex, yshift=0.5mm, fill=white, draw=black, minimum height=5mm, yshift=-0.5mm, minimum width=5mm, font={\small}]
\tikzstyle{medium box}=[shape=rectangle, text height=1.5ex, text depth=0.25ex, yshift=0.5mm, fill=white, draw=black, minimum height=10mm, yshift=-0.5mm, minimum width=5mm, font={\small}]

\tikzstyle{simple}=[-]
\tikzstyle{hadamard edge}=[-, dash pattern=on 2pt off 0.5pt, thick, draw={rgb,255: red,68; green,136; blue,255}]
\tikzstyle{box edge}=[-, dashed, dash pattern=on 2pt off 0.5pt, thick, draw={rgb,255: red,203; green,192; blue,225}]
\tikzstyle{brace edge}=[-, tikzit draw=blue, decorate, decoration={brace,amplitude=1mm,raise=-1mm}]
\tikzstyle{diredge}=[->]
\tikzstyle{double edge}=[-, double]
\tikzstyle{gray edge}=[-, {gray!60!white}]
\tikzstyle{pointer edge}=[->, very thick, gray]
\tikzstyle{boldedge}=[-, line width=1.2pt, shorten <=-0.17mm, shorten >=-0.17mm]
\tikzstyle{bidir edge}=[<->, very thick, draw={rgb,255: red,191; green,191; blue,191}]
\tikzstyle{surface X}=[-, tikzit fill=red, fill=zxred]
\tikzstyle{surface Z}=[-, tikzit fill=green, fill=zxgreen]
\tikzstyle{synthesisable}=[-, fill=green]
\tikzstyle{not synthesisable}=[-, fill={rgb,255: red,148; green,148; blue,148}]
\tikzstyle{white fill}=[-, fill=white]
\tikzstyle{dashed}=[-, dash pattern=on 2pt off 0.5pt, thick, draw={rgb,255: red,68; green,136; blue,255}]
\tikzstyle{Quantum Noise}=[-, tikzit fill=red, fill=red]
\tikzstyle{Classical Noise}=[-, tikzit fill=red, fill=red, line width=1.2pt]
\tikzstyle{Shaded}=[-, fill={rgb,255: red,195; green,195; blue,195}, fill opacity=0.3]

%% file: figures/Tikz/knill_base_diagram.tikz
\begin{tikzpicture}
	\begin{pgfonlayer}{nodelayer}
		\node [style=none] (190) at (30.5, 4.5) {\Huge{$\ket{\overline{\psi}}$}};
		\node [style=none] (299) at (8.5, 6.5) {};
		\node [style=none] (300) at (8.5, 4.5) {};
		\node [style=none] (301) at (8.5, 2.5) {};
		\node [style=none] (318) at (0.5, 20.5) {};
		\node [style=none] (319) at (0.5, 18.5) {};
		\node [style=none] (320) at (0.5, 16.5) {};
		\node [style=none] (321) at (0.5, 21) {};
		\node [style=none] (322) at (0.5, 8) {};
		\node [style=none] (323) at (3.5, 8) {};
		\node [style=none] (324) at (3.5, 21) {};
		\node [font={\Huge}, style=none] (325) at (2, 14.5) {$\mathcal{D}$};
		\node [style=none] (326) at (0.5, 12.5) {};
		\node [style=none] (327) at (0.5, 10.5) {};
		\node [style=none] (328) at (0.5, 8.5) {};
		\node [style=none] (329) at (4.5, 20.5) {};
		\node [style=none] (330) at (4.5, 18.5) {};
		\node [style=none] (331) at (4.5, 16.5) {};
		\node [style=none] (332) at (3.5, 20.5) {};
		\node [style=none] (333) at (3.5, 18.5) {};
		\node [style=none] (334) at (3.5, 16.5) {};
		\node [style=none] (335) at (4.5, 21) {};
		\node [style=none] (336) at (4.5, 8) {};
		\node [style=none] (337) at (7.5, 8) {};
		\node [style=none] (338) at (7.5, 21) {};
		\node [style=none] (339) at (8.5, 20.5) {};
		\node [style=none] (340) at (8.5, 18.5) {};
		\node [style=none] (341) at (8.5, 16.5) {};
		\node [style=none] (342) at (7.5, 20.5) {};
		\node [style=none] (343) at (7.5, 18.5) {};
		\node [style=none] (344) at (7.5, 16.5) {};
		\node [font={\Huge}, style=none] (345) at (6, 14.5) {$\mathcal{R}$};
		\node [style=none] (346) at (4.5, 12.5) {};
		\node [style=none] (347) at (4.5, 10.5) {};
		\node [style=none] (348) at (4.5, 8.5) {};
		\node [style=none] (349) at (3.5, 12.5) {};
		\node [style=none] (350) at (3.5, 10.5) {};
		\node [style=none] (351) at (3.5, 8.5) {};
		\node [style=none] (352) at (8.5, 12.5) {};
		\node [style=none] (353) at (8.5, 10.5) {};
		\node [style=none] (354) at (8.5, 8.5) {};
		\node [style=none] (355) at (7.5, 12.5) {};
		\node [style=none] (356) at (7.5, 10.5) {};
		\node [style=none] (357) at (7.5, 8.5) {};
		\node [style=none] (358) at (27.25, 6.5) {};
		\node [style=none] (359) at (27.25, 4.5) {};
		\node [style=none] (360) at (27.25, 2.5) {};
		\node [style=none] (361) at (27.25, 6.5) {};
		\node [style=none] (362) at (26, 6.5) {};
		\node [style=none] (363) at (26, 4.5) {};
		\node [style=none] (364) at (26, 2.5) {};
		\node [style=none] (365) at (27.25, 4.5) {};
		\node [style=none] (367) at (27.25, 2) {};
		\node [style=none] (368) at (27.25, 7) {};
		\node [style=none] (369) at (28.25, 4.5) {};
		\node [style=none] (413) at (8.5, 6.5) {};
		\node [style=none] (414) at (8.5, 4.5) {};
		\node [style=none] (415) at (8.5, 2.5) {};
		\node [style=none] (609) at (-4, 20.5) {};
		\node [style=none] (610) at (-4, 18.5) {};
		\node [style=none] (611) at (-4, 16.5) {};
		\node [style=none] (616) at (-4, 19.75) {};
		\node [style=none] (617) at (-4, 21.25) {};
		\node [style=none] (618) at (-2.5, 19.75) {};
		\node [style=none] (619) at (-2.5, 21.25) {};
		\node [font={\Large}, style=none] (620) at (-3.25, 20.5) {H};
		\node [style=none] (621) at (-4, 17.75) {};
		\node [style=none] (622) at (-4, 19.25) {};
		\node [style=none] (623) at (-2.5, 17.75) {};
		\node [style=none] (624) at (-2.5, 19.25) {};
		\node [font={\Large}, style=none] (625) at (-3.25, 18.5) {H};
		\node [style=none] (626) at (-4, 15.75) {};
		\node [style=none] (627) at (-4, 17.25) {};
		\node [style=none] (628) at (-2.5, 15.75) {};
		\node [style=none] (629) at (-2.5, 17.25) {};
		\node [font={\Large}, style=none] (630) at (-3.25, 16.5) {H};
		\node [style=none] (631) at (-2, 20.5) {};
		\node [style=none] (632) at (-2.5, 20.5) {};
		\node [style=none] (633) at (-2, 18.5) {};
		\node [style=none] (634) at (-2.5, 18.5) {};
		\node [style=none] (635) at (-2, 16.5) {};
		\node [style=none] (636) at (-2.5, 16.5) {};
		\node [style=none] (637) at (-0.5, 20.5) {};
		\node [style=none] (638) at (-0.5, 18.5) {};
		\node [style=none] (639) at (-0.5, 16.5) {};
		\node [style=none] (662) at (-10.5, 12.5) {};
		\node [style=none] (663) at (-10.5, 10.5) {};
		\node [style=none] (664) at (-10.5, 8.5) {};
		\node [style=none] (669) at (-2, 12.5) {};
		\node [style=none] (670) at (-10.5, 12.5) {};
		\node [style=none] (671) at (-2, 10.5) {};
		\node [style=none] (672) at (-10.5, 10.5) {};
		\node [style=none] (673) at (-2, 8.5) {};
		\node [style=none] (674) at (-10.5, 8.5) {};
		\node [style=none] (675) at (-0.5, 12.5) {};
		\node [style=none] (676) at (-0.5, 8.5) {};
		\node [style=none] (677) at (-0.5, 10.5) {};
		\node [style=none] (698) at (-2, 21.25) {};
		\node [style=none] (699) at (-0.5, 21.25) {};
		\node [style=none] (700) at (-0.5, 19.75) {};
		\node [style=none] (701) at (-2, 19.75) {};
		\node [style=none] (702) at (-1.875, 20.25) {};
		\node [style=none] (703) at (-0.625, 20.25) {};
		\node [style=none] (704) at (-1.25, 20.25) {};
		\node [style=none] (705) at (-0.75, 21) {};
		\node [style=none] (706) at (-2, 19.25) {};
		\node [style=none] (707) at (-0.5, 19.25) {};
		\node [style=none] (708) at (-0.5, 17.75) {};
		\node [style=none] (709) at (-2, 17.75) {};
		\node [style=none] (710) at (-1.875, 18.25) {};
		\node [style=none] (711) at (-0.625, 18.25) {};
		\node [style=none] (712) at (-1.25, 18.25) {};
		\node [style=none] (713) at (-0.75, 19) {};
		\node [style=none] (714) at (-2, 17.25) {};
		\node [style=none] (715) at (-0.5, 17.25) {};
		\node [style=none] (716) at (-0.5, 15.75) {};
		\node [style=none] (717) at (-2, 15.75) {};
		\node [style=none] (718) at (-1.875, 16.25) {};
		\node [style=none] (719) at (-0.625, 16.25) {};
		\node [style=none] (720) at (-1.25, 16.25) {};
		\node [style=none] (721) at (-0.75, 17) {};
		\node [style=none] (722) at (-2, 13.25) {};
		\node [style=none] (723) at (-0.5, 13.25) {};
		\node [style=none] (724) at (-0.5, 11.75) {};
		\node [style=none] (725) at (-2, 11.75) {};
		\node [style=none] (726) at (-1.875, 12.25) {};
		\node [style=none] (727) at (-0.625, 12.25) {};
		\node [style=none] (728) at (-1.25, 12.25) {};
		\node [style=none] (729) at (-0.75, 13) {};
		\node [style=none] (730) at (-2, 11.25) {};
		\node [style=none] (731) at (-0.5, 11.25) {};
		\node [style=none] (732) at (-0.5, 9.75) {};
		\node [style=none] (733) at (-2, 9.75) {};
		\node [style=none] (734) at (-1.875, 10.25) {};
		\node [style=none] (735) at (-0.625, 10.25) {};
		\node [style=none] (736) at (-1.25, 10.25) {};
		\node [style=none] (737) at (-0.75, 11) {};
		\node [style=none] (738) at (-2, 9.25) {};
		\node [style=none] (739) at (-0.5, 9.25) {};
		\node [style=none] (740) at (-0.5, 7.75) {};
		\node [style=none] (741) at (-2, 7.75) {};
		\node [style=none] (742) at (-1.875, 8.25) {};
		\node [style=none] (743) at (-0.625, 8.25) {};
		\node [style=none] (744) at (-1.25, 8.25) {};
		\node [style=none] (745) at (-0.75, 9) {};
		\node [style=none] (746) at (8.5, 7) {};
		\node [style=none] (747) at (8.5, 2) {};
		\node [style=none] (748) at (11.5, 2) {};
		\node [style=none] (749) at (11.5, 7) {};
		\node [font={\Huge}, style=none] (750) at (10, 4.5) {$\overline{X_1}$};
		\node [style=none] (751) at (12.5, 6.5) {};
		\node [style=none] (752) at (12.5, 4.5) {};
		\node [style=none] (753) at (12.5, 2.5) {};
		\node [style=none] (754) at (12.5, 7) {};
		\node [style=none] (755) at (12.5, 2) {};
		\node [style=none] (756) at (15.5, 2) {};
		\node [style=none] (757) at (15.5, 7) {};
		\node [font={\Huge}, style=none] (758) at (14, 4.5) {$\overline{Z_1}$};
		\node [style=none] (759) at (11.5, 6.5) {};
		\node [style=none] (760) at (11.5, 4.5) {};
		\node [style=none] (761) at (11.5, 2.5) {};
		\node [style=none] (762) at (8.5, 21) {};
		\node [style=none] (763) at (8.5, 8) {};
		\node [style=none] (764) at (11.5, 8) {};
		\node [style=none] (765) at (11.5, 21) {};
		\node [font={\Huge}, style=none] (766) at (10, 18.5) {$\overline{Z_1}$};
		\node [style=none] (767) at (10, 7) {};
		\node [style=none] (768) at (10, 8) {};
		\node [style=none] (769) at (14, 7) {};
		\node [style=none] (770) at (14, 8) {};
		\node [style=none] (771) at (12.5, 20.5) {};
		\node [style=none] (772) at (12.5, 18.5) {};
		\node [style=none] (773) at (12.5, 16.5) {};
		\node [style=none] (774) at (11.5, 20.5) {};
		\node [style=none] (775) at (11.5, 18.5) {};
		\node [style=none] (776) at (11.5, 16.5) {};
		\node [style=none] (777) at (12.5, 12.5) {};
		\node [style=none] (778) at (12.5, 10.5) {};
		\node [style=none] (779) at (12.5, 8.5) {};
		\node [style=none] (780) at (11.5, 12.5) {};
		\node [style=none] (781) at (11.5, 10.5) {};
		\node [style=none] (782) at (11.5, 8.5) {};
		\node [style=none] (783) at (12.5, 21) {};
		\node [style=none] (784) at (12.5, 8) {};
		\node [style=none] (785) at (15.5, 8) {};
		\node [style=none] (786) at (15.5, 21) {};
		\node [style=none] (788) at (19, 7) {};
		\node [style=none] (789) at (19, 2) {};
		\node [style=none] (790) at (22, 2) {};
		\node [style=none] (791) at (22, 7) {};
		\node [font={\Huge}, style=none] (792) at (20.5, 4.5) {$\overline{X_k}$};
		\node [style=none] (793) at (23, 6.5) {};
		\node [style=none] (794) at (23, 4.5) {};
		\node [style=none] (795) at (23, 2.5) {};
		\node [style=none] (796) at (23, 7) {};
		\node [style=none] (797) at (23, 2) {};
		\node [style=none] (798) at (26, 2) {};
		\node [style=none] (799) at (26, 7) {};
		\node [font={\Huge}, style=none] (800) at (24.5, 4.5) {$\overline{Z_k}$};
		\node [style=none] (801) at (22, 6.5) {};
		\node [style=none] (802) at (22, 4.5) {};
		\node [style=none] (803) at (22, 2.5) {};
		\node [style=none] (804) at (19, 20.5) {};
		\node [style=none] (805) at (19, 18.5) {};
		\node [style=none] (806) at (19, 16.5) {};
		\node [style=none] (807) at (18, 20.5) {};
		\node [style=none] (808) at (18, 18.5) {};
		\node [style=none] (809) at (18, 16.5) {};
		\node [style=none] (810) at (19, 12.5) {};
		\node [style=none] (811) at (19, 10.5) {};
		\node [style=none] (812) at (19, 8.5) {};
		\node [style=none] (813) at (18, 12.5) {};
		\node [style=none] (814) at (18, 10.5) {};
		\node [style=none] (815) at (18, 8.5) {};
		\node [style=none] (816) at (19, 21) {};
		\node [style=none] (817) at (19, 8) {};
		\node [style=none] (818) at (22, 8) {};
		\node [style=none] (819) at (22, 21) {};
		\node [style=none] (821) at (20.5, 7) {};
		\node [style=none] (822) at (20.5, 8) {};
		\node [style=none] (823) at (24.5, 7) {};
		\node [style=none] (824) at (24.5, 8) {};
		\node [style=none] (825) at (23, 20.5) {};
		\node [style=none] (826) at (23, 18.5) {};
		\node [style=none] (827) at (23, 16.5) {};
		\node [style=none] (828) at (22, 20.5) {};
		\node [style=none] (829) at (22, 18.5) {};
		\node [style=none] (830) at (22, 16.5) {};
		\node [style=none] (831) at (23, 12.5) {};
		\node [style=none] (832) at (23, 10.5) {};
		\node [style=none] (833) at (23, 8.5) {};
		\node [style=none] (834) at (22, 12.5) {};
		\node [style=none] (835) at (22, 10.5) {};
		\node [style=none] (836) at (22, 8.5) {};
		\node [style=none] (837) at (23, 21) {};
		\node [style=none] (838) at (23, 8) {};
		\node [style=none] (839) at (26, 8) {};
		\node [style=none] (840) at (26, 21) {};
		\node [style=none] (842) at (16.5, 20.5) {};
		\node [style=none] (843) at (16.5, 18.5) {};
		\node [style=none] (844) at (16.5, 16.5) {};
		\node [style=none] (845) at (15.5, 20.5) {};
		\node [style=none] (846) at (15.5, 18.5) {};
		\node [style=none] (847) at (15.5, 16.5) {};
		\node [style=none] (848) at (16.5, 12.5) {};
		\node [style=none] (849) at (16.5, 10.5) {};
		\node [style=none] (850) at (16.5, 8.5) {};
		\node [style=none] (851) at (15.5, 12.5) {};
		\node [style=none] (852) at (15.5, 10.5) {};
		\node [style=none] (853) at (15.5, 8.5) {};
		\node [style=none] (854) at (17.25, 15) {$\cdots$};
		\node [style=none] (855) at (16.5, 6.5) {};
		\node [style=none] (856) at (16.5, 4.5) {};
		\node [style=none] (857) at (16.5, 2.5) {};
		\node [style=none] (858) at (15.5, 6.5) {};
		\node [style=none] (859) at (15.5, 4.5) {};
		\node [style=none] (860) at (15.5, 2.5) {};
		\node [style=none] (861) at (19, 6.5) {};
		\node [style=none] (862) at (19, 4.5) {};
		\node [style=none] (863) at (19, 2.5) {};
		\node [style=none] (864) at (18, 6.5) {};
		\node [style=none] (865) at (18, 4.5) {};
		\node [style=none] (866) at (18, 2.5) {};
		\node [style=none] (867) at (17.25, 4.5) {$\cdots$};
		\node [style=none] (871) at (10, 20.25) {Calc.};
		\node [style=none] (873) at (-10.5, 20.5) {};
		\node [style=none] (874) at (-10.5, 18.5) {};
		\node [style=none] (875) at (-10.5, 16.5) {};
		\node [style=none] (876) at (-10.5, 6.5) {};
		\node [style=none] (877) at (-10.5, 4.5) {};
		\node [style=none] (878) at (-10.5, 2.5) {};
		\node [style=none] (879) at (-16, 18.5) {\Huge{$\ket{\overline{\psi}}$}};
		\node [style=none] (880) at (-13, 20.75) {};
		\node [style=none] (881) at (-13, 15.75) {};
		\node [style=none] (882) at (-14, 18.25) {};
		\node [style=none] (883) at (-22.5, 8.25) {\Huge{$\left(\frac{1}{\sqrt{2}}\left[\ket{\overline{00}}+\ket{\overline{11}}\right]\right)^{\otimes k}$}};
		\node [style=none] (884) at (-13, 13.75) {};
		\node [style=none] (885) at (-13, 2.75) {};
		\node [style=none] (886) at (-15, 8.25) {};
		\node [style=none] (887) at (-8, 20.5) {};
		\node [style=none] (888) at (-7, 18.5) {};
		\node [style=none] (889) at (-6, 16.5) {};
		\node [style=none] (890) at (-8, 12.5) {};
		\node [style=none] (891) at (-7, 10.5) {};
		\node [style=none] (892) at (-6, 8.5) {};
		\node [style=cnot ctrl] (893) at (-8, 20.5) {};
		\node [style=cnot ctrl] (894) at (-7, 18.5) {};
		\node [style=cnot ctrl] (895) at (-6, 16.5) {};
		\node [style=cnot targ] (896) at (-8, 12.5) {};
		\node [style=cnot targ] (897) at (-7, 10.5) {};
		\node [style=cnot targ] (898) at (-6, 8.5) {};
		\node [style=none] (899) at (-8, 20.5) {};
		\node [style=none] (900) at (-7, 18.5) {};
		\node [style=none] (901) at (-6, 16.5) {};
		\node [style=none] (902) at (-8, 12.5) {};
		\node [style=none] (903) at (-7, 10.5) {};
		\node [style=none] (904) at (-6, 8.5) {};
		\node [style=none] (905) at (-9, 21.75) {};
		\node [style=none] (907) at (-9, 7.25) {};
		\node [style=none] (908) at (0, 7.25) {};
		\node [style=none] (909) at (0, 21.75) {};
		\node [style=none] (912) at (-12, 18.5) {\Huge D};
		\node [style=none] (913) at (-12, 10.5) {\Huge A};
		\node [style=none] (914) at (-12, 4.5) {\Huge B};
		\node [font={\Huge}, style=none] (915) at (10, 10.5) {$\overline{Z_{1}}$};
		\node [style=none] (916) at (10, 14.5) {\Large $\otimes$};
		\node [font={\Huge}, style=none] (917) at (14, 18.5) {$\overline{X_1}$};
		\node [style=none] (918) at (14, 20.25) {Calc.};
		\node [font={\Huge}, style=none] (919) at (14, 10.5) {$\overline{X_{1}}$};
		\node [style=none] (920) at (14, 14.5) {\Large $\otimes$};
		\node [font={\Huge}, style=none] (921) at (20.5, 18.5) {$\overline{Z_k}$};
		\node [style=none] (922) at (20.5, 20.25) {Calc.};
		\node [font={\Huge}, style=none] (923) at (20.5, 10.5) {$\overline{Z_k}$};
		\node [style=none] (924) at (20.5, 14.5) {\Large $\otimes$};
		\node [font={\Huge}, style=none] (925) at (24.5, 18.5) {$\overline{X_k}$};
		\node [style=none] (926) at (24.5, 20.25) {Calc.};
		\node [font={\Huge}, style=none] (927) at (24.5, 10.5) {$\overline{X_k}$};
		\node [style=none] (928) at (24.5, 14.5) {\Large $\otimes$};
	\end{pgfonlayer}
	\begin{pgfonlayer}{edgelayer}
		\draw (322.center)
			 to (323.center)
			 to (324.center)
			 to (321.center)
			 to cycle;
		\draw [style=double] (332.center) to (329.center);
		\draw [style=double] (333.center) to (330.center);
		\draw [style=double] (334.center) to (331.center);
		\draw [style=none] (338.center)
			 to (335.center)
			 to (336.center)
			 to (337.center)
			 to cycle;
		\draw [style=double] (342.center) to (339.center);
		\draw [style=double] (343.center) to (340.center);
		\draw [style=double] (344.center) to (341.center);
		\draw [style=double] (349.center) to (346.center);
		\draw [style=double] (350.center) to (347.center);
		\draw [style=double] (351.center) to (348.center);
		\draw [style=double] (355.center) to (352.center);
		\draw [style=double] (356.center) to (353.center);
		\draw [style=double] (357.center) to (354.center);
		\draw [style=not synthesisable] (363.center) to (365.center);
		\draw [style=not synthesisable] (364.center) to (360.center);
		\draw (362.center) to (361.center);
		\draw [in=0, out=180] (369.center) to (367.center);
		\draw [in=180, out=0] (368.center) to (369.center);
		\draw (617.center) to (616.center);
		\draw (616.center) to (618.center);
		\draw (618.center) to (619.center);
		\draw (619.center) to (617.center);
		\draw (622.center) to (621.center);
		\draw (621.center) to (623.center);
		\draw (623.center) to (624.center);
		\draw (624.center) to (622.center);
		\draw (627.center) to (626.center);
		\draw (626.center) to (628.center);
		\draw (628.center) to (629.center);
		\draw (629.center) to (627.center);
		\draw (632.center) to (631.center);
		\draw (634.center) to (633.center);
		\draw (636.center) to (635.center);
		\draw (670.center) to (669.center);
		\draw (672.center) to (671.center);
		\draw (674.center) to (673.center);
		\draw (698.center) to (699.center);
		\draw (698.center) to (701.center);
		\draw (701.center) to (700.center);
		\draw (700.center) to (699.center);
		\draw [bend left=90, looseness=1.75] (702.center) to (703.center);
		\draw (705.center) to (704.center);
		\draw (706.center) to (707.center);
		\draw (706.center) to (709.center);
		\draw (709.center) to (708.center);
		\draw (708.center) to (707.center);
		\draw [bend left=90, looseness=1.75] (710.center) to (711.center);
		\draw (713.center) to (712.center);
		\draw (714.center) to (715.center);
		\draw (714.center) to (717.center);
		\draw (717.center) to (716.center);
		\draw (716.center) to (715.center);
		\draw [bend left=90, looseness=1.75] (718.center) to (719.center);
		\draw (721.center) to (720.center);
		\draw (722.center) to (723.center);
		\draw (722.center) to (725.center);
		\draw (725.center) to (724.center);
		\draw (724.center) to (723.center);
		\draw [bend left=90, looseness=1.75] (726.center) to (727.center);
		\draw (729.center) to (728.center);
		\draw (730.center) to (731.center);
		\draw (730.center) to (733.center);
		\draw (733.center) to (732.center);
		\draw (732.center) to (731.center);
		\draw [bend left=90, looseness=1.75] (734.center) to (735.center);
		\draw (737.center) to (736.center);
		\draw (738.center) to (739.center);
		\draw (738.center) to (741.center);
		\draw (741.center) to (740.center);
		\draw (740.center) to (739.center);
		\draw [bend left=90, looseness=1.75] (742.center) to (743.center);
		\draw (745.center) to (744.center);
		\draw (747.center) to (746.center);
		\draw (746.center) to (749.center);
		\draw (749.center) to (748.center);
		\draw (748.center) to (747.center);
		\draw (755.center) to (754.center);
		\draw (754.center) to (757.center);
		\draw (757.center) to (756.center);
		\draw (756.center) to (755.center);
		\draw (759.center) to (751.center);
		\draw (760.center) to (752.center);
		\draw (761.center) to (753.center);
		\draw [style=none] (762.center)
			 to (763.center)
			 to (764.center)
			 to (765.center)
			 to cycle;
		\draw [style=double] (768.center) to (767.center);
		\draw [style=double] (770.center) to (769.center);
		\draw [style=double] (774.center) to (771.center);
		\draw [style=double] (775.center) to (772.center);
		\draw [style=double] (776.center) to (773.center);
		\draw [style=double] (780.center) to (777.center);
		\draw [style=double] (781.center) to (778.center);
		\draw [style=double] (782.center) to (779.center);
		\draw [style=none] (785.center)
			 to (786.center)
			 to (783.center)
			 to (784.center)
			 to cycle;
		\draw (789.center) to (788.center);
		\draw (788.center) to (791.center);
		\draw (791.center) to (790.center);
		\draw (790.center) to (789.center);
		\draw (797.center) to (796.center);
		\draw (796.center) to (799.center);
		\draw (799.center) to (798.center);
		\draw (798.center) to (797.center);
		\draw (801.center) to (793.center);
		\draw (802.center) to (794.center);
		\draw (803.center) to (795.center);
		\draw [style=double] (807.center) to (804.center);
		\draw [style=double] (808.center) to (805.center);
		\draw [style=double] (809.center) to (806.center);
		\draw [style=double] (813.center) to (810.center);
		\draw [style=double] (814.center) to (811.center);
		\draw [style=double] (815.center) to (812.center);
		\draw [style=none] (819.center)
			 to (816.center)
			 to (817.center)
			 to (818.center)
			 to cycle;
		\draw [style=double] (822.center) to (821.center);
		\draw [style=double] (824.center) to (823.center);
		\draw [style=double] (828.center) to (825.center);
		\draw [style=double] (829.center) to (826.center);
		\draw [style=double] (830.center) to (827.center);
		\draw [style=double] (834.center) to (831.center);
		\draw [style=double] (835.center) to (832.center);
		\draw [style=double] (836.center) to (833.center);
		\draw [style=none] (840.center)
			 to (837.center)
			 to (838.center)
			 to (839.center)
			 to cycle;
		\draw [style=double] (845.center) to (842.center);
		\draw [style=double] (846.center) to (843.center);
		\draw [style=double] (847.center) to (844.center);
		\draw [style=double] (851.center) to (848.center);
		\draw [style=double] (852.center) to (849.center);
		\draw [style=double] (853.center) to (850.center);
		\draw (858.center) to (855.center);
		\draw (859.center) to (856.center);
		\draw (860.center) to (857.center);
		\draw (864.center) to (861.center);
		\draw (865.center) to (862.center);
		\draw (866.center) to (863.center);
		\draw [style=double] (637.center) to (318.center);
		\draw [style=double] (638.center) to (319.center);
		\draw [style=double] (639.center) to (320.center);
		\draw [style=double] (675.center) to (326.center);
		\draw [style=double] (677.center) to (327.center);
		\draw [style=double] (676.center) to (328.center);
		\draw (611.center) to (875.center);
		\draw (610.center) to (874.center);
		\draw (609.center) to (873.center);
		\draw (413.center) to (876.center);
		\draw (414.center) to (877.center);
		\draw (415.center) to (878.center);
		\draw [in=-180, out=0] (882.center) to (880.center);
		\draw [in=0, out=-180] (881.center) to (882.center);
		\draw [in=-180, out=0] (886.center) to (884.center);
		\draw [in=0, out=-180] (885.center) to (886.center);
		\draw (893) to (896);
		\draw (894) to (897);
		\draw (895) to (898);
		\draw [style=Shaded] (909.center)
			 to (908.center)
			 to (907.center)
			 to (905.center)
			 to cycle;
	\end{pgfonlayer}
\end{tikzpicture}

%% file: figures/Tikz/fault_tolerant_modules_2.tikz
\begin{tikzpicture}
	\begin{pgfonlayer}{nodelayer}
		\node [style=none] (0) at (0, 10) {};
		\node [style=none] (1) at (2, 10) {};
		\node [style=none] (2) at (0, 4) {};
		\node [style=none] (3) at (2, 4) {};
		\node [style=none] (11) at (0, 10) {};
		\node [style=none] (12) at (1, 11.25) {};
		\node [style=none] (13) at (2, 10) {};
		\node [style=none] (14) at (1, 10.6) {\Large $f_3$};
		\node [style=none] (15) at (-3.75, 13) {};
		\node [style=none] (16) at (-1, 9) {};
		\node [style=none] (17) at (-1, 5) {};
		\node [style=none] (18) at (6.75, 13) {};
		\node [style=none] (19) at (3, 9) {};
		\node [style=none] (20) at (3, 5) {};
		\node [style=none] (21) at (-5, 9) {\huge $\ket{\overline{0}}$};
		\node [style=none] (22) at (-5, 5) {\huge $\ket{\overline{+}}$};
		\node [style=cnot ctrl] (24) at (1, 5) {};
		\node [style=cnot targ] (25) at (1, 9) {};
		\node [style=none] (26) at (-6, 10) {};
		\node [style=none] (27) at (-4, 10) {};
		\node [style=none] (28) at (-6, 8) {};
		\node [style=none] (29) at (-4, 8) {};
		\node [style=none] (30) at (-6, 10) {};
		\node [style=none] (31) at (-5, 11.25) {};
		\node [style=none] (32) at (-4, 10) {};
		\node [style=none] (33) at (-5, 10.6) {\Large $f_1$};
		\node [style=none] (35) at (-6, 6) {};
		\node [style=none] (36) at (-4, 6) {};
		\node [style=none] (37) at (-6, 4) {};
		\node [style=none] (38) at (-4, 4) {};
		\node [style=none] (39) at (-6, 6) {};
		\node [style=none] (40) at (-5, 7.25) {};
		\node [style=none] (41) at (-4, 6) {};
		\node [style=none] (42) at (-5, 6.6) {\Large $f_2$};
		\node [style=none] (60) at (3, 10) {};
		\node [style=none] (61) at (5, 10) {};
		\node [style=none] (62) at (3, 4) {};
		\node [style=none] (63) at (5, 4) {};
		\node [style=none] (64) at (4, 7) {\huge $P_3$};
		\node [style=none] (65) at (9.5, 9) {};
		\node [style=none] (66) at (12, 5) {};
		\node [style=none] (67) at (5, 9) {};
		\node [style=none] (68) at (5, 5) {};
		\node [style=none] (69) at (4, 10) {};
		\node [style=none] (70) at (4, 10.5) {};
		\node [style=none] (71) at (2, 10.5) {};
		\node [style=none] (73) at (-3, 9) {};
		\node [style=none] (74) at (-3, 10) {};
		\node [style=none] (75) at (-1, 10) {};
		\node [style=none] (76) at (-3, 8) {};
		\node [style=none] (77) at (-1, 8) {};
		\node [style=none] (78) at (-2, 9) {\huge $P_1$};
		\node [style=none] (80) at (-1, 9) {};
		\node [style=none] (81) at (-2, 10) {};
		\node [style=none] (82) at (-2, 10.5) {};
		\node [style=none] (83) at (-4, 10.5) {};
		\node [style=none] (95) at (-3, 5) {};
		\node [style=none] (96) at (-3, 6) {};
		\node [style=none] (97) at (-1, 6) {};
		\node [style=none] (98) at (-3, 4) {};
		\node [style=none] (99) at (-1, 4) {};
		\node [style=none] (100) at (-2, 5) {\huge $P_2$};
		\node [style=none] (101) at (-2, 6) {};
		\node [style=none] (102) at (-2, 6.5) {};
		\node [style=none] (103) at (-4, 6.5) {};
		\node [style=none] (105) at (-3.75, 9) {};
		\node [style=none] (106) at (-3.75, 5) {};
		\node [style=none] (107) at (-30, 13) {\huge $\ket{\psi}$};
		\node [style=none] (108) at (-30, 5) {\huge $\ket{+}$};
		\node [style=none] (109) at (-30, 9) {\huge $\ket{0}$};
		\node [style=none] (110) at (-20.75, 9) {};
		\node [style=none] (111) at (-29, 9) {};
		\node [style=none] (112) at (-29, 5) {};
		\node [style=none] (113) at (-18, 5) {};
		\node [style=cnot ctrl] (114) at (-27.5, 5) {};
		\node [style=cnot targ] (115) at (-27.5, 9) {};
		\node [style=none] (116) at (-29, 13) {};
		\node [style=none] (117) at (-23.5, 13) {};
		\node [style=none] (118) at (-20.75, 13) {};
		\node [style=none] (119) at (-19.25, 13) {};
		\node [style=none] (120) at (-20.75, 13.75) {};
		\node [style=none] (121) at (-19.25, 13.75) {};
		\node [style=none] (122) at (-19.25, 12.25) {};
		\node [style=none] (123) at (-20.75, 12.25) {};
		\node [style=none] (124) at (-20.625, 12.75) {};
		\node [style=none] (125) at (-19.375, 12.75) {};
		\node [style=none] (126) at (-20, 12.75) {};
		\node [style=none] (127) at (-19.5, 13.5) {};
		\node [style=none] (128) at (-23.5, 13) {};
		\node [style=none] (129) at (-23.5, 12.25) {};
		\node [style=none] (130) at (-23.5, 13.75) {};
		\node [style=none] (131) at (-22, 12.25) {};
		\node [style=none] (132) at (-22, 13.75) {};
		\node [font={\Large}, style=none] (133) at (-22.75, 13) {\huge $H$};
		\node [style=none] (134) at (-22, 13) {};
		\node [style=none] (135) at (-20.75, 9) {};
		\node [style=none] (136) at (-19.25, 9) {};
		\node [style=none] (137) at (-20.75, 9.75) {};
		\node [style=none] (138) at (-19.25, 9.75) {};
		\node [style=none] (139) at (-19.25, 8.25) {};
		\node [style=none] (140) at (-20.75, 8.25) {};
		\node [style=none] (141) at (-20.625, 8.75) {};
		\node [style=none] (142) at (-19.375, 8.75) {};
		\node [style=none] (143) at (-20, 8.75) {};
		\node [style=none] (144) at (-19.5, 9.5) {};
		\node [style=cnot ctrl] (145) at (-24.25, 13) {};
		\node [style=cnot targ] (146) at (-24.25, 9) {};
		\node [style=none] (147) at (-18, 6) {};
		\node [style=none] (148) at (-18, 4) {};
		\node [style=none] (149) at (-16, 6) {};
		\node [style=none] (150) at (-16, 4) {};
		\node [style=none] (151) at (-17, 5) {\huge $X$};
		\node [style=none] (152) at (-15, 6) {};
		\node [style=none] (153) at (-13, 6) {};
		\node [style=none] (154) at (-15, 4) {};
		\node [style=none] (155) at (-13, 4) {};
		\node [style=none] (156) at (-14, 5) {\huge $Z$};
		\node [style=none] (157) at (-17, 6) {};
		\node [style=none] (158) at (-14, 6) {};
		\node [style=none] (159) at (-17, 9) {};
		\node [style=none] (160) at (-14, 13) {};
		\node [style=none] (161) at (-16, 5) {};
		\node [style=none] (162) at (-15, 5) {};
		\node [style=none] (163) at (-15, 5) {};
		\node [style=none] (164) at (-11, 5) {};
		\node [style=none] (165) at (-13, 5) {};
		\node [style=none] (166) at (-10, 5) {\huge $\ket{\psi}$};
		\node [style=none] (167) at (-8, 9) {\Huge \huge $\rightarrow$};
		\node [style=none] (170) at (9.5, 13) {};
		\node [style=none] (171) at (11, 13) {};
		\node [style=none] (172) at (9.5, 13.75) {};
		\node [style=none] (173) at (11, 13.75) {};
		\node [style=none] (174) at (11, 12.25) {};
		\node [style=none] (175) at (9.5, 12.25) {};
		\node [style=none] (176) at (9.625, 12.75) {};
		\node [style=none] (177) at (10.875, 12.75) {};
		\node [style=none] (178) at (10.25, 12.75) {};
		\node [style=none] (179) at (10.75, 13.5) {};
		\node [style=none] (181) at (6.75, 12.25) {};
		\node [style=none] (182) at (6.75, 13.75) {};
		\node [style=none] (183) at (8.25, 12.25) {};
		\node [style=none] (184) at (8.25, 13.75) {};
		\node [font={\Large}, style=none] (185) at (7.5, 13) {\huge $H$};
		\node [style=none] (186) at (8.25, 13) {};
		\node [style=none] (188) at (11, 9) {};
		\node [style=none] (189) at (9.5, 9.75) {};
		\node [style=none] (190) at (11, 9.75) {};
		\node [style=none] (191) at (11, 8.25) {};
		\node [style=none] (192) at (9.5, 8.25) {};
		\node [style=none] (193) at (9.625, 8.75) {};
		\node [style=none] (194) at (10.875, 8.75) {};
		\node [style=none] (195) at (10.25, 8.75) {};
		\node [style=none] (196) at (10.75, 9.5) {};
		\node [style=cnot ctrl] (197) at (6, 13) {};
		\node [style=cnot targ] (198) at (6, 9) {};
		\node [style=none] (199) at (5.5, 14.5) {};
		\node [style=none] (200) at (11.5, 14.5) {};
		\node [style=none] (201) at (5.5, 7.5) {};
		\node [style=none] (202) at (11.5, 7.5) {};
		\node [style=none] (203) at (7.5, 14.5) {};
		\node [style=none] (204) at (8.5, 15.75) {};
		\node [style=none] (205) at (9.5, 14.5) {};
		\node [style=none] (206) at (8.5, 15.1) {\Large $f_4$};
		\node [style=none] (211) at (12, 6) {};
		\node [style=none] (212) at (19, 6) {};
		\node [style=none] (213) at (12, 4) {};
		\node [style=none] (214) at (19, 4) {};
		\node [style=none] (215) at (16.25, 5) {\huge $\overline{X}$};
		\node [style=none] (218) at (14.5, 6) {};
		\node [style=none] (219) at (14.5, 15) {};
		\node [style=none] (220) at (9.5, 15) {};
		\node [style=none] (221) at (18, 5) {\huge $\overline{Z}$};
		\node [style=none] (222) at (21, 5) {};
		\node [style=none] (223) at (19, 5) {};
		\node [style=none] (224) at (13.75, 5) {\huge $P_4 = $};
		\node [style=none] (225) at (17, 4.25) {\Large $,$};
	\end{pgfonlayer}
	\begin{pgfonlayer}{edgelayer}
		\draw [style=dashed] (0.center)
			 to (1.center)
			 to (3.center)
			 to (2.center)
			 to cycle;
		\draw [in=0, out=90, looseness=1.50] (13.center) to (12.center);
		\draw [in=90, out=-180, looseness=1.50] (12.center) to (11.center);
		\draw (18.center) to (15.center);
		\draw (19.center) to (16.center);
		\draw (17.center) to (20.center);
		\draw (25) to (24);
		\draw [style=dashed] (26.center)
			 to (27.center)
			 to (29.center)
			 to (28.center)
			 to cycle;
		\draw [in=0, out=90, looseness=1.50] (32.center) to (31.center);
		\draw [in=90, out=-180, looseness=1.50] (31.center) to (30.center);
		\draw [style=dashed] (35.center) to (36.center);
		\draw [style=dashed] (36.center) to (38.center);
		\draw [style=dashed] (38.center) to (37.center);
		\draw [style=dashed] (37.center) to (35.center);
		\draw [in=0, out=90, looseness=1.50] (41.center) to (40.center);
		\draw [in=90, out=-180, looseness=1.50] (40.center) to (39.center);
		\draw (63.center) to (62.center);
		\draw (62.center) to (60.center);
		\draw (60.center) to (61.center);
		\draw (61.center) to (63.center);
		\draw (66.center) to (68.center);
		\draw (65.center) to (67.center);
		\draw [style=double edge] (71.center)
			 to (70.center)
			 to (69.center);
		\draw (77.center) to (76.center);
		\draw (76.center) to (74.center);
		\draw (74.center) to (75.center);
		\draw (75.center) to (77.center);
		\draw [style=double edge] (83.center)
			 to (82.center)
			 to (81.center);
		\draw (99.center) to (98.center);
		\draw (98.center) to (96.center);
		\draw (96.center) to (97.center);
		\draw (97.center) to (99.center);
		\draw [style=double edge] (103.center)
			 to (102.center)
			 to (101.center);
		\draw (105.center) to (73.center);
		\draw (106.center) to (95.center);
		\draw (112.center) to (113.center);
		\draw (111.center) to (110.center);
		\draw (114) to (115);
		\draw (116.center) to (117.center);
		\draw (120.center) to (121.center);
		\draw (120.center) to (123.center);
		\draw (123.center) to (122.center);
		\draw (122.center) to (121.center);
		\draw [bend left=90, looseness=1.75] (124.center) to (125.center);
		\draw (127.center) to (126.center);
		\draw (130.center) to (129.center);
		\draw (129.center) to (131.center);
		\draw (131.center) to (132.center);
		\draw (132.center) to (130.center);
		\draw (137.center) to (138.center);
		\draw (137.center) to (140.center);
		\draw (140.center) to (139.center);
		\draw (139.center) to (138.center);
		\draw [bend left=90, looseness=1.75] (141.center) to (142.center);
		\draw (144.center) to (143.center);
		\draw (134.center) to (118.center);
		\draw (145) to (146);
		\draw (150.center) to (148.center);
		\draw (148.center) to (147.center);
		\draw (147.center) to (149.center);
		\draw (149.center) to (150.center);
		\draw (155.center) to (154.center);
		\draw (154.center) to (152.center);
		\draw (152.center) to (153.center);
		\draw (153.center) to (155.center);
		\draw [style=double] (136.center)
			 to (159.center)
			 to (157.center);
		\draw [style=double] (119.center)
			 to (160.center)
			 to (158.center);
		\draw (165.center) to (164.center);
		\draw (161.center) to (163.center);
		\draw (172.center) to (173.center);
		\draw (172.center) to (175.center);
		\draw (175.center) to (174.center);
		\draw (174.center) to (173.center);
		\draw [bend left=90, looseness=1.75] (176.center) to (177.center);
		\draw (179.center) to (178.center);
		\draw (182.center) to (181.center);
		\draw (181.center) to (183.center);
		\draw (183.center) to (184.center);
		\draw (184.center) to (182.center);
		\draw (189.center) to (190.center);
		\draw (189.center) to (192.center);
		\draw (192.center) to (191.center);
		\draw (191.center) to (190.center);
		\draw [bend left=90, looseness=1.75] (193.center) to (194.center);
		\draw (196.center) to (195.center);
		\draw (186.center) to (170.center);
		\draw (197) to (198);
		\draw [style=dashed] (199.center)
			 to (200.center)
			 to (202.center)
			 to (201.center)
			 to cycle;
		\draw [in=0, out=90, looseness=1.50] (205.center) to (204.center);
		\draw [in=90, out=-180, looseness=1.50] (204.center) to (203.center);
		\draw (214.center) to (213.center);
		\draw (213.center) to (211.center);
		\draw (211.center) to (212.center);
		\draw (212.center) to (214.center);
		\draw [style=double edge] (220.center)
			 to (219.center)
			 to (218.center);
		\draw (223.center) to (222.center);
	\end{pgfonlayer}
\end{tikzpicture}

%% file: figures/Tikz/bell_state_preparation.tikz
\begin{tikzpicture}
	\begin{pgfonlayer}{nodelayer}
		\node [style=none] (0) at (-9, 12.25) {$\ket{0}$};
		\node [style=none] (1) at (-9, 11.25) {$\ket{0}$};
		\node [style=none] (2) at (-9, 9.25) {$\ket{0}$};
		\node [style=none] (3) at (-9, 10.5) {$\vdots$};
		\node [style=none] (4) at (-8, 12.25) {};
		\node [style=none] (5) at (-8, 11.25) {};
		\node [style=none] (7) at (-8, 9.25) {};
		\node [style=none] (8) at (-7, 12.25) {};
		\node [style=none] (9) at (-7, 11.25) {};
		\node [style=none] (10) at (-7, 9.25) {};
		\node [style=none] (11) at (-9, 6.25) {$\ket{+}$};
		\node [style=none] (12) at (-9, 5.25) {$\ket{+}$};
		\node [style=none] (13) at (-9, 3.25) {$\ket{+}$};
		\node [style=none] (14) at (-9, 4.5) {$\vdots$};
		\node [style=none] (15) at (-8, 6.25) {};
		\node [style=none] (16) at (-8, 5.25) {};
		\node [style=none] (17) at (-8, 3.25) {};
		\node [style=none] (18) at (-7, 6.25) {};
		\node [style=none] (19) at (-7, 5.25) {};
		\node [style=none] (20) at (-7, 3.25) {};
		\node [style=cnot ctrl] (21) at (7, 6.25) {};
		\node [style=cnot ctrl] (22) at (8, 5.25) {};
		\node [style=cnot targ] (24) at (7, 12.25) {};
		\node [style=cnot targ] (25) at (8, 11.25) {};
		\node [style=cnot targ] (26) at (10, 9.25) {};
		\node [style=none] (27) at (-7, 12.75) {};
		\node [style=none] (28) at (-5.5, 12.75) {};
		\node [style=none] (29) at (-7, 8.75) {};
		\node [style=none] (30) at (-5.5, 8.75) {};
		\node [style=none] (31) at (-6.25, 10.75) {\Large $\mathcal{S}_X$};
		\node [style=none] (32) at (-5, 12.75) {};
		\node [style=none] (33) at (-3.5, 12.75) {};
		\node [style=none] (34) at (-5, 8.75) {};
		\node [style=none] (35) at (-3.5, 8.75) {};
		\node [style=none] (36) at (-4.25, 10.75) {\Large $\mathcal{S}_Z$};
		\node [style=none] (88) at (-7, 6.75) {};
		\node [style=none] (89) at (-5.5, 6.75) {};
		\node [style=none] (90) at (-7, 2.75) {};
		\node [style=none] (91) at (-5.5, 2.75) {};
		\node [style=none] (92) at (-6.25, 4.75) {\Large $\mathcal{S}_X$};
		\node [style=none] (93) at (-5, 6.75) {};
		\node [style=none] (94) at (-3.5, 6.75) {};
		\node [style=none] (95) at (-5, 2.75) {};
		\node [style=none] (96) at (-3.5, 2.75) {};
		\node [style=none] (97) at (-4.25, 4.75) {\Large $\mathcal{S}_Z$};
		\node [style=none] (98) at (-5.5, 12.25) {};
		\node [style=none] (99) at (-5.5, 11.25) {};
		\node [style=none] (100) at (-5.5, 9.25) {};
		\node [style=none] (101) at (-5, 12.25) {};
		\node [style=none] (102) at (-5, 11.25) {};
		\node [style=none] (103) at (-5, 9.25) {};
		\node [style=none] (110) at (-5.5, 6.25) {};
		\node [style=none] (111) at (-5.5, 5.25) {};
		\node [style=none] (112) at (-5.5, 3.25) {};
		\node [style=none] (113) at (-5, 6.25) {};
		\node [style=none] (114) at (-5, 5.25) {};
		\node [style=none] (115) at (-5, 3.25) {};
		\node [style=none] (116) at (-3.5, 12.25) {};
		\node [style=none] (117) at (-3.5, 11.25) {};
		\node [style=none] (118) at (-3.5, 9.25) {};
		\node [style=none] (119) at (-3, 12.25) {};
		\node [style=none] (120) at (-3, 11.25) {};
		\node [style=none] (121) at (-3, 9.25) {};
		\node [style=none] (122) at (-3.5, 6.25) {};
		\node [style=none] (123) at (-3.5, 5.25) {};
		\node [style=none] (124) at (-3.5, 3.25) {};
		\node [style=none] (125) at (-3, 6.25) {};
		\node [style=none] (126) at (-3, 5.25) {};
		\node [style=none] (127) at (-3, 3.25) {};
		\node [style=none] (128) at (-3, 12.75) {};
		\node [style=none] (129) at (-1.5, 12.75) {};
		\node [style=none] (130) at (-3, 8.75) {};
		\node [style=none] (131) at (-1.5, 8.75) {};
		\node [style=none] (132) at (-2.25, 10.75) {\Large $\mathcal{S}_X$};
		\node [style=none] (133) at (-1, 12.75) {};
		\node [style=none] (134) at (0.5, 12.75) {};
		\node [style=none] (135) at (-1, 8.75) {};
		\node [style=none] (136) at (0.5, 8.75) {};
		\node [style=none] (137) at (-0.25, 10.75) {\Large $\mathcal{S}_Z$};
		\node [style=none] (138) at (-3, 6.75) {};
		\node [style=none] (139) at (-1.5, 6.75) {};
		\node [style=none] (140) at (-3, 2.75) {};
		\node [style=none] (141) at (-1.5, 2.75) {};
		\node [style=none] (142) at (-2.25, 4.75) {\Large $\mathcal{S}_X$};
		\node [style=none] (143) at (-1, 6.75) {};
		\node [style=none] (144) at (0.5, 6.75) {};
		\node [style=none] (145) at (-1, 2.75) {};
		\node [style=none] (146) at (0.5, 2.75) {};
		\node [style=none] (147) at (-0.25, 4.75) {\Large $\mathcal{S}_Z$};
		\node [style=none] (148) at (-1.5, 12.25) {};
		\node [style=none] (149) at (-1.5, 11.25) {};
		\node [style=none] (150) at (-1.5, 9.25) {};
		\node [style=none] (151) at (-1, 12.25) {};
		\node [style=none] (152) at (-1, 11.25) {};
		\node [style=none] (153) at (-1, 9.25) {};
		\node [style=none] (154) at (-1.5, 6.25) {};
		\node [style=none] (155) at (-1.5, 5.25) {};
		\node [style=none] (156) at (-1.5, 3.25) {};
		\node [style=none] (157) at (-1, 6.25) {};
		\node [style=none] (158) at (-1, 5.25) {};
		\node [style=none] (159) at (-1, 3.25) {};
		\node [style=none] (160) at (2, 12.25) {};
		\node [style=none] (161) at (2, 11.25) {};
		\node [style=none] (162) at (2, 9.25) {};
		\node [style=none] (163) at (2.5, 12.25) {};
		\node [style=none] (164) at (2.5, 11.25) {};
		\node [style=none] (165) at (2.5, 9.25) {};
		\node [style=none] (166) at (2, 6.25) {};
		\node [style=none] (167) at (2, 5.25) {};
		\node [style=none] (168) at (2, 3.25) {};
		\node [style=none] (169) at (2.5, 6.25) {};
		\node [style=none] (170) at (2.5, 5.25) {};
		\node [style=none] (171) at (2.5, 3.25) {};
		\node [style=none] (172) at (2.5, 12.75) {};
		\node [style=none] (173) at (4, 12.75) {};
		\node [style=none] (174) at (2.5, 8.75) {};
		\node [style=none] (175) at (4, 8.75) {};
		\node [style=none] (176) at (3.25, 10.75) {\Large $\mathcal{S}_X$};
		\node [style=none] (177) at (4.5, 12.75) {};
		\node [style=none] (178) at (6, 12.75) {};
		\node [style=none] (179) at (4.5, 8.75) {};
		\node [style=none] (180) at (6, 8.75) {};
		\node [style=none] (181) at (5.25, 10.75) {\Large $\mathcal{S}_Z$};
		\node [style=none] (182) at (2.5, 6.75) {};
		\node [style=none] (183) at (4, 6.75) {};
		\node [style=none] (184) at (2.5, 2.75) {};
		\node [style=none] (185) at (4, 2.75) {};
		\node [style=none] (186) at (3.25, 4.75) {\Large $\mathcal{S}_X$};
		\node [style=none] (187) at (4.5, 6.75) {};
		\node [style=none] (188) at (6, 6.75) {};
		\node [style=none] (189) at (4.5, 2.75) {};
		\node [style=none] (190) at (6, 2.75) {};
		\node [style=none] (191) at (5.25, 4.75) {\Large $\mathcal{S}_Z$};
		\node [style=none] (192) at (4, 12.25) {};
		\node [style=none] (193) at (4, 11.25) {};
		\node [style=none] (194) at (4, 9.25) {};
		\node [style=none] (195) at (4.5, 12.25) {};
		\node [style=none] (196) at (4.5, 11.25) {};
		\node [style=none] (197) at (4.5, 9.25) {};
		\node [style=none] (198) at (4, 6.25) {};
		\node [style=none] (199) at (4, 5.25) {};
		\node [style=none] (200) at (4, 3.25) {};
		\node [style=none] (201) at (4.5, 6.25) {};
		\node [style=none] (202) at (4.5, 5.25) {};
		\node [style=none] (203) at (4.5, 3.25) {};
		\node [style=none] (204) at (0.5, 12.25) {};
		\node [style=none] (205) at (0.5, 11.25) {};
		\node [style=none] (206) at (0.5, 9.25) {};
		\node [style=none] (207) at (1, 12.25) {};
		\node [style=none] (208) at (1, 11.25) {};
		\node [style=none] (209) at (1, 9.25) {};
		\node [style=none] (210) at (0.5, 6.25) {};
		\node [style=none] (211) at (0.5, 5.25) {};
		\node [style=none] (212) at (0.5, 3.25) {};
		\node [style=none] (213) at (1, 6.25) {};
		\node [style=none] (214) at (1, 5.25) {};
		\node [style=none] (215) at (1, 3.25) {};
		\node [style=none] (216) at (1.5, 10.25) {$\cdots$};
		\node [style=none] (217) at (1.5, 4.25) {$\cdots$};
		\node [style=cnot ctrl] (218) at (10, 3.25) {};
		\node [style=none] (219) at (6, 12.25) {};
		\node [style=none] (220) at (6, 11.25) {};
		\node [style=none] (221) at (6, 9.25) {};
		\node [style=none] (222) at (11, 12.25) {};
		\node [style=none] (223) at (11, 11.25) {};
		\node [style=none] (224) at (11, 9.25) {};
		\node [style=none] (225) at (6, 6.25) {};
		\node [style=none] (226) at (6, 5.25) {};
		\node [style=none] (227) at (6, 3.25) {};
		\node [style=none] (228) at (11, 6.25) {};
		\node [style=none] (229) at (11, 5.25) {};
		\node [style=none] (230) at (11, 3.25) {};
		\node [style=none] (232) at (0.5, 0) {\Huge = \(\left(\frac{1}{\sqrt{2}}\left[\ket{\overline{00}}+\ket{\overline{11}}\right]\right)^{\otimes k}\)};
		\node [style=none] (233) at (-7.5, 13.25) {};
		\node [style=none] (234) at (6.5, 13.25) {};
		\node [style=none] (235) at (-0.5, 14.25) {};
		\node [style=none] (236) at (-0.5, 15.25) {$d$ rounds};
		\node [style=none] (237) at (9, 10.25) {$\vdots$};
		\node [style=none] (238) at (9, 4.25) {$\vdots$};
	\end{pgfonlayer}
	\begin{pgfonlayer}{edgelayer}
		\draw (8.center) to (4.center);
		\draw (9.center) to (5.center);
		\draw (10.center) to (7.center);
		\draw (18.center) to (15.center);
		\draw (19.center) to (16.center);
		\draw (20.center) to (17.center);
		\draw (24) to (21);
		\draw (22) to (25);
		\draw (27.center)
			 to (28.center)
			 to (30.center)
			 to (29.center)
			 to cycle;
		\draw (32.center)
			 to (33.center)
			 to (35.center)
			 to (34.center)
			 to cycle;
		\draw (88.center)
			 to (89.center)
			 to (91.center)
			 to (90.center)
			 to cycle;
		\draw (93.center) to (94.center);
		\draw (94.center) to (96.center);
		\draw (96.center) to (95.center);
		\draw (95.center) to (93.center);
		\draw (101.center) to (98.center);
		\draw (102.center) to (99.center);
		\draw (103.center) to (100.center);
		\draw (113.center) to (110.center);
		\draw (114.center) to (111.center);
		\draw (115.center) to (112.center);
		\draw (119.center) to (116.center);
		\draw (120.center) to (117.center);
		\draw (121.center) to (118.center);
		\draw (125.center) to (122.center);
		\draw (126.center) to (123.center);
		\draw (127.center) to (124.center);
		\draw (128.center)
			 to (129.center)
			 to (131.center)
			 to (130.center)
			 to cycle;
		\draw (133.center) to (134.center);
		\draw (134.center) to (136.center);
		\draw (136.center) to (135.center);
		\draw (135.center) to (133.center);
		\draw (138.center) to (139.center);
		\draw (139.center) to (141.center);
		\draw (141.center) to (140.center);
		\draw (140.center) to (138.center);
		\draw (143.center) to (144.center);
		\draw (144.center) to (146.center);
		\draw (146.center) to (145.center);
		\draw (145.center) to (143.center);
		\draw (151.center) to (148.center);
		\draw (152.center) to (149.center);
		\draw (153.center) to (150.center);
		\draw (157.center) to (154.center);
		\draw (158.center) to (155.center);
		\draw (159.center) to (156.center);
		\draw (163.center) to (160.center);
		\draw (164.center) to (161.center);
		\draw (165.center) to (162.center);
		\draw (169.center) to (166.center);
		\draw (170.center) to (167.center);
		\draw (171.center) to (168.center);
		\draw (172.center)
			 to (173.center)
			 to (175.center)
			 to (174.center)
			 to cycle;
		\draw (177.center) to (178.center);
		\draw (178.center) to (180.center);
		\draw (180.center) to (179.center);
		\draw (179.center) to (177.center);
		\draw (182.center) to (183.center);
		\draw (183.center) to (185.center);
		\draw (185.center) to (184.center);
		\draw (184.center) to (182.center);
		\draw (187.center) to (188.center);
		\draw (188.center) to (190.center);
		\draw (190.center) to (189.center);
		\draw (189.center) to (187.center);
		\draw (195.center) to (192.center);
		\draw (196.center) to (193.center);
		\draw (197.center) to (194.center);
		\draw (201.center) to (198.center);
		\draw (202.center) to (199.center);
		\draw (203.center) to (200.center);
		\draw (207.center) to (204.center);
		\draw (208.center) to (205.center);
		\draw (209.center) to (206.center);
		\draw (213.center) to (210.center);
		\draw (214.center) to (211.center);
		\draw (215.center) to (212.center);
		\draw (218) to (26);
		\draw (222.center) to (219.center);
		\draw (223.center) to (220.center);
		\draw (224.center) to (221.center);
		\draw (228.center) to (225.center);
		\draw (229.center) to (226.center);
		\draw (230.center) to (227.center);
		\draw [in=90, out=-90, looseness=0.50] (235.center) to (233.center);
		\draw [in=90, out=-90, looseness=0.50] (235.center) to (234.center);
	\end{pgfonlayer}
\end{tikzpicture}

%% file: figures/Tikz/stabilizer_measurement_noise_model.tikz
\begin{tikzpicture}
	\begin{pgfonlayer}{nodelayer}
		\node [style=none] (213) at (-9.5, 6.25) {$\ket{0}$};
		\node [style=none] (214) at (-9.5, 4.25) {$\ket{0}$};
		\node [style=none] (215) at (-9.5, 2.25) {$\ket{0}$};
		\node [style=none] (216) at (-9.5, 3.5) {$\vdots$};
		\node [style=none] (217) at (-8.5, 6.25) {};
		\node [style=none] (218) at (-8.5, 4.25) {};
		\node [style=none] (220) at (-8.5, 2.25) {};
		\node [style=none] (221) at (-8, 6.25) {};
		\node [style=none] (222) at (-8, 4.25) {};
		\node [style=none] (223) at (-8, 2.25) {};
		\node [style=none] (224) at (-9.5, 0.25) {$\ket{+}$};
		\node [style=none] (225) at (-6.5, 0.25) {};
		\node [style=none] (226) at (-6.5, 6.25) {};
		\node [style=none] (227) at (-6.5, 4.25) {};
		\node [style=none] (228) at (-6.5, 2.25) {};
		\node [font={\Large}, style=none] (229) at (-7.25, 6.25) {$X$};
		\node [font={\Large}, style=none] (230) at (-7.25, 4.25) {$X$};
		\node [style=none] (231) at (-8, 3) {};
		\node [style=none] (232) at (-6.5, 3) {};
		\node [style=none] (233) at (-8, 1.5) {};
		\node [style=none] (234) at (-6.5, 1.5) {};
		\node [font={\Large}, style=none] (235) at (-7.25, 2.25) {$X$};
		\node [style=none] (239) at (-8, 5) {};
		\node [style=none] (240) at (-6.5, 5) {};
		\node [style=none] (241) at (-8, 3.5) {};
		\node [style=none] (242) at (-6.5, 3.5) {};
		\node [style=none] (243) at (-6.5, 5.5) {};
		\node [style=none] (244) at (-8, 5.5) {};
		\node [style=none] (245) at (-8, 7) {};
		\node [style=none] (246) at (-6.5, 7) {};
		\node [style=none] (247) at (-3.75, 6.25) {};
		\node [style=none] (248) at (-0.75, 4.25) {};
		\node [style=none] (249) at (3.25, 2.25) {};
		\node [style=none] (250) at (-3.75, 0.25) {};
		\node [style=none] (251) at (-4.5, 6.25) {};
		\node [style=none] (252) at (-1.5, 4.25) {};
		\node [style=none] (253) at (2.5, 2.25) {};
		\node [style=cnot ctrl] (254) at (-4.5, 0.25) {};
		\node [style=cnot ctrl] (255) at (-1.5, 0.25) {};
		\node [style=cnot ctrl] (256) at (2.5, 0.25) {};
		\node [style=cnot targ] (257) at (-4.5, 6.25) {};
		\node [style=cnot targ] (258) at (-1.5, 4.25) {};
		\node [style=cnot targ] (259) at (2.5, 2.25) {};
		\node [style=none] (260) at (-2.25, 6.25) {};
		\node [style=none] (261) at (0.75, 4.25) {};
		\node [style=none] (262) at (4.75, 2.25) {};
		\node [style=none] (263) at (-2.25, 0.25) {};
		\node [style=none] (264) at (0.75, 0.25) {};
		\node [style=none] (265) at (4.75, 0.25) {};
		\node [style=none] (266) at (-3.75, 7) {};
		\node [style=none] (267) at (-3.75, 5.5) {};
		\node [style=none] (268) at (-2.25, 5.5) {};
		\node [style=none] (269) at (-2.25, 7) {};
		\node [font={\Large}, style=none] (270) at (-3, 6.25) {$D_{2}$};
		\node [style=none] (273) at (-0.75, 5) {};
		\node [style=none] (274) at (0.75, 5) {};
		\node [style=none] (275) at (-0.75, 3.5) {};
		\node [style=none] (276) at (0.75, 3.5) {};
		\node [font={\Large}, style=none] (279) at (0, 4.25) {$D_{2}$};
		\node [style=none] (280) at (3.25, 3) {};
		\node [style=none] (281) at (4.75, 3) {};
		\node [style=none] (282) at (3.25, 1.5) {};
		\node [style=none] (283) at (4.75, 1.5) {};
		\node [font={\Large}, style=none] (286) at (4, 2.25) {$D_{2}$};
		\node [font={\Large}, style=none] (290) at (-3, 0.25) {$D_{2}$};
		\node [font={\Large}, style=none] (295) at (0, 0.25) {$D_{2}$};
		\node [style=none] (296) at (3.25, 1) {};
		\node [style=none] (297) at (4.75, 1) {};
		\node [style=none] (298) at (3.25, -0.5) {};
		\node [style=none] (299) at (4.75, -0.5) {};
		\node [font={\Large}, style=none] (302) at (4, 0.25) {$D_{2}$};
		\node [style=none] (304) at (-0.75, 0.25) {};
		\node [style=none] (305) at (3.25, 0.25) {};
		\node [style=none] (306) at (-0.75, 1) {};
		\node [style=none] (307) at (0.75, 1) {};
		\node [style=none] (308) at (-0.75, -0.5) {};
		\node [style=none] (309) at (0.75, -0.5) {};
		\node [style=none] (310) at (-2.25, -0.5) {};
		\node [style=none] (311) at (-3.75, -0.5) {};
		\node [style=none] (312) at (-3.75, 1) {};
		\node [style=none] (313) at (-2.25, 1) {};
		\node [style=none] (314) at (-3, 5.5) {};
		\node [style=none] (315) at (-3, 1) {};
		\node [style=none] (316) at (0, 3.5) {};
		\node [style=none] (317) at (0, 1) {};
		\node [style=none] (318) at (4, 1.5) {};
		\node [style=none] (319) at (4, 1) {};
		\node [style=none] (320) at (6.5, 0.25) {};
		\node [style=none] (321) at (12, 2.25) {};
		\node [style=none] (322) at (12, 4.25) {};
		\node [style=none] (323) at (12, 6.25) {};
		\node [style=none] (325) at (-8, 1) {};
		\node [style=none] (326) at (-6.5, 1) {};
		\node [style=none] (327) at (-8, -0.5) {};
		\node [style=none] (328) at (-6.5, -0.5) {};
		\node [style=none] (329) at (-8.5, 0.25) {};
		\node [font={\Large}, style=none] (330) at (-7.25, 0.25) {$Z$};
		\node [style=none] (331) at (-8, 0.25) {};
		\node [style=none] (332) at (8, 0.25) {};
		\node [style=none] (333) at (8.5, 0.25) {};
		\node [style=none] (334) at (8.5, -0.5) {};
		\node [style=none] (335) at (8.5, 1) {};
		\node [style=none] (336) at (10, -0.5) {};
		\node [style=none] (337) at (10, 1) {};
		\node [font={\Large}, style=none] (338) at (9.25, 0.25) {H};
		\node [style=none] (339) at (10.5, 0.25) {};
		\node [style=none] (340) at (10, 0.25) {};
		\node [style=none] (341) at (12, 0.25) {};
		\node [style=none] (342) at (6.5, 1) {};
		\node [style=none] (343) at (8, 1) {};
		\node [style=none] (344) at (6.5, -0.5) {};
		\node [style=none] (345) at (8, -0.5) {};
		\node [font={\Large}, style=none] (347) at (7.25, 0.25) {$Z$};
		\node [style=none] (348) at (10.5, 1) {};
		\node [style=none] (349) at (12, 1) {};
		\node [style=none] (350) at (12, -0.5) {};
		\node [style=none] (351) at (10.5, -0.5) {};
		\node [style=none] (352) at (10.625, 0) {};
		\node [style=none] (353) at (11.875, 0) {};
		\node [style=none] (354) at (11.25, 0) {};
		\node [style=none] (355) at (11.75, 0.75) {};
		\node [style=none] (356) at (1.5, 3.25) {$\vdots$};
	\end{pgfonlayer}
	\begin{pgfonlayer}{edgelayer}
		\draw (221.center) to (217.center);
		\draw (222.center) to (218.center);
		\draw (223.center) to (220.center);
		\draw [style=surface X] (233.center)
			 to (234.center)
			 to (232.center)
			 to (231.center)
			 to cycle;
		\draw [style=surface X] (245.center)
			 to (246.center)
			 to (243.center)
			 to (244.center)
			 to cycle;
		\draw [style=surface X] (239.center)
			 to (240.center)
			 to (242.center)
			 to (241.center)
			 to cycle;
		\draw (247.center) to (226.center);
		\draw (248.center) to (227.center);
		\draw (249.center) to (228.center);
		\draw (250.center) to (225.center);
		\draw (254) to (257);
		\draw (255) to (258);
		\draw (256) to (259);
		\draw [style=surface X] (269.center)
			 to (266.center)
			 to (267.center)
			 to (268.center)
			 to cycle;
		\draw [style=surface X] (275.center)
			 to (276.center)
			 to (274.center)
			 to (273.center)
			 to cycle;
		\draw [style=surface X] (280.center)
			 to (282.center)
			 to (283.center)
			 to (281.center)
			 to cycle;
		\draw [style=surface X] (298.center)
			 to (299.center)
			 to (297.center)
			 to (296.center)
			 to cycle;
		\draw [style=surface X] (312.center)
			 to (313.center)
			 to (310.center)
			 to (311.center)
			 to cycle;
		\draw [style=surface X] (306.center)
			 to (307.center)
			 to (309.center)
			 to (308.center)
			 to cycle;
		\draw (315.center) to (314.center);
		\draw (317.center) to (316.center);
		\draw (319.center) to (318.center);
		\draw (263.center) to (304.center);
		\draw (264.center) to (305.center);
		\draw (320.center) to (265.center);
		\draw (321.center) to (262.center);
		\draw (322.center) to (261.center);
		\draw (323.center) to (260.center);
		\draw [style=surface X] (325.center)
			 to (327.center)
			 to (328.center)
			 to (326.center)
			 to cycle;
		\draw (331.center) to (329.center);
		\draw (333.center) to (332.center);
		\draw (335.center) to (334.center);
		\draw (334.center) to (336.center);
		\draw (336.center) to (337.center);
		\draw (337.center) to (335.center);
		\draw (340.center) to (339.center);
		\draw [style=surface X] (342.center)
			 to (344.center)
			 to (345.center)
			 to (343.center)
			 to cycle;
		\draw (348.center) to (349.center);
		\draw (348.center) to (351.center);
		\draw (351.center) to (350.center);
		\draw (350.center) to (349.center);
		\draw [bend left=90, looseness=1.75] (352.center) to (353.center);
		\draw (355.center) to (354.center);
	\end{pgfonlayer}
\end{tikzpicture}

%% file: figures/Tikz/knill_noise_model.tikz
\begin{tikzpicture}
	\begin{pgfonlayer}{nodelayer}
		\node [style=none] (1) at (-16.5, 5) {};
		\node [style=none] (2) at (-13.5, 3) {};
		\node [style=none] (3) at (-10.5, 1) {};
		\node [style=none] (11) at (8.5, -9) {};
		\node [style=none] (13) at (8.5, -11) {};
		\node [style=none] (15) at (8.5, -13) {};
		\node [style=cnot ctrl] (16) at (-16.5, 5) {};
		\node [style=cnot ctrl] (17) at (-13.5, 3) {};
		\node [style=cnot ctrl] (18) at (-10.5, 1) {};
		\node [style=cnot targ] (19) at (-16.5, -3) {};
		\node [style=cnot targ] (20) at (-13.5, -5) {};
		\node [style=cnot targ] (21) at (-10.5, -7) {};
		\node [style=none] (22) at (-14.25, 5) {};
		\node [style=none] (23) at (-11.25, 3) {};
		\node [style=none] (24) at (-8.25, 1) {};
		\node [style=none] (28) at (-14.25, -3) {};
		\node [style=none] (29) at (-11.25, -5) {};
		\node [style=none] (30) at (-8.25, -7) {};
		\node [font={\Large}, style=none] (34) at (-13.5, 8) {2-qubit Depolarising Channel};
		\node [style=none] (35) at (-7.5, 5.75) {};
		\node [style=none] (36) at (-17.5, 5.75) {};
		\node [style=none] (37) at (-13.5, 7.25) {};
		\node [style=none] (77) at (0.5, 5) {};
		\node [style=none] (78) at (0.5, 3) {};
		\node [style=none] (79) at (0.5, 1) {};
		\node [style=none] (83) at (0.5, 5.5) {};
		\node [style=none] (84) at (0.5, -7.5) {};
		\node [style=none] (85) at (3.5, -7.5) {};
		\node [style=none] (86) at (3.5, 5.5) {};
		\node [font={\Huge}, style=none] (87) at (2, -1) {$\mathcal{D}$};
		\node [style=none] (88) at (0.5, -3) {};
		\node [style=none] (89) at (0.5, -5) {};
		\node [style=none] (90) at (0.5, -7) {};
		\node [style=none] (115) at (4.5, 5) {};
		\node [style=none] (116) at (4.5, 3) {};
		\node [style=none] (117) at (4.5, 1) {};
		\node [style=none] (118) at (3.5, 5) {};
		\node [style=none] (119) at (3.5, 3) {};
		\node [style=none] (120) at (3.5, 1) {};
		\node [style=none] (121) at (4.5, 5.5) {};
		\node [style=none] (122) at (4.5, -7.5) {};
		\node [style=none] (123) at (7.5, -7.5) {};
		\node [style=none] (124) at (7.5, 5.5) {};
		\node [style=none] (126) at (8.5, 5) {};
		\node [style=none] (127) at (8.5, 3) {};
		\node [style=none] (128) at (8.5, 1) {};
		\node [style=none] (129) at (7.5, 5) {};
		\node [style=none] (130) at (7.5, 3) {};
		\node [style=none] (131) at (7.5, 1) {};
		\node [font={\Huge}, style=none] (137) at (6, -1) {$\mathcal{R}$};
		\node [style=none] (139) at (4.5, -3) {};
		\node [style=none] (140) at (4.5, -5) {};
		\node [style=none] (141) at (4.5, -7) {};
		\node [style=none] (142) at (3.5, -3) {};
		\node [style=none] (143) at (3.5, -5) {};
		\node [style=none] (144) at (3.5, -7) {};
		\node [style=none] (150) at (8.5, -3) {};
		\node [style=none] (151) at (8.5, -5) {};
		\node [style=none] (152) at (8.5, -7) {};
		\node [style=none] (153) at (7.5, -3) {};
		\node [style=none] (154) at (7.5, -5) {};
		\node [style=none] (155) at (7.5, -7) {};
		\node [style=none] (169) at (27.25, -9) {};
		\node [style=none] (170) at (27.25, -11) {};
		\node [style=none] (171) at (27.25, -13) {};
		\node [style=none] (172) at (27.25, -9) {};
		\node [style=none] (173) at (26, -9) {};
		\node [style=none] (174) at (26, -11) {};
		\node [style=none] (175) at (26, -13) {};
		\node [style=none] (176) at (27.25, -11) {};
		\node [style=none] (268) at (-15.75, 5.75) {};
		\node [style=none] (269) at (-15.75, 4.25) {};
		\node [style=none] (270) at (-14.25, 4.25) {};
		\node [style=none] (271) at (-14.25, 5.75) {};
		\node [font={\Large}, style=none] (272) at (-15, 5) {$D_{2}$};
		\node [style=none] (273) at (-16.5, 5) {};
		\node [style=none] (274) at (-16.5, 5) {};
		\node [style=none] (275) at (-12.75, 3.75) {};
		\node [style=none] (276) at (-11.25, 3.75) {};
		\node [style=none] (277) at (-12.75, 2.25) {};
		\node [style=none] (278) at (-11.25, 2.25) {};
		\node [style=none] (279) at (-13.5, 3) {};
		\node [style=none] (280) at (-13.5, 3) {};
		\node [font={\Large}, style=none] (281) at (-12, 3) {$D_{2}$};
		\node [style=none] (284) at (-9.75, 1.75) {};
		\node [style=none] (285) at (-8.25, 1.75) {};
		\node [style=none] (286) at (-9.75, 0.25) {};
		\node [style=none] (287) at (-8.25, 0.25) {};
		\node [style=none] (288) at (-10.5, 1) {};
		\node [style=none] (289) at (-10.5, 1) {};
		\node [font={\Large}, style=none] (290) at (-9, 1) {$D_{2}$};
		\node [style=none] (291) at (-15.75, 5) {};
		\node [style=none] (292) at (-12.75, 3) {};
		\node [style=none] (293) at (-9.75, 1) {};
		\node [font={\Large}, style=none] (296) at (-15, -3) {$D_{2}$};
		\node [style=none] (297) at (-16.5, -3) {};
		\node [style=none] (298) at (-16.5, -3) {};
		\node [style=none] (299) at (-13.5, -5) {};
		\node [style=none] (300) at (-13.5, -5) {};
		\node [font={\Large}, style=none] (301) at (-12, -5) {$D_{2}$};
		\node [style=none] (303) at (-9.75, -6.25) {};
		\node [style=none] (304) at (-8.25, -6.25) {};
		\node [style=none] (305) at (-9.75, -7.75) {};
		\node [style=none] (306) at (-8.25, -7.75) {};
		\node [style=none] (307) at (-10.5, -7) {};
		\node [style=none] (308) at (-10.5, -7) {};
		\node [font={\Large}, style=none] (309) at (-9, -7) {$D_{2}$};
		\node [style=none] (310) at (-15.75, -3) {};
		\node [style=none] (311) at (-12.75, -5) {};
		\node [style=none] (312) at (-9.75, -7) {};
		\node [style=none] (316) at (-18.5, -9) {};
		\node [style=none] (318) at (-18.5, -11) {};
		\node [style=none] (321) at (-18.5, -13) {};
		\node [style=none] (322) at (8.5, -9) {};
		\node [style=none] (323) at (8.5, -11) {};
		\node [style=none] (324) at (8.5, -13) {};
		\node [style=none] (329) at (-12.75, -4.25) {};
		\node [style=none] (330) at (-11.25, -4.25) {};
		\node [style=none] (331) at (-12.75, -5.75) {};
		\node [style=none] (332) at (-11.25, -5.75) {};
		\node [style=none] (333) at (-14.25, -3.75) {};
		\node [style=none] (334) at (-15.75, -3.75) {};
		\node [style=none] (335) at (-15.75, -2.25) {};
		\node [style=none] (336) at (-14.25, -2.25) {};
		\node [style=none] (337) at (-15, 4.25) {};
		\node [style=none] (338) at (-15, -2.25) {};
		\node [style=none] (339) at (-12, 2.25) {};
		\node [style=none] (340) at (-12, -4.25) {};
		\node [style=none] (341) at (-9, 0.25) {};
		\node [style=none] (342) at (-9, -6.25) {};
		\node [style=none] (454) at (-18.5, 5) {};
		\node [style=none] (460) at (-18.5, 3) {};
		\node [style=none] (469) at (-18.5, 1) {};
		\node [style=none] (475) at (-18.5, -3) {};
		\node [style=none] (477) at (-18.5, -5) {};
		\node [style=none] (485) at (-18.5, -7) {};
		\node [style=none] (533) at (-4.5, 3) {};
		\node [style=none] (534) at (-4.5, 1) {};
		\node [style=none] (535) at (-4, 5) {};
		\node [style=none] (536) at (-4, 3) {};
		\node [style=none] (537) at (-4, 1) {};
		\node [font={\Large}, style=none] (538) at (-3.5, 7.5) {$Z$-error channel};
		\node [style=none] (539) at (-0.25, 5.75) {};
		\node [style=none] (540) at (-6.5, 5.75) {};
		\node [style=none] (541) at (-3.5, 7) {};
		\node [style=none] (542) at (-4, 4.25) {};
		\node [style=none] (543) at (-4, 5.75) {};
		\node [style=none] (544) at (-2.5, 4.25) {};
		\node [style=none] (545) at (-2.5, 5.75) {};
		\node [font={\Large}, style=none] (546) at (-3.25, 5) {H};
		\node [style=none] (547) at (-4, 2.25) {};
		\node [style=none] (548) at (-4, 3.75) {};
		\node [style=none] (549) at (-2.5, 2.25) {};
		\node [style=none] (550) at (-2.5, 3.75) {};
		\node [font={\Large}, style=none] (551) at (-3.25, 3) {H};
		\node [style=none] (552) at (-4, 0.25) {};
		\node [style=none] (553) at (-4, 1.75) {};
		\node [style=none] (554) at (-2.5, 0.25) {};
		\node [style=none] (555) at (-2.5, 1.75) {};
		\node [font={\Large}, style=none] (556) at (-3.25, 1) {H};
		\node [style=none] (559) at (-2, 5) {};
		\node [style=none] (560) at (-2.5, 5) {};
		\node [style=none] (563) at (-2, 3) {};
		\node [style=none] (564) at (-2.5, 3) {};
		\node [style=none] (567) at (-2, 1) {};
		\node [style=none] (568) at (-2.5, 1) {};
		\node [style=none] (569) at (-0.5, 5) {};
		\node [style=none] (570) at (-0.5, 3) {};
		\node [style=none] (571) at (-0.5, 1) {};
		\node [style=none] (572) at (-6, 5.75) {};
		\node [style=none] (573) at (-6, 4.25) {};
		\node [style=none] (574) at (-4.5, 4.25) {};
		\node [style=none] (575) at (-4.5, 5.75) {};
		\node [font={\Large}, style=none] (576) at (-5.25, 5) {$Z$};
		\node [style=none] (577) at (-4.5, 5) {};
		\node [style=none] (578) at (-6, 3.75) {};
		\node [style=none] (579) at (-4.5, 3.75) {};
		\node [style=none] (580) at (-6, 2.25) {};
		\node [style=none] (581) at (-4.5, 2.25) {};
		\node [font={\Large}, style=none] (582) at (-5.25, 3) {$Z$};
		\node [style=none] (583) at (-6, 1.75) {};
		\node [style=none] (584) at (-4.5, 1.75) {};
		\node [style=none] (585) at (-6, 0.25) {};
		\node [style=none] (586) at (-4.5, 0.25) {};
		\node [style=none] (587) at (-6, 1) {};
		\node [font={\Large}, style=none] (588) at (-5.25, 1) {$Z$};
		\node [style=none] (589) at (-6, 3) {};
		\node [style=none] (594) at (-6, 5) {};
		\node [style=none] (595) at (-3.5, -3) {};
		\node [style=none] (596) at (-3.5, -5) {};
		\node [style=none] (597) at (-3.5, -7) {};
		\node [style=none] (598) at (-2.75, -3) {};
		\node [style=none] (599) at (-2.75, -5) {};
		\node [style=none] (600) at (-2.75, -7) {};
		\node [font={\Large}, style=none] (601) at (-3.5, -0.75) {X-error Channel};
		\node [style=none] (602) at (-0.25, -2.25) {};
		\node [style=none] (603) at (-5.5, -2.25) {};
		\node [style=none] (604) at (-3, -1.5) {};
		\node [style=none] (607) at (-2, -3) {};
		\node [style=none] (608) at (-2.75, -3) {};
		\node [style=none] (611) at (-2, -5) {};
		\node [style=none] (612) at (-2.75, -5) {};
		\node [style=none] (615) at (-2, -7) {};
		\node [style=none] (616) at (-2.75, -7) {};
		\node [style=none] (617) at (-0.5, -3) {};
		\node [style=none] (618) at (-0.5, -7) {};
		\node [style=none] (619) at (-0.5, -5) {};
		\node [font={\Large}, style=none] (620) at (-4.25, -3) {$X$};
		\node [font={\Large}, style=none] (621) at (-4.25, -5) {$X$};
		\node [style=none] (622) at (-5, -6.25) {};
		\node [style=none] (623) at (-3.5, -6.25) {};
		\node [style=none] (624) at (-5, -7.75) {};
		\node [style=none] (625) at (-3.5, -7.75) {};
		\node [font={\Large}, style=none] (626) at (-4.25, -7) {$X$};
		\node [style=none] (627) at (-5, -3) {};
		\node [style=none] (628) at (-5, -5) {};
		\node [style=none] (629) at (-5, -7) {};
		\node [style=none] (634) at (-5, -4.25) {};
		\node [style=none] (635) at (-3.5, -4.25) {};
		\node [style=none] (636) at (-5, -5.75) {};
		\node [style=none] (637) at (-3.5, -5.75) {};
		\node [style=none] (638) at (-3.5, -3.75) {};
		\node [style=none] (639) at (-5, -3.75) {};
		\node [style=none] (640) at (-5, -2.25) {};
		\node [style=none] (641) at (-3.5, -2.25) {};
		\node [style=none] (643) at (-4.25, -4.5) {};
		\node [style=none] (644) at (-4.25, -6.5) {};
		\node [style=none] (646) at (-2, 5.75) {};
		\node [style=none] (647) at (-0.5, 5.75) {};
		\node [style=none] (648) at (-0.5, 4.25) {};
		\node [style=none] (649) at (-2, 4.25) {};
		\node [style=none] (650) at (-1.875, 4.75) {};
		\node [style=none] (651) at (-0.625, 4.75) {};
		\node [style=none] (652) at (-1.25, 4.75) {};
		\node [style=none] (653) at (-0.75, 5.5) {};
		\node [style=none] (655) at (-2, 3.75) {};
		\node [style=none] (656) at (-0.5, 3.75) {};
		\node [style=none] (657) at (-0.5, 2.25) {};
		\node [style=none] (658) at (-2, 2.25) {};
		\node [style=none] (659) at (-1.875, 2.75) {};
		\node [style=none] (660) at (-0.625, 2.75) {};
		\node [style=none] (661) at (-1.25, 2.75) {};
		\node [style=none] (662) at (-0.75, 3.5) {};
		\node [style=none] (664) at (-2, 1.75) {};
		\node [style=none] (665) at (-0.5, 1.75) {};
		\node [style=none] (666) at (-0.5, 0.25) {};
		\node [style=none] (667) at (-2, 0.25) {};
		\node [style=none] (668) at (-1.875, 0.75) {};
		\node [style=none] (669) at (-0.625, 0.75) {};
		\node [style=none] (670) at (-1.25, 0.75) {};
		\node [style=none] (671) at (-0.75, 1.5) {};
		\node [style=none] (673) at (-2, -2.25) {};
		\node [style=none] (674) at (-0.5, -2.25) {};
		\node [style=none] (675) at (-0.5, -3.75) {};
		\node [style=none] (676) at (-2, -3.75) {};
		\node [style=none] (677) at (-1.875, -3.25) {};
		\node [style=none] (678) at (-0.625, -3.25) {};
		\node [style=none] (679) at (-1.25, -3.25) {};
		\node [style=none] (680) at (-0.75, -2.5) {};
		\node [style=none] (682) at (-2, -4.25) {};
		\node [style=none] (683) at (-0.5, -4.25) {};
		\node [style=none] (684) at (-0.5, -5.75) {};
		\node [style=none] (685) at (-2, -5.75) {};
		\node [style=none] (686) at (-1.875, -5.25) {};
		\node [style=none] (687) at (-0.625, -5.25) {};
		\node [style=none] (688) at (-1.25, -5.25) {};
		\node [style=none] (689) at (-0.75, -4.5) {};
		\node [style=none] (691) at (-2, -6.25) {};
		\node [style=none] (692) at (-0.5, -6.25) {};
		\node [style=none] (693) at (-0.5, -7.75) {};
		\node [style=none] (694) at (-2, -7.75) {};
		\node [style=none] (695) at (-1.875, -7.25) {};
		\node [style=none] (696) at (-0.625, -7.25) {};
		\node [style=none] (697) at (-1.25, -7.25) {};
		\node [style=none] (698) at (-0.75, -6.5) {};
		\node [style=none] (702) at (8.5, -8.5) {};
		\node [style=none] (703) at (8.5, -13.5) {};
		\node [style=none] (704) at (11.5, -13.5) {};
		\node [style=none] (705) at (11.5, -8.5) {};
		\node [font={\Huge}, style=none] (706) at (10, -11) {$\overline{X_1}$};
		\node [style=none] (707) at (12.5, -9) {};
		\node [style=none] (708) at (12.5, -11) {};
		\node [style=none] (709) at (12.5, -13) {};
		\node [style=none] (710) at (12.5, -8.5) {};
		\node [style=none] (711) at (12.5, -13.5) {};
		\node [style=none] (712) at (15.5, -13.5) {};
		\node [style=none] (713) at (15.5, -8.5) {};
		\node [font={\Huge}, style=none] (714) at (14, -11) {$\overline{Z_1}$};
		\node [style=none] (715) at (11.5, -9) {};
		\node [style=none] (716) at (11.5, -11) {};
		\node [style=none] (717) at (11.5, -13) {};
		\node [style=none] (724) at (8.5, 5.5) {};
		\node [style=none] (725) at (8.5, -7.5) {};
		\node [style=none] (726) at (11.5, -7.5) {};
		\node [style=none] (727) at (11.5, 5.5) {};
		\node [style=none] (730) at (10, -8.5) {};
		\node [style=none] (731) at (10, -7.5) {};
		\node [style=none] (732) at (14, -8.5) {};
		\node [style=none] (733) at (14, -7.5) {};
		\node [style=none] (737) at (12.5, 5) {};
		\node [style=none] (738) at (12.5, 3) {};
		\node [style=none] (739) at (12.5, 1) {};
		\node [style=none] (740) at (11.5, 5) {};
		\node [style=none] (741) at (11.5, 3) {};
		\node [style=none] (742) at (11.5, 1) {};
		\node [style=none] (743) at (12.5, -3) {};
		\node [style=none] (744) at (12.5, -5) {};
		\node [style=none] (745) at (12.5, -7) {};
		\node [style=none] (746) at (11.5, -3) {};
		\node [style=none] (747) at (11.5, -5) {};
		\node [style=none] (748) at (11.5, -7) {};
		\node [style=none] (749) at (12.5, 5.5) {};
		\node [style=none] (750) at (12.5, -7.5) {};
		\node [style=none] (751) at (15.5, -7.5) {};
		\node [style=none] (752) at (15.5, 5.5) {};
		\node [style=none] (755) at (19, -8.5) {};
		\node [style=none] (756) at (19, -13.5) {};
		\node [style=none] (757) at (22, -13.5) {};
		\node [style=none] (758) at (22, -8.5) {};
		\node [font={\Huge}, style=none] (759) at (20.5, -11) {$\overline{X_k}$};
		\node [style=none] (760) at (23, -9) {};
		\node [style=none] (761) at (23, -11) {};
		\node [style=none] (762) at (23, -13) {};
		\node [style=none] (763) at (23, -8.5) {};
		\node [style=none] (764) at (23, -13.5) {};
		\node [style=none] (765) at (26, -13.5) {};
		\node [style=none] (766) at (26, -8.5) {};
		\node [font={\Huge}, style=none] (767) at (24.5, -11) {$\overline{Z_k}$};
		\node [style=none] (768) at (22, -9) {};
		\node [style=none] (769) at (22, -11) {};
		\node [style=none] (770) at (22, -13) {};
		\node [style=none] (771) at (19, 5) {};
		\node [style=none] (772) at (19, 3) {};
		\node [style=none] (773) at (19, 1) {};
		\node [style=none] (774) at (18, 5) {};
		\node [style=none] (775) at (18, 3) {};
		\node [style=none] (776) at (18, 1) {};
		\node [style=none] (777) at (19, -3) {};
		\node [style=none] (778) at (19, -5) {};
		\node [style=none] (779) at (19, -7) {};
		\node [style=none] (780) at (18, -3) {};
		\node [style=none] (781) at (18, -5) {};
		\node [style=none] (782) at (18, -7) {};
		\node [style=none] (783) at (19, 5.5) {};
		\node [style=none] (784) at (19, -7.5) {};
		\node [style=none] (785) at (22, -7.5) {};
		\node [style=none] (786) at (22, 5.5) {};
		\node [style=none] (789) at (20.5, -8.5) {};
		\node [style=none] (790) at (20.5, -7.5) {};
		\node [style=none] (791) at (24.5, -8.5) {};
		\node [style=none] (792) at (24.5, -7.5) {};
		\node [style=none] (793) at (23, 5) {};
		\node [style=none] (794) at (23, 3) {};
		\node [style=none] (795) at (23, 1) {};
		\node [style=none] (796) at (22, 5) {};
		\node [style=none] (797) at (22, 3) {};
		\node [style=none] (798) at (22, 1) {};
		\node [style=none] (799) at (23, -3) {};
		\node [style=none] (800) at (23, -5) {};
		\node [style=none] (801) at (23, -7) {};
		\node [style=none] (802) at (22, -3) {};
		\node [style=none] (803) at (22, -5) {};
		\node [style=none] (804) at (22, -7) {};
		\node [style=none] (805) at (23, 5.5) {};
		\node [style=none] (806) at (23, -7.5) {};
		\node [style=none] (807) at (26, -7.5) {};
		\node [style=none] (808) at (26, 5.5) {};
		\node [style=none] (811) at (16.5, 5) {};
		\node [style=none] (812) at (16.5, 3) {};
		\node [style=none] (813) at (16.5, 1) {};
		\node [style=none] (814) at (15.5, 5) {};
		\node [style=none] (815) at (15.5, 3) {};
		\node [style=none] (816) at (15.5, 1) {};
		\node [style=none] (817) at (16.5, -3) {};
		\node [style=none] (818) at (16.5, -5) {};
		\node [style=none] (819) at (16.5, -7) {};
		\node [style=none] (820) at (15.5, -3) {};
		\node [style=none] (821) at (15.5, -5) {};
		\node [style=none] (822) at (15.5, -7) {};
		\node [style=none] (823) at (17.25, -0.5) {$\cdots$};
		\node [style=none] (824) at (16.5, -9) {};
		\node [style=none] (825) at (16.5, -11) {};
		\node [style=none] (826) at (16.5, -13) {};
		\node [style=none] (827) at (15.5, -9) {};
		\node [style=none] (828) at (15.5, -11) {};
		\node [style=none] (829) at (15.5, -13) {};
		\node [style=none] (830) at (19, -9) {};
		\node [style=none] (831) at (19, -11) {};
		\node [style=none] (832) at (19, -13) {};
		\node [style=none] (833) at (18, -9) {};
		\node [style=none] (834) at (18, -11) {};
		\node [style=none] (835) at (18, -13) {};
		\node [style=none] (836) at (17.25, -11) {$\cdots$};
		\node [style=none] (844) at (-28.5, -8) {\Huge{$\left(\frac{1}{\sqrt{2}}\left[\ket{\overline{00}}+\ket{\overline{11}}\right]\right)^{\otimes k}$}};
		\node [style=none] (846) at (-18.5, -2.5) {};
		\node [style=none] (847) at (-18.5, -13.5) {};
		\node [style=none] (848) at (-20.5, -8) {};
		\node [font={\Huge}, style=none] (849) at (10, 3) {$\overline{Z_1}$};
		\node [style=none] (850) at (10, 4.75) {Calc.};
		\node [font={\Huge}, style=none] (851) at (10, -5) {$\overline{Z_{1}}$};
		\node [style=none] (852) at (10, -1) {\Large $\otimes$};
		\node [font={\Huge}, style=none] (853) at (14, 3) {$\overline{X_1}$};
		\node [style=none] (854) at (14, 4.75) {Calc.};
		\node [font={\Huge}, style=none] (855) at (14, -5) {$\overline{X_{1}}$};
		\node [style=none] (856) at (14, -1) {\Large $\otimes$};
		\node [font={\Huge}, style=none] (857) at (20.5, 3) {$\overline{Z_k}$};
		\node [style=none] (858) at (20.5, 4.75) {Calc.};
		\node [font={\Huge}, style=none] (859) at (20.5, -5) {$\overline{Z_k}$};
		\node [style=none] (860) at (20.5, -1) {\Large $\otimes$};
		\node [font={\Huge}, style=none] (861) at (24.5, 3) {$\overline{X_k}$};
		\node [style=none] (862) at (24.5, 4.75) {Calc.};
		\node [font={\Huge}, style=none] (863) at (24.5, -5) {$\overline{X_k}$};
		\node [style=none] (864) at (24.5, -1) {\Large $\otimes$};
	\end{pgfonlayer}
	\begin{pgfonlayer}{edgelayer}
		\draw (16) to (19);
		\draw (17) to (20);
		\draw (18) to (21);
		\draw [in=90, out=-90, looseness=0.75] (37.center) to (35.center);
		\draw [in=-90, out=90] (36.center) to (37.center);
		\draw (84.center)
			 to (85.center)
			 to (86.center)
			 to (83.center)
			 to cycle;
		\draw [style=double] (118.center) to (115.center);
		\draw [style=double] (119.center) to (116.center);
		\draw [style=double] (120.center) to (117.center);
		\draw [style=none] (124.center)
			 to (121.center)
			 to (122.center)
			 to (123.center)
			 to cycle;
		\draw [style=double] (129.center) to (126.center);
		\draw [style=double] (130.center) to (127.center);
		\draw [style=double] (131.center) to (128.center);
		\draw [style=double] (142.center) to (139.center);
		\draw [style=double] (143.center) to (140.center);
		\draw [style=double] (144.center) to (141.center);
		\draw [style=double] (153.center) to (150.center);
		\draw [style=double] (154.center) to (151.center);
		\draw [style=double] (155.center) to (152.center);
		\draw [style=not synthesisable] (174.center) to (176.center);
		\draw [style=not synthesisable] (175.center) to (171.center);
		\draw (173.center) to (172.center);
		\draw (291.center) to (274.center);
		\draw (292.center) to (280.center);
		\draw (293.center) to (289.center);
		\draw (310.center) to (298.center);
		\draw (311.center) to (300.center);
		\draw (312.center) to (308.center);
		\draw (322.center) to (316.center);
		\draw (323.center) to (318.center);
		\draw (324.center) to (321.center);
		\draw [style=surface X] (271.center)
			 to (268.center)
			 to (269.center)
			 to (270.center)
			 to cycle;
		\draw [style=surface X] (277.center)
			 to (278.center)
			 to (276.center)
			 to (275.center)
			 to cycle;
		\draw [style=surface X] (284.center)
			 to (286.center)
			 to (287.center)
			 to (285.center)
			 to cycle;
		\draw [style=surface X] (305.center)
			 to (306.center)
			 to (304.center)
			 to (303.center)
			 to cycle;
		\draw [style=surface X] (335.center)
			 to (336.center)
			 to (333.center)
			 to (334.center)
			 to cycle;
		\draw [style=surface X] (329.center)
			 to (330.center)
			 to (332.center)
			 to (331.center)
			 to cycle;
		\draw (338.center) to (337.center);
		\draw (340.center) to (339.center);
		\draw (342.center) to (341.center);
		\draw (454.center) to (274.center);
		\draw (460.center) to (280.center);
		\draw (469.center) to (289.center);
		\draw (475.center) to (298.center);
		\draw (477.center) to (300.center);
		\draw (485.center) to (308.center);
		\draw (536.center) to (533.center);
		\draw (537.center) to (534.center);
		\draw [in=90, out=-90] (541.center) to (539.center);
		\draw [in=-90, out=90] (540.center) to (541.center);
		\draw (543.center) to (542.center);
		\draw (542.center) to (544.center);
		\draw (544.center) to (545.center);
		\draw (545.center) to (543.center);
		\draw (548.center) to (547.center);
		\draw (547.center) to (549.center);
		\draw (549.center) to (550.center);
		\draw (550.center) to (548.center);
		\draw (553.center) to (552.center);
		\draw (552.center) to (554.center);
		\draw (554.center) to (555.center);
		\draw (555.center) to (553.center);
		\draw (560.center) to (559.center);
		\draw (564.center) to (563.center);
		\draw (568.center) to (567.center);
		\draw [style=surface X] (575.center)
			 to (572.center)
			 to (573.center)
			 to (574.center)
			 to cycle;
		\draw [style=surface X] (580.center)
			 to (581.center)
			 to (579.center)
			 to (578.center)
			 to cycle;
		\draw [style=surface X] (583.center)
			 to (585.center)
			 to (586.center)
			 to (584.center)
			 to cycle;
		\draw (22.center) to (594.center);
		\draw (23.center) to (589.center);
		\draw (24.center) to (587.center);
		\draw (598.center) to (595.center);
		\draw (599.center) to (596.center);
		\draw (600.center) to (597.center);
		\draw [in=90, out=-90] (604.center) to (602.center);
		\draw [in=-90, out=90] (603.center) to (604.center);
		\draw (608.center) to (607.center);
		\draw (612.center) to (611.center);
		\draw (616.center) to (615.center);
		\draw [style=surface X] (624.center)
			 to (625.center)
			 to (623.center)
			 to (622.center)
			 to cycle;
		\draw [style=surface X] (640.center)
			 to (641.center)
			 to (638.center)
			 to (639.center)
			 to cycle;
		\draw [style=surface X] (634.center)
			 to (635.center)
			 to (637.center)
			 to (636.center)
			 to cycle;
		\draw (28.center) to (627.center);
		\draw (29.center) to (628.center);
		\draw (30.center) to (629.center);
		\draw (646.center) to (647.center);
		\draw (646.center) to (649.center);
		\draw (649.center) to (648.center);
		\draw (648.center) to (647.center);
		\draw [bend left=90, looseness=1.75] (650.center) to (651.center);
		\draw (653.center) to (652.center);
		\draw (655.center) to (656.center);
		\draw (655.center) to (658.center);
		\draw (658.center) to (657.center);
		\draw (657.center) to (656.center);
		\draw [bend left=90, looseness=1.75] (659.center) to (660.center);
		\draw (662.center) to (661.center);
		\draw (664.center) to (665.center);
		\draw (664.center) to (667.center);
		\draw (667.center) to (666.center);
		\draw (666.center) to (665.center);
		\draw [bend left=90, looseness=1.75] (668.center) to (669.center);
		\draw (671.center) to (670.center);
		\draw (673.center) to (674.center);
		\draw (673.center) to (676.center);
		\draw (676.center) to (675.center);
		\draw (675.center) to (674.center);
		\draw [bend left=90, looseness=1.75] (677.center) to (678.center);
		\draw (680.center) to (679.center);
		\draw (682.center) to (683.center);
		\draw (682.center) to (685.center);
		\draw (685.center) to (684.center);
		\draw (684.center) to (683.center);
		\draw [bend left=90, looseness=1.75] (686.center) to (687.center);
		\draw (689.center) to (688.center);
		\draw (691.center) to (692.center);
		\draw (691.center) to (694.center);
		\draw (694.center) to (693.center);
		\draw (693.center) to (692.center);
		\draw [bend left=90, looseness=1.75] (695.center) to (696.center);
		\draw (698.center) to (697.center);
		\draw (703.center) to (702.center);
		\draw (702.center) to (705.center);
		\draw (705.center) to (704.center);
		\draw (704.center) to (703.center);
		\draw (711.center) to (710.center);
		\draw (710.center) to (713.center);
		\draw (713.center) to (712.center);
		\draw (712.center) to (711.center);
		\draw (715.center) to (707.center);
		\draw (716.center) to (708.center);
		\draw (717.center) to (709.center);
		\draw [style=none] (724.center)
			 to (725.center)
			 to (726.center)
			 to (727.center)
			 to cycle;
		\draw [style=double] (731.center) to (730.center);
		\draw [style=double] (733.center) to (732.center);
		\draw [style=double] (740.center) to (737.center);
		\draw [style=double] (741.center) to (738.center);
		\draw [style=double] (742.center) to (739.center);
		\draw [style=double] (746.center) to (743.center);
		\draw [style=double] (747.center) to (744.center);
		\draw [style=double] (748.center) to (745.center);
		\draw [style=none] (751.center)
			 to (752.center)
			 to (749.center)
			 to (750.center)
			 to cycle;
		\draw (756.center) to (755.center);
		\draw (755.center) to (758.center);
		\draw (758.center) to (757.center);
		\draw (757.center) to (756.center);
		\draw (764.center) to (763.center);
		\draw (763.center) to (766.center);
		\draw (766.center) to (765.center);
		\draw (765.center) to (764.center);
		\draw (768.center) to (760.center);
		\draw (769.center) to (761.center);
		\draw (770.center) to (762.center);
		\draw [style=double] (774.center) to (771.center);
		\draw [style=double] (775.center) to (772.center);
		\draw [style=double] (776.center) to (773.center);
		\draw [style=double] (780.center) to (777.center);
		\draw [style=double] (781.center) to (778.center);
		\draw [style=double] (782.center) to (779.center);
		\draw [style=none] (786.center)
			 to (783.center)
			 to (784.center)
			 to (785.center)
			 to cycle;
		\draw [style=double] (790.center) to (789.center);
		\draw [style=double] (792.center) to (791.center);
		\draw [style=double] (796.center) to (793.center);
		\draw [style=double] (797.center) to (794.center);
		\draw [style=double] (798.center) to (795.center);
		\draw [style=double] (802.center) to (799.center);
		\draw [style=double] (803.center) to (800.center);
		\draw [style=double] (804.center) to (801.center);
		\draw [style=none] (808.center)
			 to (805.center)
			 to (806.center)
			 to (807.center)
			 to cycle;
		\draw [style=double] (814.center) to (811.center);
		\draw [style=double] (815.center) to (812.center);
		\draw [style=double] (816.center) to (813.center);
		\draw [style=double] (820.center) to (817.center);
		\draw [style=double] (821.center) to (818.center);
		\draw [style=double] (822.center) to (819.center);
		\draw (827.center) to (824.center);
		\draw (828.center) to (825.center);
		\draw (829.center) to (826.center);
		\draw (833.center) to (830.center);
		\draw (834.center) to (831.center);
		\draw (835.center) to (832.center);
		\draw [style=double] (569.center) to (77.center);
		\draw [style=double] (570.center) to (78.center);
		\draw [style=double] (571.center) to (79.center);
		\draw [style=double] (617.center) to (88.center);
		\draw [style=double] (619.center) to (89.center);
		\draw [style=double] (618.center) to (90.center);
		\draw (535.center) to (577.center);
		\draw [in=-180, out=0] (848.center) to (846.center);
		\draw [in=0, out=-180] (847.center) to (848.center);
	\end{pgfonlayer}
\end{tikzpicture}

%% file: figures/Tikz/knill_compressed.tikz
\begin{tikzpicture}
	\begin{pgfonlayer}{nodelayer}
		\node [style=none] (0) at (71.75, 2.5) {\Huge{$\ket{\overline{\psi}}$}};
		\node [style=none] (4) at (20.5, 20.5) {};
		\node [style=none] (5) at (20.5, 18.5) {};
		\node [style=none] (6) at (20.5, 16.5) {};
		\node [style=none] (7) at (20.5, 21.25) {};
		\node [style=none] (8) at (20.5, 15.75) {};
		\node [style=none] (9) at (23.5, 15.75) {};
		\node [style=none] (10) at (23.5, 21.25) {};
		\node [font={\Huge}, style=none] (11) at (22, 18.5) {$\mathcal{D}_X$};
		\node [style=none] (15) at (24.5, 20.5) {};
		\node [style=none] (16) at (24.5, 18.5) {};
		\node [style=none] (17) at (24.5, 16.5) {};
		\node [style=none] (18) at (23.5, 20.5) {};
		\node [style=none] (19) at (23.5, 18.5) {};
		\node [style=none] (20) at (23.5, 16.5) {};
		\node [style=none] (21) at (24.5, 21.25) {};
		\node [style=none] (22) at (24.5, 15.75) {};
		\node [style=none] (23) at (27.5, 15.75) {};
		\node [style=none] (24) at (27.5, 21.25) {};
		\node [style=none] (28) at (27.5, 20.5) {};
		\node [style=none] (29) at (27.5, 18.5) {};
		\node [style=none] (30) at (27.5, 16.5) {};
		\node [font={\Huge}, style=none] (31) at (26, 18.5) {$\mathcal{R}$};
		\node [style=none] (52) at (68.5, 0) {};
		\node [style=none] (53) at (68.5, 5) {};
		\node [style=none] (54) at (69.5, 2.5) {};
		\node [style=none] (58) at (16, 20.5) {};
		\node [style=none] (59) at (16, 18.5) {};
		\node [style=none] (60) at (16, 16.5) {};
		\node [style=none] (61) at (16, 19.75) {};
		\node [style=none] (62) at (16, 21.25) {};
		\node [style=none] (63) at (17.5, 19.75) {};
		\node [style=none] (64) at (17.5, 21.25) {};
		\node [font={\Large}, style=none] (65) at (16.75, 20.5) {H};
		\node [style=none] (66) at (16, 17.75) {};
		\node [style=none] (67) at (16, 19.25) {};
		\node [style=none] (68) at (17.5, 17.75) {};
		\node [style=none] (69) at (17.5, 19.25) {};
		\node [font={\Large}, style=none] (70) at (16.75, 18.5) {H};
		\node [style=none] (71) at (16, 15.75) {};
		\node [style=none] (72) at (16, 17.25) {};
		\node [style=none] (73) at (17.5, 15.75) {};
		\node [style=none] (74) at (17.5, 17.25) {};
		\node [font={\Large}, style=none] (75) at (16.75, 16.5) {H};
		\node [style=none] (76) at (18, 20.5) {};
		\node [style=none] (77) at (17.5, 20.5) {};
		\node [style=none] (78) at (18, 18.5) {};
		\node [style=none] (79) at (17.5, 18.5) {};
		\node [style=none] (80) at (18, 16.5) {};
		\node [style=none] (81) at (17.5, 16.5) {};
		\node [style=none] (82) at (19.5, 20.5) {};
		\node [style=none] (83) at (19.5, 18.5) {};
		\node [style=none] (84) at (19.5, 16.5) {};
		\node [style=none] (97) at (18, 21.25) {};
		\node [style=none] (98) at (19.5, 21.25) {};
		\node [style=none] (99) at (19.5, 19.75) {};
		\node [style=none] (100) at (18, 19.75) {};
		\node [style=none] (101) at (18.125, 20.25) {};
		\node [style=none] (102) at (19.375, 20.25) {};
		\node [style=none] (103) at (18.75, 20.25) {};
		\node [style=none] (104) at (19.25, 21) {};
		\node [style=none] (105) at (18, 19.25) {};
		\node [style=none] (106) at (19.5, 19.25) {};
		\node [style=none] (107) at (19.5, 17.75) {};
		\node [style=none] (108) at (18, 17.75) {};
		\node [style=none] (109) at (18.125, 18.25) {};
		\node [style=none] (110) at (19.375, 18.25) {};
		\node [style=none] (111) at (18.75, 18.25) {};
		\node [style=none] (112) at (19.25, 19) {};
		\node [style=none] (113) at (18, 17.25) {};
		\node [style=none] (114) at (19.5, 17.25) {};
		\node [style=none] (115) at (19.5, 15.75) {};
		\node [style=none] (116) at (18, 15.75) {};
		\node [style=none] (117) at (18.125, 16.25) {};
		\node [style=none] (118) at (19.375, 16.25) {};
		\node [style=none] (119) at (18.75, 16.25) {};
		\node [style=none] (120) at (19.25, 17) {};
		\node [style=none] (150) at (28.5, 12.5) {};
		\node [style=none] (151) at (28.5, 10.5) {};
		\node [style=none] (152) at (28.5, 8.5) {};
		\node [style=none] (153) at (28.5, 13.25) {};
		\node [style=none] (154) at (28.5, 7.75) {};
		\node [style=none] (155) at (31.5, 7.75) {};
		\node [style=none] (156) at (31.5, 13.25) {};
		\node [font={\Huge}, style=none] (157) at (30, 10.5) {$\overline{Z_1}$};
		\node [style=none] (168) at (30, 13.25) {};
		\node [style=none] (169) at (30, 15.75) {};
		\node [style=none] (170) at (28.5, 20.5) {};
		\node [style=none] (171) at (28.5, 18.5) {};
		\node [style=none] (172) at (28.5, 16.5) {};
		\node [style=none] (182) at (28.5, 21.25) {};
		\node [style=none] (183) at (28.5, 15.75) {};
		\node [style=none] (184) at (31.5, 15.75) {};
		\node [style=none] (185) at (31.5, 21.25) {};
		\node [font={\Huge}, style=none] (186) at (30, 17.5) {$\overline{X_1}$};
		\node [style=none] (192) at (35, 12.5) {};
		\node [style=none] (193) at (35, 10.5) {};
		\node [style=none] (194) at (35, 8.5) {};
		\node [style=none] (195) at (35, 13.25) {};
		\node [style=none] (196) at (35, 7.75) {};
		\node [style=none] (197) at (38, 7.75) {};
		\node [style=none] (198) at (38, 13.25) {};
		\node [font={\Huge}, style=none] (199) at (36.5, 10.5) {$\overline{Z_k}$};
		\node [style=none] (206) at (34, 20.5) {};
		\node [style=none] (207) at (34, 18.5) {};
		\node [style=none] (208) at (34, 16.5) {};
		\node [style=none] (222) at (36.5, 13.25) {};
		\node [style=none] (223) at (36.5, 15.75) {};
		\node [style=none] (224) at (35, 20.5) {};
		\node [style=none] (225) at (35, 18.5) {};
		\node [style=none] (226) at (35, 16.5) {};
		\node [style=none] (236) at (35, 21.25) {};
		\node [style=none] (237) at (35, 15.75) {};
		\node [style=none] (238) at (38, 15.75) {};
		\node [style=none] (239) at (38, 21.25) {};
		\node [font={\Huge}, style=none] (240) at (36.5, 17.5) {$\overline{X_k}$};
		\node [style=none] (241) at (32.5, 20.5) {};
		\node [style=none] (242) at (32.5, 18.5) {};
		\node [style=none] (243) at (32.5, 16.5) {};
		\node [style=none] (244) at (31.5, 20.5) {};
		\node [style=none] (245) at (31.5, 18.5) {};
		\node [style=none] (246) at (31.5, 16.5) {};
		\node [style=none] (253) at (33.25, 18.5) {$\cdots$};
		\node [style=none] (254) at (32.5, 12.5) {};
		\node [style=none] (255) at (32.5, 10.5) {};
		\node [style=none] (256) at (32.5, 8.5) {};
		\node [style=none] (257) at (31.5, 12.5) {};
		\node [style=none] (258) at (31.5, 10.5) {};
		\node [style=none] (259) at (31.5, 8.5) {};
		\node [style=none] (263) at (34, 12.5) {};
		\node [style=none] (264) at (34, 10.5) {};
		\node [style=none] (265) at (34, 8.5) {};
		\node [style=none] (266) at (33.25, 10.5) {$\cdots$};
		\node [style=none] (268) at (36.5, 19.5) {Calc.};
		\node [style=none] (271) at (30, 19.5) {Calc.};
		\node [style=none] (272) at (9.5, 20.5) {};
		\node [style=none] (273) at (9.5, 18.5) {};
		\node [style=none] (274) at (9.5, 16.5) {};
		\node [style=none] (278) at (6.75, 18.75) {\Huge{$\ket{\overline{\psi}}$}};
		\node [style=none] (279) at (9.25, 21) {};
		\node [style=none] (280) at (9.25, 16) {};
		\node [style=none] (281) at (8.25, 18.5) {};
		\node [style=none] (286) at (12, 20.5) {};
		\node [style=none] (287) at (13, 18.5) {};
		\node [style=none] (288) at (14, 16.5) {};
		\node [style=cnot ctrl] (292) at (12, 20.5) {};
		\node [style=cnot ctrl] (293) at (13, 18.5) {};
		\node [style=cnot ctrl] (294) at (14, 16.5) {};
		\node [style=cnot targ] (295) at (12, 12.5) {};
		\node [style=cnot targ] (296) at (13, 10.5) {};
		\node [style=cnot targ] (297) at (14, 8.5) {};
		\node [style=none] (298) at (12, 20.5) {};
		\node [style=none] (299) at (13, 18.5) {};
		\node [style=none] (300) at (14, 16.5) {};
		\node [style=none] (304) at (6, 10.5) {\Huge{$\ket{\overline{0}}^{\otimes k}$}};
		\node [style=none] (305) at (37, 2.5) {\Huge{$\ket{\overline{+}}^{\otimes k}$}};
		\node [style=none] (401) at (9, 12.5) {};
		\node [style=none] (402) at (9, 10.5) {};
		\node [style=none] (403) at (9, 8.5) {};
		\node [style=none] (404) at (48.5, 12.5) {};
		\node [style=none] (405) at (48.5, 10.5) {};
		\node [style=none] (406) at (48.5, 8.5) {};
		\node [style=none] (407) at (48.5, 13.25) {};
		\node [style=none] (408) at (48.5, 7.75) {};
		\node [style=none] (409) at (51.5, 7.75) {};
		\node [style=none] (410) at (51.5, 13.25) {};
		\node [font={\Huge}, style=none] (411) at (50, 10.5) {$\mathcal{D}_Z$};
		\node [style=none] (412) at (52.5, 12.5) {};
		\node [style=none] (413) at (52.5, 10.5) {};
		\node [style=none] (414) at (52.5, 8.5) {};
		\node [style=none] (415) at (51.5, 12.5) {};
		\node [style=none] (416) at (51.5, 10.5) {};
		\node [style=none] (417) at (51.5, 8.5) {};
		\node [style=none] (418) at (52.5, 13.25) {};
		\node [style=none] (419) at (52.5, 7.75) {};
		\node [style=none] (420) at (55.5, 7.75) {};
		\node [style=none] (421) at (55.5, 13.25) {};
		\node [style=none] (422) at (55.5, 12.5) {};
		\node [style=none] (423) at (55.5, 10.5) {};
		\node [style=none] (424) at (55.5, 8.5) {};
		\node [font={\Huge}, style=none] (425) at (54, 10.5) {$\mathcal{R}$};
		\node [style=none] (428) at (66, 4.5) {};
		\node [style=none] (429) at (66, 2.5) {};
		\node [style=none] (430) at (66, 0.5) {};
		\node [style=none] (432) at (46, 12.5) {};
		\node [style=none] (433) at (46, 10.5) {};
		\node [style=none] (434) at (46, 8.5) {};
		\node [style=none] (456) at (47.5, 12.5) {};
		\node [style=none] (457) at (47.5, 10.5) {};
		\node [style=none] (458) at (47.5, 8.5) {};
		\node [style=none] (459) at (46, 13.25) {};
		\node [style=none] (460) at (47.5, 13.25) {};
		\node [style=none] (461) at (47.5, 11.75) {};
		\node [style=none] (462) at (46, 11.75) {};
		\node [style=none] (463) at (46.125, 12.25) {};
		\node [style=none] (464) at (47.375, 12.25) {};
		\node [style=none] (465) at (46.75, 12.25) {};
		\node [style=none] (466) at (47.25, 13) {};
		\node [style=none] (467) at (46, 11.25) {};
		\node [style=none] (468) at (47.5, 11.25) {};
		\node [style=none] (469) at (47.5, 9.75) {};
		\node [style=none] (470) at (46, 9.75) {};
		\node [style=none] (471) at (46.125, 10.25) {};
		\node [style=none] (472) at (47.375, 10.25) {};
		\node [style=none] (473) at (46.75, 10.25) {};
		\node [style=none] (474) at (47.25, 11) {};
		\node [style=none] (475) at (46, 9.25) {};
		\node [style=none] (476) at (47.5, 9.25) {};
		\node [style=none] (477) at (47.5, 7.75) {};
		\node [style=none] (478) at (46, 7.75) {};
		\node [style=none] (479) at (46.125, 8.25) {};
		\node [style=none] (480) at (47.375, 8.25) {};
		\node [style=none] (481) at (46.75, 8.25) {};
		\node [style=none] (482) at (47.25, 9) {};
		\node [style=none] (483) at (56.5, 4.5) {};
		\node [style=none] (484) at (56.5, 2.5) {};
		\node [style=none] (485) at (56.5, 0.5) {};
		\node [style=none] (486) at (56.5, 5.25) {};
		\node [style=none] (487) at (56.5, -0.25) {};
		\node [style=none] (488) at (59.5, -0.25) {};
		\node [style=none] (489) at (59.5, 5.25) {};
		\node [font={\Huge}, style=none] (490) at (58, 2.5) {$\overline{X_1}$};
		\node [style=none] (491) at (58, 5.25) {};
		\node [style=none] (492) at (58, 7.75) {};
		\node [style=none] (493) at (56.5, 12.5) {};
		\node [style=none] (494) at (56.5, 10.5) {};
		\node [style=none] (495) at (56.5, 8.5) {};
		\node [style=none] (496) at (56.5, 13.25) {};
		\node [style=none] (497) at (56.5, 7.75) {};
		\node [style=none] (498) at (59.5, 7.75) {};
		\node [style=none] (499) at (59.5, 13.25) {};
		\node [font={\Huge}, style=none] (500) at (58, 9.5) {$\overline{Z_1}$};
		\node [style=none] (501) at (63, 4.5) {};
		\node [style=none] (502) at (63, 2.5) {};
		\node [style=none] (503) at (63, 0.5) {};
		\node [style=none] (504) at (63, 5.25) {};
		\node [style=none] (505) at (63, -0.25) {};
		\node [style=none] (506) at (66, -0.25) {};
		\node [style=none] (507) at (66, 5.25) {};
		\node [font={\Huge}, style=none] (508) at (64.5, 2.5) {$\overline{X_k}$};
		\node [style=none] (509) at (62, 12.5) {};
		\node [style=none] (510) at (62, 10.5) {};
		\node [style=none] (511) at (62, 8.5) {};
		\node [style=none] (512) at (64.5, 5.25) {};
		\node [style=none] (513) at (64.5, 7.75) {};
		\node [style=none] (514) at (63, 12.5) {};
		\node [style=none] (515) at (63, 10.5) {};
		\node [style=none] (516) at (63, 8.5) {};
		\node [style=none] (517) at (63, 13.25) {};
		\node [style=none] (518) at (63, 7.75) {};
		\node [style=none] (519) at (66, 7.75) {};
		\node [style=none] (520) at (66, 13.25) {};
		\node [font={\Huge}, style=none] (521) at (64.5, 9.5) {$\overline{Z_k}$};
		\node [style=none] (522) at (60.5, 12.5) {};
		\node [style=none] (523) at (60.5, 10.5) {};
		\node [style=none] (524) at (60.5, 8.5) {};
		\node [style=none] (525) at (59.5, 12.5) {};
		\node [style=none] (526) at (59.5, 10.5) {};
		\node [style=none] (527) at (59.5, 8.5) {};
		\node [style=none] (528) at (61.25, 10.5) {$\cdots$};
		\node [style=none] (529) at (60.5, 4.5) {};
		\node [style=none] (530) at (60.5, 2.5) {};
		\node [style=none] (531) at (60.5, 0.5) {};
		\node [style=none] (532) at (59.5, 4.5) {};
		\node [style=none] (533) at (59.5, 2.5) {};
		\node [style=none] (534) at (59.5, 0.5) {};
		\node [style=none] (535) at (62, 4.5) {};
		\node [style=none] (536) at (62, 2.5) {};
		\node [style=none] (537) at (62, 0.5) {};
		\node [style=none] (538) at (61.25, 2.5) {$\cdots$};
		\node [style=none] (539) at (64.5, 11.5) {Calc.};
		\node [style=none] (540) at (58, 11.5) {Calc.};
		\node [style=none] (541) at (38, 12.5) {};
		\node [style=none] (542) at (38, 10.5) {};
		\node [style=none] (543) at (38, 8.5) {};
		\node [style=none] (544) at (42, 0.5) {};
		\node [style=none] (545) at (43, 2.5) {};
		\node [style=none] (546) at (44, 4.5) {};
		\node [style=cnot ctrl] (547) at (42, 0.5) {};
		\node [style=cnot ctrl] (548) at (43, 2.5) {};
		\node [style=cnot ctrl] (549) at (44, 4.5) {};
		\node [style=cnot targ] (550) at (42, 8.5) {};
		\node [style=cnot targ] (551) at (43, 10.5) {};
		\node [style=cnot targ] (552) at (44, 12.5) {};
		\node [style=none] (553) at (42, 0.5) {};
		\node [style=none] (554) at (43, 2.5) {};
		\node [style=none] (555) at (44, 4.5) {};
		\node [style=none] (556) at (68, 4.5) {};
		\node [style=none] (557) at (68, 2.5) {};
		\node [style=none] (558) at (68, 0.5) {};
		\node [style=none] (586) at (40, 4.5) {};
		\node [style=none] (587) at (40, 2.5) {};
		\node [style=none] (588) at (40, 0.5) {};
	\end{pgfonlayer}
	\begin{pgfonlayer}{edgelayer}
		\draw (8.center)
			 to (9.center)
			 to (10.center)
			 to (7.center)
			 to cycle;
		\draw [style=double] (18.center) to (15.center);
		\draw [style=double] (19.center) to (16.center);
		\draw [style=double] (20.center) to (17.center);
		\draw [style=none] (24.center)
			 to (21.center)
			 to (22.center)
			 to (23.center)
			 to cycle;
		\draw [in=0, out=180] (54.center) to (52.center);
		\draw [in=180, out=0] (53.center) to (54.center);
		\draw (62.center) to (61.center);
		\draw (61.center) to (63.center);
		\draw (63.center) to (64.center);
		\draw (64.center) to (62.center);
		\draw (67.center) to (66.center);
		\draw (66.center) to (68.center);
		\draw (68.center) to (69.center);
		\draw (69.center) to (67.center);
		\draw (72.center) to (71.center);
		\draw (71.center) to (73.center);
		\draw (73.center) to (74.center);
		\draw (74.center) to (72.center);
		\draw (77.center) to (76.center);
		\draw (79.center) to (78.center);
		\draw (81.center) to (80.center);
		\draw (97.center) to (98.center);
		\draw (97.center) to (100.center);
		\draw (100.center) to (99.center);
		\draw (99.center) to (98.center);
		\draw [bend left=90, looseness=1.75] (101.center) to (102.center);
		\draw (104.center) to (103.center);
		\draw (105.center) to (106.center);
		\draw (105.center) to (108.center);
		\draw (108.center) to (107.center);
		\draw (107.center) to (106.center);
		\draw [bend left=90, looseness=1.75] (109.center) to (110.center);
		\draw (112.center) to (111.center);
		\draw (113.center) to (114.center);
		\draw (113.center) to (116.center);
		\draw (116.center) to (115.center);
		\draw (115.center) to (114.center);
		\draw [bend left=90, looseness=1.75] (117.center) to (118.center);
		\draw (120.center) to (119.center);
		\draw (154.center) to (153.center);
		\draw (153.center) to (156.center);
		\draw (156.center) to (155.center);
		\draw (155.center) to (154.center);
		\draw [style=double] (169.center) to (168.center);
		\draw [style=none] (184.center)
			 to (185.center)
			 to (182.center)
			 to (183.center)
			 to cycle;
		\draw (196.center) to (195.center);
		\draw (195.center) to (198.center);
		\draw (198.center) to (197.center);
		\draw (197.center) to (196.center);
		\draw [style=double] (223.center) to (222.center);
		\draw [style=none] (239.center)
			 to (236.center)
			 to (237.center)
			 to (238.center)
			 to cycle;
		\draw [style=double] (244.center) to (241.center);
		\draw [style=double] (245.center) to (242.center);
		\draw [style=double] (246.center) to (243.center);
		\draw (257.center) to (254.center);
		\draw (258.center) to (255.center);
		\draw (259.center) to (256.center);
		\draw [style=double] (82.center) to (4.center);
		\draw [style=double] (83.center) to (5.center);
		\draw [style=double] (84.center) to (6.center);
		\draw (60.center) to (274.center);
		\draw (59.center) to (273.center);
		\draw (58.center) to (272.center);
		\draw [in=-180, out=0] (281.center) to (279.center);
		\draw [in=0, out=-180] (280.center) to (281.center);
		\draw (292) to (295);
		\draw (293) to (296);
		\draw (294) to (297);
		\draw (401.center) to (150.center);
		\draw (402.center) to (151.center);
		\draw (403.center) to (152.center);
		\draw (28.center) to (170.center);
		\draw (29.center) to (171.center);
		\draw (30.center) to (172.center);
		\draw (263.center) to (192.center);
		\draw (264.center) to (193.center);
		\draw (265.center) to (194.center);
		\draw (408.center) to (409.center);
		\draw (409.center) to (410.center);
		\draw (410.center) to (407.center);
		\draw (407.center) to (408.center);
		\draw [style=double] (415.center) to (412.center);
		\draw [style=double] (416.center) to (413.center);
		\draw [style=double] (417.center) to (414.center);
		\draw [style=none] (421.center) to (418.center);
		\draw [style=none] (418.center) to (419.center);
		\draw [style=none] (419.center) to (420.center);
		\draw [style=none] (420.center) to (421.center);
		\draw (459.center) to (460.center);
		\draw (459.center) to (462.center);
		\draw (462.center) to (461.center);
		\draw (461.center) to (460.center);
		\draw [bend left=90, looseness=1.75] (463.center) to (464.center);
		\draw (466.center) to (465.center);
		\draw (467.center) to (468.center);
		\draw (467.center) to (470.center);
		\draw (470.center) to (469.center);
		\draw (469.center) to (468.center);
		\draw [bend left=90, looseness=1.75] (471.center) to (472.center);
		\draw (474.center) to (473.center);
		\draw (475.center) to (476.center);
		\draw (475.center) to (478.center);
		\draw (478.center) to (477.center);
		\draw (477.center) to (476.center);
		\draw [bend left=90, looseness=1.75] (479.center) to (480.center);
		\draw (482.center) to (481.center);
		\draw (487.center) to (486.center);
		\draw (486.center) to (489.center);
		\draw (489.center) to (488.center);
		\draw (488.center) to (487.center);
		\draw [style=double] (492.center) to (491.center);
		\draw [style=none] (498.center) to (499.center);
		\draw [style=none] (499.center) to (496.center);
		\draw [style=none] (496.center) to (497.center);
		\draw [style=none] (497.center) to (498.center);
		\draw (505.center) to (504.center);
		\draw (504.center) to (507.center);
		\draw (507.center) to (506.center);
		\draw (506.center) to (505.center);
		\draw [style=double] (513.center) to (512.center);
		\draw [style=none] (520.center) to (517.center);
		\draw [style=none] (517.center) to (518.center);
		\draw [style=none] (518.center) to (519.center);
		\draw [style=none] (519.center) to (520.center);
		\draw [style=double] (525.center) to (522.center);
		\draw [style=double] (526.center) to (523.center);
		\draw [style=double] (527.center) to (524.center);
		\draw (532.center) to (529.center);
		\draw (533.center) to (530.center);
		\draw (534.center) to (531.center);
		\draw [style=double] (456.center) to (404.center);
		\draw [style=double] (457.center) to (405.center);
		\draw [style=double] (458.center) to (406.center);
		\draw (434.center) to (543.center);
		\draw (433.center) to (542.center);
		\draw (432.center) to (541.center);
		\draw (547) to (550);
		\draw (548) to (551);
		\draw (549) to (552);
		\draw (586.center) to (483.center);
		\draw (587.center) to (484.center);
		\draw (588.center) to (485.center);
		\draw (422.center) to (493.center);
		\draw (423.center) to (494.center);
		\draw (424.center) to (495.center);
		\draw (535.center) to (501.center);
		\draw (536.center) to (502.center);
		\draw (537.center) to (503.center);
		\draw (428.center) to (556.center);
		\draw (429.center) to (557.center);
		\draw (430.center) to (558.center);
		\draw [style=double edge] (206.center) to (224.center);
		\draw [style=double edge] (207.center) to (225.center);
		\draw [style=double edge] (208.center) to (226.center);
		\draw [style=double edge] (511.center) to (516.center);
		\draw [style=double edge] (510.center) to (515.center);
		\draw [style=double edge] (509.center) to (514.center);
	\end{pgfonlayer}
\end{tikzpicture}